\documentclass[final]{siamltex}
\usepackage{amsfonts}
\usepackage{amssymb}
\usepackage{psfrag}
\usepackage{amsmath}
\usepackage{graphicx}
\usepackage{multirow}
\usepackage{cite}
\usepackage[normalem]{ulem}
\usepackage{color}

\usepackage{booktabs}
\usepackage{siunitx}
\sisetup{scientific-notation=true}

\usepackage{subcaption}

\allowdisplaybreaks[4]


\setlength\abovecaptionskip{5pt}
\setlength\belowcaptionskip{-5pt}

\def\({\left(}
\def\[{\left[}
\def\){\right)}
\def\]{\right]}

\newtheorem{ques}{Problem}
\newtheorem{remark}{Remark}

\newcommand{\triplenorm}[1]{\left\vert\kern-0.25ex\left\vert\kern-0.25ex\left\vert #1 \right\vert\kern-0.25ex\right\vert\kern-0.25ex\right\vert}

\numberwithin{equation}{section}
\numberwithin{theorem}{section}


\begin{document}
\title{Thermodynamically consistent modelling and simulation of two-fluid magnetohydrodynamic equations\thanks{This work is  supported by the National key R \& D Program of China (No.2022YFE03040002) and the National Natural Science Foundation of China (No.12371434).}}

\author{Ting Xiao\thanks{School of Mathematics, Sichuan University, Chengdu, P.R. China. Email: {\tt xiaoting0122@163.com}.} \and Qiaolin He\thanks{Corresponding author. School of Mathematics, Sichuan University, Chengdu, P.R. China.  Email: {\tt qlhejenny@scu.edu.cn}.}}

\maketitle

\begin{abstract}
Based on a rigorous thermodynamic framework, this work develops a two-fluid magnetohydrodynamic model grounded in the Helmholtz free energy formalism. The model maintains full thermodynamic consistency by simultaneously satisfying energy conservation and entropy production laws in two-fluid systems. By analyzing the convex-concave structure of the Helmholtz free energy density, we systematically derive key thermodynamic variables—chemical potential, entropy density, and internal energy—in a self-consistent manner. Building on this foundation, we construct a temporally discrete numerical scheme that inherits the thermodynamic consistency of the continuous model. The scheme is proven to adhere rigorously to both the first and second laws of thermodynamics. For the implemented two-dimensional degenerate system, we establish comprehensive a priori error estimates in space and time. Numerical simulations validate the model's effectiveness in capturing essential plasma phenomena, demonstrating its applicability to complex physical scenarios.
\end{abstract}
\begin{keywords}
two-fluid; plasma; magnetohydrodynamics; thermodynamical consistency; mixed finite element methods; error estimate.
\end{keywords}

\textbf{\small MSC codes.} 	65M60; 76W05; 65M12; 49S05

\section{Introduction}
Two-fluid magnetohydrodynamic (MHD) models serve as fundamental tools in plasma physics research \cite{Chen1984, fitzpatrick2022plasma}, presenting distinct advantages over traditional single-fluid formulations \cite{Burby2017, Park1999}. By resolving ion and electron dynamics separately, these models capture high-frequency and small-scale electromagnetic–fluid interactions beyond the reach of single-fluid approximations, thereby enhancing physical fidelity in high-temperature, high-density environments such as tokamaks \cite{browning2015two, Zakharov1992}. Consequently, two-fluid descriptions have emerged as a pivotal theoretical framework for investigating complex plasma phenomena, including magnetic reconnection and magnetically confined fusion \cite{ZLi2019, Brackbill1985}, despite posing significantly greater challenges in modeling and computation.

Despite extensive progress in two-fluid MHD simulations, most existing studies  have prioritized computational accuracy or energy conservation, often neglecting entropy production—a fundamental requirement of the second law of thermodynamics \cite{sommerfeld2012lectures}.
While several models and numerical implementations achieve partial energy consistency, they generally fail to guarantee entropy increase or satisfy other thermodynamic constraints \cite{Jardin2007, ZLi2019, Keiji2014, Loverich2011}. 
This limitation is partly due to the widespread use of the ideal gas assumption in current two-fluid formulations. 
Although convenient for closure and frequently applicable to high-temperature plasmas in devices such as tokamaks \cite{White2017, Angioni2009}, this assumption inherently limits the representation of non-ideal thermodynamic behavior and obstructs a self-consistent derivation of thermodynamic quantities.
In contrast, phase-field models for non-isothermal fluids demonstrate that embedding both the first and second laws of thermodynamics into the model formulation and discretization is essential for physical reliability \cite{KouJisheng2018, Kou2017, JLiu2015}. Inspired by these developments, we are motivated to establish a thermodynamically consistent two-fluid MHD framework that unifies energy conservation with entropy production.

On the numerical side, mixed finite element methods—such as Raviart–Thomas and Nédélec elements—have proven to be effective in preserving key structural properties of electromagnetic systems, particularly the divergence-free condition of the magnetic field ($\nabla\cdot\mathbf{B}=0$) \cite{Arnold2006,boffi2013mixed}. Such structure-preserving formulations have been extensively studied in the context of single-fluid MHD, resulting in stable and physically consistent numerical schemes along with rigorous error estimates \cite{Hu2017,Monk2003,MaoShipeng2025,Guermond2003}. 
However, analogous developments for thermodynamically consistent two-fluid MHD remain scarce. Furthermore, the high-dimensional coupling and the inf–sup conditions associated with three-dimensional mixed finite element spaces introduce additional computational challenges. 
This motivates our focus on a degenerate two-dimensional formulation for theoretical analysis and numerical validation.

The primary objective of this study is to reconstruct the thermodynamic framework for two-fluid magnetohydrodynamics (MHD) while retaining the key simplifications introduced by Jardin et al. \cite{Jardin2007}, including quasi-neutrality, negligible electron inertia, and infinite speed of light.
Within this framework, the Helmholtz free energy is employed as a unified thermodynamic potential, from which all thermodynamic variables are derived in a self-consistent manner. This approach ensures strict thermodynamic consistency at the continuous level, satisfying both the first and second laws of thermodynamics, while also enabling the incorporation of certain non-ideal effects omitted in the original model. Without sacrificing its capacity to capture essential plasma physics, the proposed formulation maintains computational tractability and broad applicability.

Another objective of this work is to develop a numerical scheme that preserves thermodynamic consistency at the discrete temporal level. 
We introduce a free-energy form that balances simplicity and mathematical rigor: its structure is convex with respect to the plasma number density and concave with respect to temperature, enabling the self-consistent derivation of thermodynamic quantities and ensuring internal coherence of the model. Building on this formulation, we design a semi-implicit time discretization and rigorously prove its properties of discrete energy conservation and entropy non-decrease. Given the computational complexity of the full three-dimensional model, we further construct a fully discrete finite element scheme for a degenerate two-dimensional formulation and establish corresponding spatiotemporal error estimates. The error analysis follows the framework of inf–sup stability, where auxiliary variables are eliminated following the approach of Mao et al. \cite{Mao2025error}.

The remainder of this paper is organized as follows.
Section~\ref{sec:model} introduces the Helmholtz free energy as the thermodynamic potential, derives the corresponding thermodynamic variables, and formulates the two‐fluid MHD model under several key physical assumptions.
Section~\ref{sec:newmodel} reformulates the governing equations using the relations among pressure, temperature, and chemical potential, and rigorously demonstrates that the resulting model satisfies both the first and second laws of thermodynamics.
Section~\ref{sec:numerial_model} presents a semi-implicit temporal discretization scheme based on an auxiliary velocity variable and the convex–concave structure of the Helmholtz free energy density, and establishes its discrete thermodynamic consistency. Section~\ref{sec:2Dmodel} focuses on a two-dimensional degenerate formulation, develops its mixed finite element discretization, and provides a priori error estimates with respect to both space and time.
Section~\ref{sec:experiments} presents numerical experiments to validate the proposed methodology, and Section~\ref{sec:conclusions} concludes the paper with a summary of findings and future research directions.


\section{Mathematical model}
\label{sec:model}
In this section, we formulate the thermodynamic variables and present the governing equations for a two-fluid magnetohydrodynamic (2F-MHD) model under a set of key simplifying assumptions.
\subsection{Formulations of Thermodynamical Variables}

In this model, the Hel-mholtz free energy functionals for ions (i) and electrons (e) are defined as:
\begin{equation*}
	\begin{aligned}
		\mathcal{F}_\alpha= \int_{\Omega} f_{\alpha} d x,\quad \alpha=i,e,
	\end{aligned}
\end{equation*}
where $f_\alpha$ denotes the free energy density per unit volume, defined as
\begin{equation*}
	\begin{aligned}
		f_{\alpha}=k_BT_{\alpha}n\ln n-\frac{k_B}{r-1}T^2_{\alpha}n.
	\end{aligned}
\end{equation*}
Under the quasi-neutrality condition $n = n_i = n_e$, ion and electron densities are identical.
The first term of $f_\alpha$ corresponds to the standard ideal-gas contribution, whereas the second term introduces a nonlinear temperature dependence that offers enhanced flexibility in modeling high-temperature, high-density plasmas. 
This formulation can be regarded as a systematic extension of the ideal-gas free energy, motivated by simplified representations of non-ideal equations of state such as the Peng–Robinson model~\cite{KouJisheng2018}. 
Additional corrections—such as volume-exclusion effects~\cite{Onuki2007}—may be incorporated when required, with thermodynamic consistency verifiable within the same variational framework.

It should be emphasized that this formulation is not intended to replace the standard ideal-gas description widely used in core plasma modeling. Rather, it provides a flexible thermodynamic framework that preserves fundamental thermodynamic relations while allowing for controlled extensions. Notably, retaining the coefficient $\frac{k_B}{r-1}$ maintains a quasi-linear relationship between pressure and internal energy, which facilitates physical interpretation and numerical implementation. Here, $r = \frac{5}{3}$ denotes the specific heat ratio, and $k_B$ is the Boltzmann constant.

From this free-energy density, the standard thermodynamic relations yield the chemical potential $\mu_\alpha$ and entropy density $s_\alpha$ as follows:
\begin{align} \label{eq:thermo_smu}
	& \mu_{\alpha} = \left(\frac{\delta \mathcal{F}_\alpha(n,T_\alpha) }{\delta n}\right)_{T_{\alpha}}=\frac{\partial f_\alpha(n,T_\alpha)}{\partial n}=k_BT_\alpha(\ln n+1)-\frac{k_B}{r-1}T^2_{\alpha}, \quad \\
	& s_{\alpha} = -\left(\frac{\delta \mathcal{F}_\alpha(n,T_{\alpha}) }{\delta T_{\alpha}}\right)_{n}=-\frac{\partial f_\alpha(n,T_\alpha)}{\partial T_\alpha}=-k_Bn\ln n+\frac{2k_B}{r-1}T_\alpha n.
\end{align}

\subsection{The original model}
We now present the governing equations, which are formulated based on the models of \cite{Jardin2007, ZLi2019, Braginskii1965}. Under the assumption of charge neutrality, the plasma number density satisfies $n = n_i = n_e$. Denoting the ion mass-flow velocity by $\mathbf{u}$ and the electron velocity by $\mathbf{u}_e$, the mass conservation equation is given by:
\begin{equation} \label{eq:mass}
	\begin{aligned}
		\frac{\partial n}{\partial t} + \nabla \cdot (n \mathbf{u}) = 0.
	\end{aligned}
\end{equation}

From the classical Braginskii two-fluid plasma model~\cite{ZLi2019}, 
the momentum conservation and internal energy transport equations for ions and electrons are: 
\begin{align}
	& n M_{i}\left(\frac{\partial \mathbf{u}}{\partial t} + \mathbf{u} \cdot \nabla{\mathbf{u}} \right) =-\nabla p_i+ne(\mathbf{E}+\mathbf{u}\times \mathbf{B}) -\nabla \cdot \pmb{\sigma}_i + M_{i} n \mathbf{g}-\mathbf{R} ,\label{eq:initial_mom_ui}\\
	& n M_{e}\left(\frac{\partial \mathbf{u}_e}{\partial t} + \mathbf{u}_e \cdot \nabla{\mathbf{u}_e} \right) =-\nabla p_e-ne(\mathbf{E}+\mathbf{u}_e\times \mathbf{B}) -\nabla \cdot \pmb{\sigma}_e +\mathbf{R} ,\label{eq:initial_mom_ue}\\
	&\frac{\partial \epsilon_i}{\partial t} + \nabla \cdot \left(\epsilon_i\mathbf{u}  \right) =-p_i\nabla\cdot \mathbf{u} - \pmb{\sigma}_i: \nabla \mathbf{u}-\nabla \cdot \mathbf{q}_i,\label{eq:initial_epsiloni}\\
	& \frac{\partial \epsilon_e}{\partial t} + \nabla \cdot \left(\epsilon_e\mathbf{u}_e  \right) =-p_e\nabla\cdot\mathbf{u}_e+ \frac{\mathbf{J}}{n e} \cdot \mathbf{R}+\nabla\left(\frac{\mathbf{J}}{ne}\right):\pmb{\sigma}_e -\nabla \cdot \mathbf{q}_e \label{eq:initial_epsilon_e}.
\end{align}
Here, $M_{i}$ and $M_e$ represent the molar masses of ions and electrons, respectively; $\mathbf{E}$ and $\mathbf{B}$ denote the electric and magnetic fields; 
and $\mathbf{J}$ is the current density. A gravitational force $\mathbf{g}$ is included. The electron–ion momentum transfer term is is modeled in a simplified form, proportional to the plasma current, as $\mathbf{R}=\xi_Rne\mathbf{J}$. Due to $M_e\ll M_i$,  the electron inertia term is neglected, which yields
\begin{equation}\label{eq:Olm_law0}
	\begin{aligned}
		0=-\nabla p_e-ne(\mathbf{E}+\mathbf{u}_e\times \mathbf{B})-\nabla
		\cdot \pmb{\sigma}_e +\mathbf{R}.
	\end{aligned}
\end{equation}
Based on the expression for the current density, $\mathbf{J} = n e (\mathbf{u} - \mathbf{u}_e)$, the electron velocity $\mathbf{u}_e = \mathbf{u} - \frac{\mathbf{J}}{n e}$ is substituted into equations \eqref{eq:Olm_law0} and \eqref{eq:initial_epsilon_e} to obtain
\begin{align}
	&\mathbf{E}+\mathbf{u}\times \mathbf{B}=\frac{1}{ne}(\mathbf{R}+\mathbf{J}\times\mathbf{B}-\nabla p_e-\nabla
	\cdot \pmb{\sigma}_e),\label{eq:initial_Olm_law}\\
	&\frac{\partial \epsilon_e}{\partial t} + \nabla \cdot \left(\epsilon_e\mathbf{u}  \right) =-p_e\nabla\cdot\mathbf{u}+ \nabla \cdot \left(\epsilon_e\frac{\mathbf{J}}{ne}  \right) +p_e\nabla\cdot\left(\frac{\mathbf{J}}{ne}\right)+\frac{\mathbf{J}}{n e} \cdot \mathbf{R} \label{eq:initial_epsilone}\\
	&\quad+\nabla\left(\frac{\mathbf{J}}{ne}\right):\pmb{\sigma}_e -\nabla \cdot \mathbf{q}_e\nonumber,
\end{align}
thereby yielding equation \eqref{eq:initial_Olm_law} as the generalized Ohm's law.  
Combining \eqref{eq:initial_Olm_law} with the ion momentum equation and adopting the approach of [\cite{Jardin2007}], wherein the electron stress tensor is neglected, yields:

\begin{equation}\label{eq:initial_mom_u}
	\begin{aligned}
		n M_{i}\left(\frac{\partial \mathbf{u}}{\partial t} + \mathbf{u} \cdot \nabla{\mathbf{u}} \right) =\mathbf{J}\times \mathbf{B}-\nabla \cdot \pmb{\sigma} + M_{i} n \mathbf{g},
	\end{aligned}
\end{equation}
where $\pmb{\sigma} = p \mathbf{I}+\pmb{\sigma}_{i},$ and $ p = p_i+p_e.$
The ion viscous stress term is taken to have the form $\pmb{\sigma}_{i} = -\eta D(\mathbf{u})-(\lambda \nabla \cdot \mathbf{u}) \mathbf{I}$
with $D(\mathbf{u}) = \nabla \mathbf{u} + \nabla \mathbf{u}^{T}$ and $\lambda = \xi - \frac{2}{3}\eta$.
Here $\eta$ and $\xi$ are the incompressible and compressible viscosity coefficients that satisfy the positivity constraints $\eta>0$ and $\xi>\frac{2}{3}\eta$. The electron stress tensor is modeled as a hyper-resistivity, with a coefficient $\xi_e$, and is given by: $ \pmb{\sigma}_{e} = \xi_e\xi_R\nabla(\frac{\mathbf{J}}{ne}).$
The pressure $p_{\alpha}$ and the internal energy $\epsilon_{\alpha}$ are given by thermodynamic relations
\begin{equation}\label{eq:thermo-rela-pe}
	\begin{aligned}
		p_{\alpha}=n\mu_{\alpha}-f_{\alpha},\quad
		\epsilon_{\alpha}=f_{\alpha}+s_{\alpha}T_{\alpha},\quad \alpha=i,e.
	\end{aligned}
\end{equation}

Combining with Faraday's law and Amp\`ere's law, we have
\begin{align}
	&\frac{\partial \mathbf{B} }{\partial t } = -\nabla \times \mathbf{E}, \label{eq:Maxsys1}\\
	&\mu_{0} \mathbf{J} = \nabla \times \mathbf{B}, \label{eq:Maxsys2}
\end{align}
where, \eqref{eq:Maxsys2} adopts the assumption of an infinite speed of light, i.e., $
\nabla \times \mathbf{B}-\mu_0 \mathbf{J}=\frac{1}{c^2} \frac{\partial \mathbf{E}}{\partial t} \approx 0 $, which also implies $\nabla \cdot \mathbf{J}=0$. 
The complete two-fluid MHD model is thus given by Eqs.  \eqref{eq:mass}, \eqref{eq:initial_mom_u},\eqref{eq:initial_epsiloni},\eqref{eq:initial_epsilone},
\eqref{eq:initial_Olm_law},\eqref{eq:Maxsys1}, and \eqref{eq:Maxsys2}. The solenoidal constraint for the magnetic field, $\nabla \cdot \mathbf{B} = 0$, which follows from Eq. \eqref{eq:Maxsys1}, is ensured by the initial condition $\nabla \cdot \mathbf{B}_0 = 0$.

The total internal energy density is defined as $\epsilon=\epsilon_e+\epsilon_{i}$. The total energy density (including internal, kinetic, and electromagnetic energy) is:
\begin{equation*}
	\begin{aligned}
		E_{T} = \epsilon + \frac{1}{2}\rho |\mathbf{u}|^{2}+\frac{1}{2\mu_{0}}|\mathbf{B}|^2,
	\end{aligned}
\end{equation*}
where $\rho = n M_i$. We now verify that the model satisfies the energy conservation law. From the momentum balance equation \eqref{eq:initial_mom_u}, the transport of kinetic energy density can be written as
\begin{align}\label{eq:energyH}
	\frac{1}{2} \frac{\partial\left(\rho|\mathbf{u}|^2\right)}{\partial t}+\frac{1}{2} \nabla \cdot\left(\mathbf{u}\left(\rho|\mathbf{u}|^2\right)\right)
	& =\rho \mathbf{u} \cdot\left(\frac{\partial \mathbf{u}}{\partial t}+\mathbf{u} \cdot \nabla \mathbf{u}\right)+\frac{1}{2} \mathbf{u} \cdot \mathbf{u}\left(\frac{\partial \rho}{\partial t}+\nabla \cdot(\rho \mathbf{u})\right) \nonumber\\
	& =-\nabla \cdot( \pmb{\sigma} \cdot \mathbf{u})+  \pmb{\sigma}: \nabla \mathbf{u}+\mathbf{u}\cdot(\mathbf{J}\times\mathbf{B}+M_in\mathbf{g}).
\end{align}
Using \eqref{eq:Maxsys1} ,\eqref{eq:Maxsys2}  and \eqref{eq:initial_Olm_law}, it is further obtained
\begin{align}\label{eq:energyB} 
	\frac{1}{\mu_{0}}\mathbf{B}\cdot\frac{\partial\mathbf{B}}{\partial t}
	=&\frac{1}{\mu_{0}}\big((\nabla\times\mathbf{B})\mathbf{E}-\nabla\cdot(\mathbf{E}\times\mathbf{B})\big)\nonumber\\
	=&-\mathbf{u}\cdot(\mathbf{J}\times\mathbf{B})-\frac{\mathbf{J}}{ne}\cdot\mathbf{R}-p_e\nabla\cdot\left(\frac{\mathbf{J}}{ne}\right)-\nabla\left(\frac{\mathbf{J}}{ne}\right):\pmb{\sigma}_e\nonumber\\
	&+\nabla\cdot\left(\mathbf{J}\cdot\frac{p_{e}}{ne}+\pmb{\sigma_{e}}\cdot\frac{\mathbf{J}}{ne}-\frac{1}{\mu_{0}}(\mathbf{E}\times\mathbf{B})\right),
\end{align}
where we used the vector identity $\nabla\cdot(\mathbf{E}\times\mathbf{B})=\mathbf{B}\cdot(\nabla\times\mathbf{E})-\mathbf{E}\cdot(\nabla\times\mathbf{B})$ and $\frac{\mathbf{J}\cdot\nabla\cdot\pmb{\sigma}_e}{ne}+\nabla(\frac{\mathbf{J}}{ne}):\pmb{\sigma}_e=\nabla\cdot(\pmb{\sigma}_e\cdot\frac{\mathbf{J}}{ne}).$ Adding \eqref{eq:initial_epsiloni} and \eqref{eq:initial_epsilone} yields the balance equation for the total internal energy density ($\epsilon=\epsilon_i+\epsilon_{e}$) :
\begin{align}\label{eq:energyU}
	\frac{\partial \epsilon}{\partial t} + \nabla \cdot \left(\epsilon\mathbf{u}  \right) =& \nabla\cdot\left(\epsilon_{e}\frac{\mathbf{J}}{ne}\right)+p_e\nabla\cdot\left(\frac{\mathbf{J}}{ne}\right)+\frac{\mathbf{J}}{n e} \cdot \mathbf{R}\\
	&+ \nabla\left(\frac{\mathbf{J}}{n e}\right) : \pmb{\sigma}_{e} -\nabla \cdot \mathbf{q} - \pmb{\sigma}: \nabla \mathbf{u}\nonumber,
\end{align}
where the heat flux $\mathbf{q}$ is defined by 
$\mathbf{q}=\mathbf{q}_i+\mathbf{q}_e,$ $ \mathbf{q_{\alpha}}=-\Theta_\alpha\nabla T_{\alpha}, \alpha=i,e,$ and $\Theta_\alpha$ denotes the heat diffusion coeﬃcient, generally depending on molar density and temperature. Summing \eqref{eq:energyH}, \eqref{eq:energyU}, and \eqref{eq:energyB}, we obtain the total energy balance equation:
\begin{align}\label{eq:energy}
	\frac{\partial E_{T}}{\partial t} + \nabla \cdot \left(E_{T}\mathbf{u}  \right) =& \nabla\cdot\left(\mathbf{J}\frac{\epsilon_{e}}{ne}+\mathbf{J}\frac{p_{e}}{ne}+\pmb{\sigma}_e\cdot\frac{\mathbf{J}}{ne}-\frac{1}{\mu_{0}}(\mathbf{E}\times\mathbf{B})+\frac{1}{2\mu_0}\mathbf{u}|\mathbf{B}|^2+ \pmb{\sigma}\cdot \mathbf{u}\right)\nonumber \\
	&+\mathbf{u}M_in\mathbf{g}-\nabla \cdot \mathbf{q}.
\end{align}
This implies energy conservation in the absence of gravitational potential, when all flux terms on the right-hand side vanish on the boundary.

\section{New formulations and thermodynamical consistency}
\label{sec:newmodel}

In this section, we propose a new 2F-MHD model. By leveraging a fundamental thermodynamic relation between the gradients of pressure, temperature, and chemical potential, we reformulate the momentum, internal energy, and generalized Ohm's law equations. This reformulation provides a straightforward means of verifying the model's adherence to the first and second laws of thermodynamics.

\subsection{New formulations}
Following the approach in \cite{ShenJ2015,KouJisheng2018} for the phase-field model, we start from the thermodynamic identity $p_{\alpha} = n\mu_{\alpha} - f_{\alpha},\alpha=i,e$ and compute the pressure gradient:
\begin{equation*}
	\begin{aligned}
		\nabla p_{\alpha}=\nabla(n\mu_{\alpha}-f_{\alpha})=n\nabla \mu_{\alpha}+\mu_{\alpha}\nabla n-\mu_{\alpha}\nabla n+s_{\alpha}\nabla T_{\alpha}=n\nabla \mu_{\alpha}+s_{\alpha}\nabla T_{\alpha}.
	\end{aligned}
\end{equation*}
Summing over ions and electrons, the total pressure gradient becomes:
$\nabla p=\nabla p_i+\nabla p_e= n\nabla\mu +s_{i}\nabla T_{i}+s_e\nabla T_e, $ where $\mu=\mu_i+\mu_e.$
Using this reformulation of the pressure gradient, the momentum balance equation can be written as
\begin{align}\label{eq:momen_eq}
	n M_{i}\left(\frac{\partial \mathbf{u}}{\partial t} + \mathbf{u} \cdot \nabla{\mathbf{u}} \right) =&- n \nabla \mu -s_i\nabla T_i-s_e\nabla T_e+ \mathbf{J}\times \mathbf{B} + M_{i} n \mathbf{g}\\
	& + \nabla \cdot (\eta D(\mathbf{u})) + \nabla (\lambda \nabla \cdot \mathbf{u}),\nonumber
\end{align}
The ion internal energy equation is rewritten as:
\begin{equation}\label{eq:epsilon_ion1}
	\begin{aligned}
		\frac{\partial \epsilon_i}{\partial t} + \nabla \cdot \left(\epsilon_i\mathbf{u}  \right) 
		= -\nabla \cdot \mathbf{q}_i -\nabla \cdot \left(\mathbf{u}p_i \right)  +\mathbf{u} \cdot(n\nabla \mu_i+s_i\nabla T_i)- \pmb{\sigma}_{i} :\nabla \mathbf{u}.
	\end{aligned}
\end{equation}
Using $\epsilon_i = f_i + T_is_i$ and $p_i=n\mu_i-f_i$, the balance equation for the ion internal energy density becomes:
\begin{equation}\label{eq:epsilon_i}
	\begin{aligned}
		\frac{\partial \epsilon_i}{\partial t} + \nabla \cdot \big(\mathbf{u} (n\mu_i+s_iT_i) \big) =  -\nabla \cdot \mathbf{q}_i +\mathbf{u}\cdot(n\nabla \mu_i+s_i\nabla T_i)- \pmb{\sigma}_{i} :\nabla \mathbf{u}.
	\end{aligned}
\end{equation}
Similarly, the electron internal energy equation reads
\begin{align}\label{eq:epsilon_e}
	\frac{\partial \epsilon_e}{\partial t} + \nabla \cdot \left(\big(\mathbf{u}-\frac{\mathbf{J}}{ne}\big) (n\mu_e+s_eT_e) \right)  =&  -\nabla \cdot \mathbf{q}_e+\left(\mathbf{u}-\frac{\mathbf{J}}{ne}\right)\cdot(n\nabla \mu_e+s_e\nabla T_e)\nonumber\\
	& + \frac{\mathbf{J}}{n e} \cdot \mathbf{R}
	+ \nabla\left(\frac{\mathbf{J}}{n e}\right) : \pmb{\sigma}_{e}.
\end{align}
Summing the ion and electron internal energy equations yields the total internal energy equation:
\begin{align}\label{eq:epsilon_total}
	\frac{\partial \epsilon}{\partial t} + \nabla \cdot \big(\mathbf{u} (n\mu+s_iT_i+s_eT_e) \big)  =&  -\nabla \cdot \mathbf{q} +\mathbf{u}\cdot(n\nabla \mu+s_i\nabla T_i+s_e\nabla T_e)\\
	& +\nabla\cdot\left(\frac{\mathbf{J}}{ne}(n\mu_e+s_eT_e)\right)-\frac{\mathbf{J}}{ne}(n\nabla \mu_e+s_e\nabla T_e)\nonumber\\
	&+ \frac{\mathbf{J}}{n e} \cdot  \mathbf{R}+ \nabla\left(\frac{\mathbf{J}}{n e}\right) : \pmb{\sigma}_{e}	- \pmb{\sigma}_{i} :\nabla \mathbf{u}\nonumber.
\end{align}
Thus, the generalized Ohm's law becomes:
\begin{eqnarray}\label{eq:Olm_law}
	&&\mathbf{E}+\mathbf{u}\times \mathbf{B}=\frac{1}{ne}\left(\mathbf{R}+\mathbf{J}\times\mathbf{B}- (n\nabla \mu_e+s_e\nabla T_e)-\nabla
	\cdot \pmb{\sigma}_e\right).
\end{eqnarray}
We consider the fluid within a closed, fixed-volume domain $\Omega$, with the boundary conditions on $\partial \Omega$ for $t > 0$ prescribed as:
\begin{align}\label{eq:boundary}
	&\mathbf{u}=0, \quad \nabla n \cdot \boldsymbol{\nu}_{\partial \Omega}=0, \quad \nabla T_{\alpha}\cdot \boldsymbol{\nu}_{\partial \Omega}=0,\quad \alpha=i,e,\\ 
	&\boldsymbol{J}\cdot\boldsymbol{\nu}_{\partial \Omega}=0,\quad\boldsymbol{B}\cdot\boldsymbol{\nu}_{\partial \Omega}=0,\quad \boldsymbol{E}\times\boldsymbol{\nu}_{\partial\Omega}=0,\nonumber
\end{align}	
where $\boldsymbol{\nu}_{\partial \Omega}$ is the outward unit normal to $\partial \Omega$, with initial conditions prescribed for $n$, $T_{\alpha}$, $\mathbf{u}$, and $\mathbf{B}$.

In summary, the new 2F-MHD model comprises: the mass balance \eqref{eq:mass}, momentum balance \eqref{eq:momen_eq}, internal energy balances for ions and electrons \eqref{eq:epsilon_i}, \eqref{eq:epsilon_e}, Maxwell's equations \eqref{eq:Olm_law}, \eqref{eq:Maxsys1}, \eqref{eq:Maxsys2}, and the associated initial and boundary conditions. The following subsection will demonstrate that this formulation inherently satisfies the fundamental laws of thermodynamics.

\subsection{Thermodynamical consistency}
We now verify that the proposed model satisfies the first and second laws of thermodynamics.
\begin{theorem}
	The new 2F-MHD system satisfies the first law of thermodynamics, namely the conservation of total energy:
	\begin{align}\label{eq:epsilonlaw}
		\frac{\partial \mathcal{E}}{\partial t} = \int_{\Omega} \mathbf{u}\cdot (M_{i} n \mathbf{g}) d x, 
	\end{align}
	where the total energy functional is defined by $\mathcal{E} = \mathcal{H} + \mathcal{U} + \mathcal{B} $, $ \mathcal{H} = \frac{1}{2} \int_{\Omega}\rho|\mathbf{u}|^{2} dx$, $\mathcal{U} = \int_{\Omega} \epsilon d x $  and $\mathcal{B} =  \frac{1}{2 \mu_{0}} \int_{\Omega} |\mathbf{B}|^{2} d x $ with $\rho = nM_i$.
	
\end{theorem}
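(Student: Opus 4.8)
The plan is to reproduce, at the level of integrated functionals, the pointwise energy bookkeeping already carried out for the original model in Section~\ref{sec:model}, but now starting from the \emph{reformulated} equations \eqref{eq:momen_eq}, \eqref{eq:epsilon_total}, and \eqref{eq:Olm_law}, in which the pressure gradient has been replaced by $n\nabla\mu + s_i\nabla T_i + s_e\nabla T_e$. I would compute $\tfrac{d}{dt}$ of each of the three pieces $\mathcal{H}$, $\mathcal{U}$, $\mathcal{B}$ separately, integrate over $\Omega$, discard all divergence and heat-flux terms using the boundary data \eqref{eq:boundary} (namely $\mathbf{u}=0$, $\nabla T_\alpha\cdot\boldsymbol{\nu}_{\partial\Omega}=0$, $\mathbf{J}\cdot\boldsymbol{\nu}_{\partial\Omega}=0$, and $\mathbf{E}\times\boldsymbol{\nu}_{\partial\Omega}=0$), and then exhibit the pairwise cancellation of every remaining interior coupling term, leaving only the gravitational work $\int_\Omega \mathbf{u}\cdot(M_i n\mathbf{g})\,dx$.

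For the kinetic part I would take the $L^2$ inner product of the momentum balance \eqref{eq:momen_eq} with $\mathbf{u}$ and combine it with the mass balance \eqref{eq:mass} exactly as in \eqref{eq:energyH}; after integrating the viscous terms by parts (the boundary term dies since $\mathbf{u}=0$), this produces the thermodynamic-force term $-\mathbf{u}\cdot(n\nabla\mu + s_i\nabla T_i + s_e\nabla T_e)$, the Lorentz term $+\mathbf{u}\cdot(\mathbf{J}\times\mathbf{B})$, the forcing $+\mathbf{u}\cdot M_i n\mathbf{g}$, and the viscous contribution, which by $\pmb{\sigma}_i=-\eta D(\mathbf{u})-(\lambda\nabla\cdot\mathbf{u})\mathbf{I}$ equals $+\int_\Omega\pmb{\sigma}_i:\nabla\mathbf{u}\,dx$. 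For the internal part I would integrate the total internal-energy equation \eqref{eq:epsilon_total} over $\Omega$; all three divergence fluxes (convective, electron-drift, and $\mathbf{q}$) vanish on $\partial\Omega$, so that $\tfrac{d\mathcal{U}}{dt}$ reduces precisely to the production terms $\mathbf{u}\cdot(n\nabla\mu + s_i\nabla T_i + s_e\nabla T_e)$, $-\tfrac{\mathbf{J}}{ne}(n\nabla\mu_e + s_e\nabla T_e)$, $\tfrac{\mathbf{J}}{ne}\cdot\mathbf{R}$, $\nabla(\tfrac{\mathbf{J}}{ne}):\pmb{\sigma}_e$, and $-\pmb{\sigma}_i:\nabla\mathbf{u}$.

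The magnetic part is the computation requiring the Ohm's law substitution. Using Faraday's law \eqref{eq:Maxsys1}, the identity $\nabla\cdot(\mathbf{E}\times\mathbf{B}) = \mathbf{B}\cdot(\nabla\times\mathbf{E}) - \mathbf{E}\cdot(\nabla\times\mathbf{B})$, Amp\`ere's law \eqref{eq:Maxsys2}, and $\mathbf{E}\times\boldsymbol{\nu}_{\partial\Omega}=0$ to kill the Poynting flux, I obtain $\tfrac{d\mathcal{B}}{dt} = -\int_\Omega \mathbf{E}\cdot\mathbf{J}\,dx$. Substituting \eqref{eq:Olm_law} and using $(\mathbf{J}\times\mathbf{B})\cdot\mathbf{J}=0$ together with $(\mathbf{u}\times\mathbf{B})\cdot\mathbf{J} = -\mathbf{u}\cdot(\mathbf{J}\times\mathbf{B})$ yields
\begin{equation*}
\frac{d\mathcal{B}}{dt} = \int_\Omega\Big[-\mathbf{u}\cdot(\mathbf{J}\times\mathbf{B}) - \frac{\mathbf{J}}{ne}\cdot\mathbf{R} + \frac{\mathbf{J}}{ne}\cdot(n\nabla\mu_e + s_e\nabla T_e) + \frac{\mathbf{J}}{ne}\cdot(\nabla\cdot\pmb{\sigma}_e)\Big]\,dx.
\end{equation*}

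The decisive step, and the one I expect to demand the most care, is the termwise cancellation on summing $\tfrac{d}{dt}(\mathcal{H}+\mathcal{U}+\mathcal{B})$: the thermodynamic-force terms $\mp\mathbf{u}\cdot(n\nabla\mu + s_i\nabla T_i + s_e\nabla T_e)$ annihilate between $\mathcal{H}$ and $\mathcal{U}$; the Lorentz terms $\pm\mathbf{u}\cdot(\mathbf{J}\times\mathbf{B})$ annihilate between $\mathcal{H}$ and $\mathcal{B}$; the friction terms $\pm\tfrac{\mathbf{J}}{ne}\cdot\mathbf{R}$ and the electron force terms $\pm\tfrac{\mathbf{J}}{ne}(n\nabla\mu_e + s_e\nabla T_e)$ annihilate between $\mathcal{B}$ and $\mathcal{U}$; the viscous pair $+\pmb{\sigma}_i:\nabla\mathbf{u}$ (from $\mathcal{H}$) and $-\pmb{\sigma}_i:\nabla\mathbf{u}$ (from $\mathcal{U}$) cancels; and the electron-stress pair $\tfrac{\mathbf{J}}{ne}\cdot(\nabla\cdot\pmb{\sigma}_e)$ and $\nabla(\tfrac{\mathbf{J}}{ne}):\pmb{\sigma}_e$ is reconciled through $\nabla\cdot(\pmb{\sigma}_e\cdot\tfrac{\mathbf{J}}{ne}) = (\nabla\cdot\pmb{\sigma}_e)\cdot\tfrac{\mathbf{J}}{ne} + \pmb{\sigma}_e:\nabla(\tfrac{\mathbf{J}}{ne})$, whose boundary contribution is among the fluxes taken to vanish under \eqref{eq:boundary}. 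The genuine hazard is sign and orientation bookkeeping, since several quantities enter the three balances with opposite signs; once all interior couplings vanish, the sole survivor is $\int_\Omega \mathbf{u}\cdot(M_i n\mathbf{g})\,dx$, which is exactly \eqref{eq:epsilonlaw}.
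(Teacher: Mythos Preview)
Your proposal is correct and follows essentially the same approach as the paper's own proof, which simply instructs the reader to sum $\partial_t\mathcal{H}$ (from \eqref{eq:momen_eq} into \eqref{eq:energyH}), $\partial_t\mathcal{B}$ (from \eqref{eq:energyB} with \eqref{eq:Olm_law}), and $\partial_t\mathcal{U}$ (from integrating \eqref{eq:epsilon_total}); you have merely spelled out the termwise cancellations that the paper leaves implicit.
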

\begin{proof}
	The total energy balance \eqref{eq:epsilonlaw} is derived by summing the time derivatives of the kinetic energy $\frac{\partial \mathcal{H} }{\partial t}$ (from substituting the momentum equation \eqref{eq:momen_eq} into \eqref{eq:energyH}), the electromagnetic energy $\frac{\partial \mathcal{B}}{\partial t}$ (from combining \eqref{eq:energyB} with \eqref{eq:Maxsys1}–\eqref{eq:Maxsys2} and \eqref{eq:Olm_law}), and the integrated total energy $\frac{\partial \mathcal{U}}{\partial t}$ (from integrating \eqref{eq:epsilon_total} over $\Omega$).
\end{proof}

We now prove that the proposed 2F-MHD system complies with the second law of thermodynamics. Let $(\cdot,\cdot)$ and $\|\cdot\|$ denote the inner product and norm in $L^2(\Omega)$, respectively. The total entropy is defined as $\mathcal{S} = \mathcal{S}_i + \mathcal{S}_e$, where $\mathcal{S}_\alpha$ is the entropy of species $\alpha$ ($i$ or $e$). Given the relation $s_{\alpha} = (\epsilon_{\alpha}-f_{\alpha})/T_{\alpha}$, it follows that
\begin{equation}\label{eq:entropy_ie}
	\begin{aligned}
		\frac{\partial \mathcal{S}_{\alpha}}{\partial t}=\left(\frac{\partial s_{\alpha}}{\partial t}, 1\right)=\left(\frac{\partial(\epsilon_{\alpha}-f_{\alpha})}{\partial t}, \frac{1}{T_{\alpha}}\right)-\left(\frac{\partial T_{\alpha}}{\partial t}, \frac{s_{\alpha}}{T_{\alpha}}\right).
	\end{aligned}
\end{equation}
To derive the total entropy, we begin by finding the variation of the Helmholtz free energy density. For each species $\alpha = i, e$, we take $\mu_{\alpha}$ and $s_{\alpha}$ as the variations of $f_\alpha$, leading to:
\begin{equation}\label{eq:Helm_ie}
	\frac{\partial f_{\alpha}}{\partial t} = \mu_{\alpha}\frac{\partial n}{\partial t} - s_{\alpha}\frac{\partial T_{\alpha}}{\partial t} =
	\begin{cases}
		-\mu_i \nabla \cdot(n \mathbf{u}) - s_i \frac{\partial T_i}{\partial t} & (\alpha=i), \\
		-\mu_e \nabla \cdot\left(n\left( \mathbf{u}-\frac{\mathbf{J}}{ne}\right)\right)- s_e\frac{\partial T_e}{\partial t} & (\alpha=e).
	\end{cases}
\end{equation}
The continuity equation for electrons ensures
\begin{equation*}
	\begin{aligned}
		\frac{\partial n}{\partial t}+\nabla\cdot\left(n\big(\mathbf{u}-\frac{\mathbf{J}}{ne}\big)\right)=\frac{\partial n}{\partial t}+\nabla\cdot(n\mathbf{u})+\nabla\cdot\mathbf{J}=0,
	\end{aligned}
\end{equation*}
since $\mathbf{J}=\mu_0\nabla\times \mathbf{B}$ implies $\nabla\cdot\mathbf{J}=0$.

\begin{theorem}
	The new 2F-MHD system satisfies the second law of thermodynamics for ions, electrons, and total entropy as follows:
	
	\textbf{1). Ions and Electrons:}
	\begin{align}
		&\frac{\partial \mathcal{S}_i}{\partial t}=\left\|\frac{\Theta_i^{\frac{1}{2}}}{T_i} \nabla T_i\right\|^2+\frac{1}{2}\left\|\left(\frac{\eta}{T_i}\right)^{\frac{1}{2}} D(\mathbf{u})\right\|^2+\left\|\left(\frac{\lambda}{T_i}\right)^{\frac{1}{2}} \nabla \cdot \mathbf{u}\right\|^2,\label{eq:entropylaw_i}\\
		&\frac{\partial \mathcal{S}_e}{\partial t}=\left\|\frac{\Theta_e^{\frac{1}{2}}}{T_e} \nabla T_e\right\|^2+\left\|\left(\frac{\xi_e\xi_R}{T_e}\right)^{\frac{1}{2}} \nabla\left(\frac{\mathbf{J}}{ne}\right)\right\|^2+\left\|\left(\frac{\xi_R}{T_e}\right)^{\frac{1}{2}} \mathbf{J}\right\|^2.\label{eq:entropylaw_e}
	\end{align}
	
	\textbf{2). Total entropy} ($\mathcal{S}=\mathcal{S}_i+\mathcal{S}_e$):
	\begin{align}\label{eq:entropylaw_all}
		\frac{\partial \mathcal{S}}{\partial t}=
		&\sum_{\alpha=i,e}\left\|\frac{\Theta_\alpha^{1 / 2}}{T_\alpha} \nabla T_\alpha\right\|^2+\frac{1}{2}\left\|\left(\frac{\eta}{T_i}\right)^{1 / 2} D(\mathbf{u})\right\|^2+\left\|\left(\frac{\lambda}{T_i}\right)^{1 / 2} \nabla \cdot \mathbf{u}\right\|^2 \\ 
		&+\left\|\left(\frac{\xi_e\xi_R}{T_e}\right)^{1 / 2} \nabla\left(\frac{\mathbf{J}}{ne}\right)\right\|^2+\left\|\left(\frac{\xi_R}{T_e}\right)^{1 / 2}\mathbf{J}\right\|^2.	\nonumber		
	\end{align}
\end{theorem}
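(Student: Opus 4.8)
The plan is to differentiate each species entropy $\mathcal{S}_\alpha = \int_\Omega s_\alpha\,dx$ in time using the thermodynamic (Gibbs) structure and then to exploit the divergence form of the reformulated internal-energy balances. First I would combine the identity \eqref{eq:entropy_ie} with the Helmholtz variation \eqref{eq:Helm_ie}: writing $\partial_t(\epsilon_\alpha - f_\alpha) = \partial_t\epsilon_\alpha - \partial_t f_\alpha$ and inserting $\partial_t f_\alpha = \mu_\alpha\partial_t n - s_\alpha\partial_t T_\alpha$, the two terms proportional to $\partial_t T_\alpha$ in \eqref{eq:entropy_ie} cancel and one is left with the compact form
\begin{equation*}
	\frac{\partial\mathcal{S}_\alpha}{\partial t} = \left(\frac{\partial\epsilon_\alpha}{\partial t} - \mu_\alpha\frac{\partial n}{\partial t},\ \frac{1}{T_\alpha}\right).
\end{equation*}
This isolates precisely the quantity governed by the internal-energy equations \eqref{eq:epsilon_i} and \eqref{eq:epsilon_e}.

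Next I would substitute the appropriate internal-energy balance together with the matching continuity equation: $\partial_t n = -\nabla\cdot(n\mathbf{u})$ for ions, and the electron continuity relation $\partial_t n = -\nabla\cdot\big(n(\mathbf{u}-\tfrac{\mathbf{J}}{ne})\big)$ for electrons, the very combination appearing in \eqref{eq:epsilon_e}. Writing $\mathbf{u}_\alpha$ for $\mathbf{u}$ (ions) or $\mathbf{u}-\tfrac{\mathbf{J}}{ne}$ (electrons), the first cancellation is of the chemical-potential contributions: the advective flux $-\nabla\cdot(\mathbf{u}_\alpha n\mu_\alpha)$, the continuity term $\mu_\alpha\nabla\cdot(n\mathbf{u}_\alpha)$, and the body force $\mathbf{u}_\alpha\cdot n\nabla\mu_\alpha$ cancel identically after expanding the divergence by the product rule. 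What survives is the entropy flux $-\nabla\cdot(\mathbf{u}_\alpha s_\alpha T_\alpha)$, the body term $\mathbf{u}_\alpha\cdot s_\alpha\nabla T_\alpha$, the heat-flux divergence $-\nabla\cdot\mathbf{q}_\alpha$, the dissipative stress contraction, and, for electrons, the resistive terms $\tfrac{\mathbf{J}}{ne}\cdot\mathbf{R}$ and $\nabla(\tfrac{\mathbf{J}}{ne}):\pmb{\sigma}_e$.

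The second and more delicate cancellation happens after weighting by $1/T_\alpha$ and integrating by parts. Using $\nabla(1/T_\alpha) = -\nabla T_\alpha/T_\alpha^2$, the entropy-flux term contributes $-\int_\Omega \tfrac{s_\alpha}{T_\alpha}\,\mathbf{u}_\alpha\cdot\nabla T_\alpha\,dx$, which exactly annihilates the weighted body term $\int_\Omega \tfrac{s_\alpha}{T_\alpha}\,\mathbf{u}_\alpha\cdot\nabla T_\alpha\,dx$. The heat-flux term, integrated by parts with $\mathbf{q}_\alpha = -\Theta_\alpha\nabla T_\alpha$, produces $\|\Theta_\alpha^{1/2}T_\alpha^{-1}\nabla T_\alpha\|^2$. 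Inserting $\pmb{\sigma}_i = -\eta D(\mathbf{u}) - \lambda(\nabla\cdot\mathbf{u})\mathbf{I}$ and using $D(\mathbf{u}):\nabla\mathbf{u} = \tfrac12|D(\mathbf{u})|^2$ yields the two ion norms of \eqref{eq:entropylaw_i}; substituting $\mathbf{R} = \xi_R n e\,\mathbf{J}$ and $\pmb{\sigma}_e = \xi_e\xi_R\nabla(\tfrac{\mathbf{J}}{ne})$ yields the two electron norms of \eqref{eq:entropylaw_e}. Summing the two species then gives \eqref{eq:entropylaw_all}.

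I expect the principal obstacle to be the careful bookkeeping of the boundary terms in the two integrations by parts, since every surface integral must vanish. This uses $\mathbf{u}=0$ and $\nabla T_\alpha\cdot\boldsymbol{\nu}_{\partial\Omega}=0$ directly, and $\mathbf{J}\cdot\boldsymbol{\nu}_{\partial\Omega}=0$ to guarantee $\mathbf{u}_e\cdot\boldsymbol{\nu}_{\partial\Omega}=0$, all drawn from \eqref{eq:boundary}. A second subtle point is confirming that the opposite sign conventions of $\pmb{\sigma}_i$ (entering with a minus sign in \eqref{eq:epsilon_i}) and $\pmb{\sigma}_e$ (entering with a plus sign in \eqref{eq:epsilon_e}) nonetheless both produce nonnegative quadratic forms, so that each right-hand side is manifestly nonnegative and the second law indeed holds.
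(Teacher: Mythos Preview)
Your proposal is correct and follows essentially the same route as the paper: both combine the entropy identity \eqref{eq:entropy_ie} with the Helmholtz variation \eqref{eq:Helm_ie} and the species internal-energy balances \eqref{eq:epsilon_i}--\eqref{eq:epsilon_e}, reduce via the continuity equations for $\mathbf{u}$ and $\mathbf{u}-\tfrac{\mathbf{J}}{ne}$, and then read off the dissipative norms from the constitutive relations. The only cosmetic difference is that the paper first derives a pointwise identity for $\partial_t(\epsilon_\alpha-f_\alpha)$ (their \eqref{eq:proof3b_i}--\eqref{eq:proof3b_e}) and cancels the $s_\alpha\partial_t T_\alpha$ term afterward, whereas you perform that cancellation up front to obtain the compact form $\partial_t\mathcal{S}_\alpha=(\partial_t\epsilon_\alpha-\mu_\alpha\partial_t n,\,1/T_\alpha)$; the subsequent bookkeeping is identical.
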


\begin{proof}
	For any $\alpha\in\{i,e\}$, combining the internal energy equation \eqref{eq:epsilon_i} or \eqref{eq:epsilon_e} with the Helmholtz free energy variation \eqref{eq:Helm_ie}, and using the identity $\nabla \cdot(s_\alpha T_\alpha \mathbf{v}_\alpha)=T_\alpha \nabla \cdot(s_\alpha \mathbf{v}_\alpha)+s_\alpha \mathbf{v}_\alpha \cdot \nabla T_\alpha$(where $\mathbf{v}_i=\mathbf{u}$, $\mathbf{v}_e=\mathbf{u}-\frac{\mathbf{J}}{ne}$), we obtain
	\begin{align}
		\frac{\partial(\epsilon_i-f_i)}{\partial t}=&-T_i \nabla \cdot(s_i \mathbf{u})-\nabla \cdot \mathbf{q}_i-\pmb{\sigma}_{i }: \nabla \mathbf{u}+s_i \frac{\partial T_i}{\partial t},\label{eq:proof3b_i}\\
		\frac{\partial(\epsilon_e-f_e)}{\partial t}=&-T_e \nabla \cdot\left(s_e\big(\mathbf{u}-\frac{\mathbf{J}}{ne}\big)\right)-\nabla \cdot \mathbf{q}_e+\frac{\mathbf{J}}{ne}\cdot\mathbf{R}\label{eq:proof3b_e}\\
		&+\boldsymbol{\sigma}_{e }: \nabla\left(\frac{\mathbf{J}}{ne}\right)+s_e\frac{\partial T_e}{\partial t}.\nonumber
	\end{align}
	Substituting \eqref{eq:proof3b_i} or \eqref{eq:proof3b_e} into the entropy variation formula \eqref{eq:entropy_ie} yields:
	\begin{align}
		\frac{\partial \mathcal{S}_i}{\partial t} =&-(\nabla \cdot(s_i \mathbf{u}), 1)+\left\|\frac{\Theta_i^{1 / 2}}{T_i} \nabla T_i\right\|^2-\left(\pmb{\sigma}_{i }: \nabla \mathbf{u}, \frac{1}{T_i}\right) ,\label{eq:proof3c_i}\\
		\frac{\partial \mathcal{S}_e}{\partial t} =&-\left(\nabla \cdot\Big(s_e \big(\mathbf{u}-\frac{\mathbf{J}}{ne}\big)\Big), 1\right)+\left\|\frac{\Theta_e^{1 / 2}}{T_e} \nabla T_e\right\|^2\nonumber\\
		&+\left(\frac{\mathbf{J}}{ne}\cdot\mathbf{R}+\boldsymbol{\sigma}_{e }: \nabla\big(\frac{\mathbf{J}}{ne}\big), \frac{1}{T_e}\right) .\label{eq:proof3c_e}
	\end{align}
	Substitution of the constitutive relations for $\pmb{\sigma}_{i}$ and $\pmb{\sigma}_{e}$ yields the entropy equations \eqref{eq:entropylaw_i} and \eqref{eq:entropylaw_e}. Their sum gives the total entropy evolution \eqref{eq:entropylaw_all}, thereby demonstrating the nonnegativity of the entropy production.

\end{proof}

\section{Thermodynamically consistent numerical method}
\label{sec:numerial_model}

In this section, we develop semi-implicit time-marching schemes for the reformulated equations that strictly preserve thermodynamics. A cornerstone of this approach is the convex or concave structure of the Helmholtz free energy density, which provides a foundation for thermodynamic consistency. To decouple the strong nonlinearities among the number density, temperature, electromagnetic fields, and velocity, an auxiliary velocity variable, defined as a function of density and temperature, is introduced.

\begin{lemma}
	The Helmholtz free energy density $f_{\alpha},\alpha=i,e$ is convex in $n$ $(\frac{\partial^2 f_{\alpha}}{\partial n^2}>0)$ and concave in $T_{\alpha}$ $(\frac{\partial^2 f_{\alpha}}{\partial T_{\alpha}^2}<0)$.
\end{lemma}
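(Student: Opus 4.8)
The plan is to prove the lemma by direct computation, since the free energy density $f_\alpha = k_B T_\alpha n \ln n - \frac{k_B}{r-1} T_\alpha^2 n$ is an explicit smooth function of the two independent variables $n$ and $T_\alpha$ on the physically admissible region $n > 0$, $T_\alpha > 0$. Convexity in $n$ (respectively concavity in $T_\alpha$) is equivalent to the sign condition $\partial^2 f_\alpha/\partial n^2 > 0$ (respectively $\partial^2 f_\alpha/\partial T_\alpha^2 < 0$), so it suffices to evaluate the two pure second-order partial derivatives and verify their signs.

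First I would differentiate with respect to $n$. Starting from the chemical potential already recorded in \eqref{eq:thermo_smu}, namely $\partial f_\alpha/\partial n = k_B T_\alpha(\ln n + 1) - \frac{k_B}{r-1} T_\alpha^2$, a further differentiation gives
\begin{equation*}
\frac{\partial^2 f_\alpha}{\partial n^2} = \frac{k_B T_\alpha}{n}.
\end{equation*}
The quadratic term in $T_\alpha$ is linear in $n$ and hence drops out, leaving a manifestly positive expression once the physical positivity of $k_B$, $T_\alpha$, and $n$ is invoked, which establishes the convexity in $n$. Next I would differentiate with respect to $T_\alpha$. Using the entropy density from \eqref{eq:thermo_smu}, $\partial f_\alpha/\partial T_\alpha = -s_\alpha = k_B n \ln n - \frac{2k_B}{r-1} T_\alpha n$, one more differentiation yields
\begin{equation*}
\frac{\partial^2 f_\alpha}{\partial T_\alpha^2} = -\frac{2 k_B}{r-1}\, n,
\end{equation*}
in which the ideal-gas term $k_B n \ln n$ is temperature-independent and contributes nothing. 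With $k_B > 0$, $n > 0$, and $r - 1 = \frac{2}{3} > 0$ (recall $r = \frac{5}{3}$), this derivative is strictly negative, giving the concavity in $T_\alpha$.

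There is no genuine analytic obstacle here; the only point requiring care is that both conclusions rest entirely on the physical admissibility constraints --- namely $k_B > 0$, the strict positivity of the number density $n$ and each temperature $T_\alpha$, and the specific-heat condition $r > 1$. I would state these assumptions explicitly, since the sign of each second derivative would otherwise be indeterminate; in particular the convexity in $n$ must be understood on the open domain $n > 0$, where $\partial^2 f_\alpha/\partial n^2 = k_B T_\alpha / n$ remains finite and positive.
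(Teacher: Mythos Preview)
Your proposal is correct and follows exactly the same approach as the paper's proof: direct computation of the two pure second partial derivatives, yielding $\partial^2 f_\alpha/\partial n^2 = k_B T_\alpha/n > 0$ and $\partial^2 f_\alpha/\partial T_\alpha^2 = -\tfrac{2k_B}{r-1}n < 0$. Your additional explicit discussion of the positivity assumptions ($n>0$, $T_\alpha>0$, $r>1$) is a nice touch but not strictly necessary, as the paper treats these as implicit physical admissibility conditions.
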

\begin{proof}
	For $f_{\alpha}=k_BT_{\alpha}n\ln n-\frac{k_B}{r-1}T_{\alpha}^2n$, we directly compute $\frac{\partial^2 f_{\alpha}}{\partial n^2}=k_BT_{\alpha}\frac{1}{n}>0$ and $\frac{\partial^2 f_{\alpha}}{\partial T_{\alpha}^2}=-\frac{2k_B}{r-1}n<0$.
\end{proof}

We now construct the semi-implicit time marching scheme. Let the time interval $[0, T]$ be partitioned uniformly into $N$ steps, with a time step of $\Delta t = T/N$ and temporal nodes $t_k = k \Delta t$ for $k=0, \dots, N$. We define an auxiliary velocity as
\begin{equation}\label{eq:scheme2}
	\begin{aligned}
		\mathbf{u}^{k}_{\star} = \mathbf{u}^{k} -\frac{\Delta t}{\rho^{k}}\left(n^{k+1}\nabla \mu^{k+1} +s_i^{k} \nabla T_i^{k+1}+s_e^{k} \nabla T_e^{k+1}\right),
	\end{aligned}
\end{equation}
where $\rho^{k} = n^{k}M_{i} $. The quantity $\mathbf{u}_{\star}^k$ can be viewed as an approximation of $\mathbf{u}^{k+1}$ obtained by neglecting the convection, viscosity terms, Lorentz force and gravity terms in the momentum balance equation. Subsequently, a semi-implicit scheme is designed as below:
\begin{subequations}
	\begin{align}
		& \textbf{Mass equation}\nonumber\\[-2pt]
		& \frac{n^{k+1}-n^{k}}{\Delta t} + \nabla \cdot (n^{k+1} \mathbf{u}^{k}_{\star}) = 0, \label{eq:scheme3}\\
		& \textbf{Momentum equation} \nonumber\\[-2pt]
		&\rho^k \frac{\mathbf{u}^{k+1}-\mathbf{u}_{\star}^{k}}{\Delta t} +\rho^{k+1}\mathbf{u}^{k}_{\star} \nabla
		\mathbf{u}^{k+1} \label{eq:scheme4}\\
		&=  \nabla \cdot (\eta D(\mathbf{u}^{k+1})) + \nabla (\lambda \nabla \cdot \mathbf{u}^{k+1})+\mathbf{J}^{k+1}\times \mathbf{B}^{k+1} + M_{i} n^{k+1} \mathbf{g}^{k+1},\nonumber\\
		& \textbf{Ion internal energy equation} \nonumber\\[-2pt]
		& \frac{\epsilon_i^{k+1}-\epsilon_i^{k}}{\Delta t} +\nabla \cdot \big( \mathbf{u}^{k}_{\star}(n^{k+1}\mu^{k+1}_i+s_i^kT_i^{k+1})\big)\label{eq:scheme5}\\
		&= -\nabla \cdot\mathbf{q}_i^{k+1} +\mathbf{u}^{k}_{\star}\cdot(n^{k+1}\nabla \mu_i^{k+1}+s_i^k\nabla T_i^{k+1})+\eta D(\mathbf{u}^{k+1}):\nabla \mathbf{u}^{k+1} \nonumber\\
		&\quad+ \lambda|\nabla \cdot \mathbf{u}^{k+1}|^{2} 
		+\frac{1}{2 \Delta t} n^{k} M_{i}\left( |\mathbf{u}^{k+1}-\mathbf{u}^{k}_{\star}|^{2} + |\mathbf{u}^{k}_{\star}-\mathbf{u}^{k}|^{2}\right), \nonumber\\
		& \textbf{Electron internal energy equation} \nonumber\\[-2pt]
		&  \frac{\epsilon^{k+1}_{e}-\epsilon^{k}_{e}}{\Delta t} +\nabla \cdot \big( \mathbf{u}^{k}_{\star}(n^{k+1}\mu^{k+1}_e+s_e^kT_e^{k+1})\big) = -\nabla \cdot\mathbf{q}^{k+1}_{e}+ \frac{\mathbf{J}^{k+1}}{n^{k+1}e} \cdot\mathbf{R}^{k+1}\label{eq:scheme6}\\ 
		&\quad+ \nabla\left(\frac{\mathbf{J}^{k+1}}{n^{k+1} e}\right) : \pmb{\sigma}^{k+1}_{e}+\mathbf{u}^{k}_{\star}\cdot(n^{k+1}\nabla \mu_e^{k+1}+s_e^k\nabla T_e^{k+1})+\frac{1}{2\mu_0\Delta t}|\mathbf{B}^{k+1}-\mathbf{B}^k|^2\nonumber\\
		&\quad-\frac{\mathbf{J}^{k+1}}{n^{k+1}e}\cdot(n^{k+1}\nabla \mu^{k+1}_e+s^k_e\nabla T^{k+1}_e)+\nabla\cdot\left(\frac{\mathbf{J}^{k+1}}{n^{k+1}e}(n^{k+1}\mu^{k+1}_e+s^k_eT^{k+1}_e)\right),\nonumber\\
		& \textbf{Maxwell equation} \nonumber\\[-2pt]
		& \frac{\mathbf{B}^{k+1}-\mathbf{B}^{k}}{\Delta t} = -\nabla \times \mathbf{E}^{k+1}, \label{eq:scheme7}\\
		& \mathbf{E}^{k+1} + \mathbf{u}^{k+1} \times \mathbf{B}^{k+1} \label{eq:scheme8}\\
		&= \frac{1}{n^{k+1} e}\left(\mathbf{R}^{k+1} + \mathbf{J}^{k+1} \times \mathbf{B}^{k}-(n^{k+1}\nabla\mu^{k+1}_e+s^k_e\nabla T^{k+1}_e) - \nabla \cdot \pmb{\sigma}^{k+1}_{e}\right),\nonumber\\
		& \mu_{0} \mathbf{J}^{k+1} = \nabla \times \mathbf{B}^{k+1},\label{eq:scheme9}\\
		& \nabla \cdot \mathbf{B}^{k+1} = 0, \label{eq:scheme10}
	\end{align}
\end{subequations}
where
\begin{align}
	&\mathbf{R}^{k+1}=\xi_Rn^{k+1}e\mathbf{J}^{k+1},\quad\pmb{\sigma}^{k+1}_e=\xi_e\xi_R\nabla\left(\frac{\mathbf{J}^{k+1}}{n^{k+1}e}\right),\nonumber\\
	&\mathbf{q}_\alpha^{k+1} = - \Theta_\alpha^{k} \nabla T_\alpha^{k+1},\quad \Theta_\alpha^{k} =  \Theta_\alpha(n^{k}),\quad\alpha=i,e.\nonumber
\end{align}
This scheme is designed to ensure discrete energy conservation and entropy non-decrease.

\begin{theorem}\label{semi}
	The semi-implicit system \eqref{eq:scheme3}--\eqref{eq:scheme10} satisfies the first law of  thermodynamics as
	\begin{align}\label{eq:epsilonlaw_semi}
		\frac{\mathcal{E}^{k+1}-\mathcal{E}^k}{\Delta t}=\int_{\Omega}\mathbf{u}^{k+1}\cdot(M_in^{k+1}\mathbf{g}^{k+1})dx, 
	\end{align}
	where $\mathcal{E}^k = \mathcal{H}^k + \mathcal{U}^k + \mathcal{B}^k $, $ \mathcal{H}^k = \frac{1}{2} \int_{\Omega}\rho^k|\mathbf{u}^k|^{2} dx$, $\mathcal{U}^k = \int_{\Omega} \epsilon^k d x $  and $\mathcal{B}^k =  \frac{1}{2 \mu_{0}} \int_{\Omega} |\mathbf{B}^k|^{2} d x $.
	
\end{theorem}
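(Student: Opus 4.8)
The plan is to mirror the continuous energy identity \eqref{eq:epsilonlaw}: I would establish discrete balances for the three components $\mathcal{H}^k$, $\mathcal{U}^k$, $\mathcal{B}^k$ separately and then sum them, checking that every coupling term cancels except the gravitational work. For the kinetic energy I would take the $L^2$ inner product of the discrete momentum equation \eqref{eq:scheme4} with $\mathbf{u}^{k+1}$. The discrete time-derivative term yields, via the polarization identity $(a-b)\cdot a=\tfrac12(|a|^2-|b|^2+|a-b|^2)$, the quantity $\frac{1}{2\Delta t}\int_\Omega\rho^k(|\mathbf{u}^{k+1}|^2-|\mathbf{u}^k_{\star}|^2+|\mathbf{u}^{k+1}-\mathbf{u}^k_{\star}|^2)\,dx$, while the convection term $\rho^{k+1}\mathbf{u}^k_{\star}\nabla\mathbf{u}^{k+1}$ is handled by integration by parts together with the discrete mass equation \eqref{eq:scheme3}, which gives $\nabla\cdot(\rho^{k+1}\mathbf{u}^k_{\star})=-(\rho^{k+1}-\rho^k)/\Delta t$ and hence contributes $\frac{1}{2\Delta t}\int_\Omega(\rho^{k+1}-\rho^k)|\mathbf{u}^{k+1}|^2\,dx$. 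The crucial step is then to eliminate $|\mathbf{u}^k_{\star}|^2$ using the auxiliary-velocity definition \eqref{eq:scheme2}: testing the identity $\rho^k(\mathbf{u}^k-\mathbf{u}^k_{\star})/\Delta t=n^{k+1}\nabla\mu^{k+1}+s_i^k\nabla T_i^{k+1}+s_e^k\nabla T_e^{k+1}$ against $\mathbf{u}^k_{\star}$ converts $-\tfrac{\rho^k}{2\Delta t}|\mathbf{u}^k_{\star}|^2$ into $-\tfrac{\rho^k}{2\Delta t}|\mathbf{u}^k|^2$ plus the pressure-work term $\mathbf{u}^k_{\star}\cdot(n^{k+1}\nabla\mu^{k+1}+s_i^k\nabla T_i^{k+1}+s_e^k\nabla T_e^{k+1})$ plus the residual $\tfrac{\rho^k}{2\Delta t}|\mathbf{u}^k-\mathbf{u}^k_{\star}|^2$. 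After integrating the viscous terms by parts (using $\mathbf{u}^{k+1}=0$ on $\partial\Omega$), this produces an expression for $(\mathcal{H}^{k+1}-\mathcal{H}^k)/\Delta t$ containing exactly the pressure work, the two quadratic residuals scaled by $\rho^k/(2\Delta t)$, the viscous dissipation, the Lorentz work $\int_\Omega(\mathbf{J}^{k+1}\times\mathbf{B}^{k+1})\cdot\mathbf{u}^{k+1}$, and the gravity term.

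For the internal energy I would integrate \eqref{eq:scheme5} and \eqref{eq:scheme6} over $\Omega$ and add them. The boundary conditions \eqref{eq:boundary} make all divergence and heat-flux terms vanish, noting that $\mathbf{u}^k_{\star}\cdot\boldsymbol{\nu}=0$ follows from $\mathbf{u}^k=0$ together with $\nabla n\cdot\boldsymbol{\nu}=\nabla T_\alpha\cdot\boldsymbol{\nu}=0$, and that $\mathbf{J}^{k+1}\cdot\boldsymbol{\nu}=0$. What survives is precisely the negatives of the kinetic-energy contributions—the pressure work, the viscous dissipation, and the two velocity residuals (these are the $\frac{1}{2\Delta t}n^kM_i(\cdots)$ terms deliberately inserted in \eqref{eq:scheme5})—together with the electron Joule/hyper-resistive terms $\tfrac{\mathbf{J}^{k+1}}{n^{k+1}e}\cdot\mathbf{R}^{k+1}$ and $\nabla(\tfrac{\mathbf{J}^{k+1}}{n^{k+1}e}):\pmb{\sigma}^{k+1}_{e}$, the magnetic residual $\tfrac{1}{2\mu_0\Delta t}|\mathbf{B}^{k+1}-\mathbf{B}^k|^2$, and the electron chemical/thermal work $-\tfrac{\mathbf{J}^{k+1}}{n^{k+1}e}\cdot(n^{k+1}\nabla\mu^{k+1}_e+s^k_e\nabla T^{k+1}_e)$.

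For the magnetic energy I would write $(\mathcal{B}^{k+1}-\mathcal{B}^k)/\Delta t=\frac{1}{\mu_0\Delta t}(\mathbf{B}^{k+1},\mathbf{B}^{k+1}-\mathbf{B}^k)-\frac{1}{2\mu_0\Delta t}\|\mathbf{B}^{k+1}-\mathbf{B}^k\|^2$, insert Faraday's law \eqref{eq:scheme7}, and use the identity $\nabla\cdot(\mathbf{E}\times\mathbf{B})=\mathbf{B}\cdot(\nabla\times\mathbf{E})-\mathbf{E}\cdot(\nabla\times\mathbf{B})$ with the boundary condition $\mathbf{E}\times\boldsymbol{\nu}=0$ and Amp\`ere's law \eqref{eq:scheme9} to reduce the first term to $-(\mathbf{E}^{k+1},\mathbf{J}^{k+1})$. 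Substituting the discrete Ohm's law \eqref{eq:scheme8} and exploiting $(\mathbf{J}^{k+1}\times\mathbf{B}^k)\cdot\mathbf{J}^{k+1}=0$, together with one integration by parts on the $\nabla\cdot\pmb{\sigma}^{k+1}_{e}$ contribution, produces the Lorentz work $-\int_\Omega\mathbf{u}^{k+1}\cdot(\mathbf{J}^{k+1}\times\mathbf{B}^{k+1})$, the negative Joule and hyper-resistive terms, the negative electron chemical/thermal work, and the negative magnetic residual.

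Finally I would add the three balances. The pressure work, viscous dissipation, velocity residuals, Lorentz work, Joule and hyper-resistive terms, electron chemical/thermal work, and the magnetic residual all cancel pairwise, leaving only $\int_\Omega\mathbf{u}^{k+1}\cdot(M_in^{k+1}\mathbf{g}^{k+1})\,dx$, which is \eqref{eq:epsilonlaw_semi}. The main obstacle is not any single estimate but the exact bookkeeping: the scheme is engineered so that the quadratic residuals $|\mathbf{u}^{k+1}-\mathbf{u}^k_{\star}|^2$, $|\mathbf{u}^k_{\star}-\mathbf{u}^k|^2$ and $|\mathbf{B}^{k+1}-\mathbf{B}^k|^2$ generated by the polarization identities are reproduced verbatim by the compensating terms hard-wired into \eqref{eq:scheme5}--\eqref{eq:scheme6}, and verifying that all signs and coefficients match exactly—especially that the auxiliary-velocity splitting reproduces the pressure work $\mathbf{u}^k_{\star}\cdot(n^{k+1}\nabla\mu^{k+1}+s_i^k\nabla T_i^{k+1}+s_e^k\nabla T_e^{k+1})$ consumed by the internal energy equations—is the delicate part.
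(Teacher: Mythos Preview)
Your proposal is correct and follows essentially the same route as the paper: test \eqref{eq:scheme4} with $\mathbf{u}^{k+1}$, test the auxiliary-velocity relation \eqref{eq:scheme2} with $\mathbf{u}^k_{\star}$, handle the magnetic energy via \eqref{eq:scheme7}--\eqref{eq:scheme9}, integrate the internal energy equations, and sum. The paper packages the two kinetic steps by introducing an intermediate $\mathcal{H}^k_{\star}=\tfrac12\int_\Omega\rho^k|\mathbf{u}^k_{\star}|^2\,dx$ and computing $(\mathcal{H}^{k+1}-\mathcal{H}^k_{\star})/\Delta t$ and $(\mathcal{H}^k_{\star}-\mathcal{H}^k)/\Delta t$ separately, which is exactly your polarization-identity bookkeeping rephrased, so there is no substantive difference.
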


\begin{proof}
	Multiplying both sides of \eqref{eq:scheme4} by $\mathbf{u}^{k+1}$
	and integrating it over $\Omega$, we
	obtain
	\begin{align}\label{eq:proof2a}
		&\left(\rho^k\frac{\mathbf{u}^{k+1}-\mathbf{u}_{\star}^k}{\Delta t}, \mathbf{u}^{k+1}\right)+\left(\rho^{k+1} \mathbf{u}_{\star}^k \cdot \nabla \mathbf{u}^{k+1}, \mathbf{u}^{k+1}\right)=-\left\|\sqrt{\lambda} \nabla \cdot \mathbf{u}^{k+1}\right\|^2\\
		&\quad-\frac{1}{2}\left\|\sqrt{\eta} D\left(\mathbf{u}^{k+1}\right)\right\|^2+\int_{\Omega}\mathbf{u}^{k+1}\cdot(\mathbf{J}^{k+1}\times\mathbf{B}^{k+1}+M_in^{k+1}\mathbf{g}^{k+1})d x.\nonumber
	\end{align}
	Using \eqref{eq:scheme3} and $\mathcal{H}_{\star}^k=\frac{1}{2} \int_{\Omega} \rho^k\left|\mathbf{u}_{\star}^k\right|^2 d \mathbf{x}$ , we estimate
	\begin{align}\label{eq:proof2b}
		&\left(\rho^k \frac{\mathbf{u}^{k+1}-\mathbf{u}_{\star}^k}{\Delta t}, \mathbf{u}^{k+1}\right)-\frac{1}{2\Delta t}\left(\rho^k,\left|\mathbf{u}^{k+1}-\mathbf{u}_{\star}^k\right|^2\right)\\
		&=
		\frac{1}{2\Delta t}\left(\rho^k,\left|\mathbf{u}^{k+1}\right|^2-\left|\mathbf{u}_{\star}^k\right|^2\right)=\frac{\mathcal{H}^{k+1}-\mathcal{H}_{\star}^k}{\Delta t}-\left(\rho^{k+1} \mathbf{u}_{\star}^k \cdot \nabla \mathbf{u}^{k+1}, \mathbf{u}^{k+1}\right).\nonumber
	\end{align}
	Substituting \eqref{eq:proof2b} into \eqref{eq:proof2a} yields
	\begin{align}\label{proof2c}
		\frac{\mathcal{H}^{k+1}-\mathcal{H}_{\star}^k}{\Delta t}=&-\frac{1}{2 \Delta t}\left(\rho^k,\left|\mathbf{u}^{k+1}-\mathbf{u}_{\star}^k\right|^2\right) -\left\|\sqrt{\lambda} \nabla \cdot \mathbf{u}^{k+1}\right\|^2-\frac{1}{2}\left\|\sqrt{\eta} D\left(\mathbf{u}^{k+1}\right)\right\|^2\nonumber\\
		&+ \int_{\Omega}\mathbf{u}^{k+1}\cdot(\mathbf{J}^{k+1}\times\mathbf{B}^{k+1}+M_in^{k+1}\mathbf{g}^{k+1})d x.
	\end{align}
	Multiplying \eqref{eq:scheme2} by $\mathbf{u}^k_{\star}$ and integrating over $\Omega$ gives
	\begin{align}\label{proof2d}
		\frac{\mathcal{H}_{\star}^k-\mathcal{H}^k}{\Delta t} =&\frac{1}{\Delta t}\left(\rho^k\left(\mathbf{u}_{\star}^k-\mathbf{u}^k\right), \mathbf{u}_{\star}^k\right)-\frac{1}{2\Delta t}\left(\rho^k,\left|\mathbf{u}_{\star}^k-\mathbf{u}^k\right|^2\right) \\ 
		=&-\left(n^{k+1} \nabla \mu^{k+1}+s_i^k \nabla T_i^{k+1}+s_e^k \nabla T_e^{k+1}, \mathbf{u}_{\star}^k\right)-\frac{1}{2\Delta t}\left(\rho^k,\left|\mathbf{u}_{\star}^k-\mathbf{u}^k\right|^2\right) .\nonumber
	\end{align}
	From \eqref{eq:scheme7}, \eqref{eq:scheme8} and \eqref{eq:scheme9}, we have
	\begin{align}\label{proof2f}
		&\frac{\mathcal{B}^{k+1}-\mathcal{B}^k}{\Delta t}=-\frac{1}{2\mu_{0}\Delta t}(1,|\mathbf{B}^{k+1}-\mathbf{B}^k|^2)+\left(\mathbf{u}^{k+1}\times\mathbf{B}^{k+1},\mathbf{J}^{k+1}\right)\\
		&\quad-\left(\frac{1}{n^{k+1}e}(\mathbf{R}^{k+1}+\mathbf{J}^{k+1}\times\mathbf{B}^k-(n^{k+1}\nabla\mu^{k+1}_e+s^k_e\nabla T^{k+1}_e)-\nabla\cdot\pmb{\sigma}_e^{k+1}),\mathbf{J}^{k+1}\right).\nonumber
	\end{align}
	Summing \eqref{eq:scheme5} and \eqref{eq:scheme6} gives the discrete evolution equation for the total internal energy $\epsilon^k=\epsilon^k_i+\epsilon^k_e$. Integrating over $\Omega$ yields $\frac{\mathcal{U}^{k+1}-\mathcal{U}^k}{\Delta t}$. Finally, summing \eqref{proof2c}, \eqref{proof2d}, \eqref{proof2f} and $\frac{\mathcal{U}^{k+1}-\mathcal{U}^k}{\Delta t}$ together yields \eqref{eq:epsilonlaw_semi}.
\end{proof}

We now proceed to the discrete entropy production analysis to verify the scheme's adherence to the second law of thermodynamics. To facilitate the proof, we introduce the following auxiliary velocities for ions and electrons:
\begin{equation}\label{eq:v_ie}
	\mathbf{v}^k_\alpha:= 
	\begin{cases}
		\mathbf{u}^k_\star & (\alpha=i), \\
		\mathbf{u}^k_\star-\frac{\mathbf{J}^{k+1}}{n^{k+1}e} & (\alpha=e).
	\end{cases}
\end{equation}
We first derive a discrete analog of the Helmholtz free energy variation equation \eqref{eq:Helm_ie} for the discrete system. 

\begin{lemma}[Discrete Helmholtz free energy inequality]\label{Helm_dis}
	The discrete Helmholtz free energy densities of ions (i) and electronics (e) satisfy
	\begin{align}\label{eq:dis_free2ie}
		\frac{f_\alpha^{k+1}-f_\alpha^k}{\Delta t} \leq -s_\alpha^k \frac{T_\alpha^{k+1}-T_\alpha^k}{\Delta t}-\mu_\alpha^{k+1} \nabla \cdot\left(n^{k+1} \mathbf{v}_{\alpha}^k\right).
	\end{align}
\end{lemma}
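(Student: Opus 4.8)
The plan is to reduce \eqref{eq:dis_free2ie} to a pointwise (in space) algebraic inequality between the free-energy increment and the increments of $n$ and $T_\alpha$, and then close it using the convexity in $n$ and concavity in $T_\alpha$ established in the lemma above. First I would rewrite the divergence term on the right-hand side using the discrete mass balance. The key observation is that \emph{both} auxiliary velocities in \eqref{eq:v_ie} satisfy the same discrete continuity identity
\begin{equation*}
	\nabla\cdot\left(n^{k+1}\mathbf{v}_\alpha^k\right)=-\frac{n^{k+1}-n^k}{\Delta t},\qquad \alpha=i,e.
\end{equation*}
For $\alpha=i$ this is exactly \eqref{eq:scheme3}. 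For $\alpha=e$ one writes $n^{k+1}\mathbf{v}_e^k=n^{k+1}\mathbf{u}_\star^k-\mathbf{J}^{k+1}/e$ and uses $\nabla\cdot\mathbf{J}^{k+1}=0$, which follows from \eqref{eq:scheme9}, to reduce the electron divergence to the ion one. Substituting this identity and multiplying through by $\Delta t>0$, the claimed inequality \eqref{eq:dis_free2ie} becomes equivalent to the pointwise bound
\begin{equation*}
	f_\alpha^{k+1}-f_\alpha^k\le \mu_\alpha^{k+1}\left(n^{k+1}-n^k\right)-s_\alpha^k\left(T_\alpha^{k+1}-T_\alpha^k\right).
\end{equation*}

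Next I would prove this pointwise bound by a two-step density-then-temperature splitting,
\begin{equation*}
	f_\alpha^{k+1}-f_\alpha^k=\left[f_\alpha(n^{k+1},T_\alpha^{k+1})-f_\alpha(n^k,T_\alpha^{k+1})\right]+\left[f_\alpha(n^k,T_\alpha^{k+1})-f_\alpha(n^k,T_\alpha^k)\right],
\end{equation*}
and then applying a tangent-line inequality to each bracket. For the first bracket, $n\mapsto f_\alpha(n,T_\alpha^{k+1})$ is convex, so the tangent at $n^{k+1}$ lies below the graph and yields $f_\alpha(n^{k+1},T_\alpha^{k+1})-f_\alpha(n^k,T_\alpha^{k+1})\le \frac{\partial f_\alpha}{\partial n}(n^{k+1},T_\alpha^{k+1})(n^{k+1}-n^k)=\mu_\alpha^{k+1}(n^{k+1}-n^k)$. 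For the second bracket, $T\mapsto f_\alpha(n^k,T)$ is concave, so the tangent at $T_\alpha^k$ lies above the graph and yields $f_\alpha(n^k,T_\alpha^{k+1})-f_\alpha(n^k,T_\alpha^k)\le \frac{\partial f_\alpha}{\partial T_\alpha}(n^k,T_\alpha^k)(T_\alpha^{k+1}-T_\alpha^k)=-s_\alpha^k(T_\alpha^{k+1}-T_\alpha^k)$. Adding the two bounds gives exactly the required pointwise inequality.

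The main thing to get right is the matching of evaluation points to the semi-implicit asymmetry of the scheme. The target carries $\mu_\alpha$ at the new level $k+1$ but $s_\alpha$ at the old level $k$; this dictates that the density increment be taken at the \emph{fixed} temperature $T_\alpha^{k+1}$ with the convex tangent based at $n^{k+1}$, while the temperature increment be taken at the \emph{fixed} density $n^k$ with the concave tangent based at $T_\alpha^k$. Any other ordering or base-point choice would produce $\mu_\alpha^k$ or $s_\alpha^{k+1}$ and break consistency with the discrete energy balances \eqref{eq:scheme5}--\eqref{eq:scheme6}. It is also worth emphasizing that the continuous identity \eqref{eq:Helm_ie} necessarily degrades to an \emph{inequality} here: convexity and concavity replace the exact chain rule by one-sided tangent bounds, and the nonnegative gap is precisely the second-order remainder that will later feed the discrete entropy production.
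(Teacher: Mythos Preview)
Your proposal is correct and follows essentially the same approach as the paper: the same density-then-temperature splitting at fixed $T_\alpha^{k+1}$ and fixed $n^k$, the same convex/concave tangent-line inequalities producing $\mu_\alpha^{k+1}$ and $-s_\alpha^k$, and the same use of the discrete continuity equation \eqref{eq:scheme3} together with $\nabla\cdot\mathbf{J}^{k+1}=0$ for the electron case. Your discussion of why the evaluation points must be chosen this way is a helpful elaboration, but the core argument is identical.
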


\begin{proof}
	By convexity of $ f_\alpha $ in $ n $ and concavity in $ T_\alpha $:  
	\begin{align}
		\frac{f_{\alpha}^{k+1}-f_{\alpha}^k}{\Delta t} & =\frac{f_{\alpha}\left(n^{k+1}, T_{\alpha}^{k+1}\right)-f_\alpha\left(n^k, T_{\alpha}^{k+1}\right)}{\Delta t}+\frac{f_{\alpha}\left(n^k, T_{\alpha}^{k+1}\right)-f_{\alpha}\left(n^k, T_{\alpha}^k\right)}{\Delta t} \nonumber\\
		& \leq \mu_{\alpha}^{k+1} \frac{n^{k+1}-n^k}{\Delta t}-s_{\alpha}^k \frac{T_{\alpha}^{k+1}-T_{\alpha}^k}{\Delta t}.\nonumber
	\end{align}	
	For $\alpha=i,e$, substitute discrete mass conservation $\frac{n^{k+1}-n^k}{\Delta t} +\nabla\cdot(n^{k+1}\mathbf{v}_{\alpha}^k)=0,$ respectively (for $\alpha=e$, using $\nabla\cdot\mathbf{J}^{k+1}=0$).
\end{proof}
Equations \eqref{eq:scheme5}  and  \eqref{eq:scheme6}  can be rewritten as
\begin{align}
	& \frac{\epsilon^{k+1}_{i}-\epsilon^{k}_{i}}{\Delta t} + T^{k+1}_{i}\nabla \cdot (s^{k}_{i}  \mathbf{u}^{k}_{\star}) = -\nabla \cdot\mathbf{q}^{k+1}_{i}+ \eta D(\mathbf{u}^{k+1}):\nabla \mathbf{u}^{k+1} \label{eq:dis_energyii}\\
	&\quad -\mu^{k+1}_{i} \nabla \cdot (\mathbf{u}^{k}_{\star} n^{k+1}) + \lambda|\nabla \cdot \mathbf{u}^{k+1}|^{2} +\frac{1}{2 \Delta t} n^{k} M_{i}\left( |\mathbf{u}^{k+1}-\mathbf{u}^{k}_{\star}|^{2} + |\mathbf{u}^{k}_{\star}-\mathbf{u}^{k}|^{2}\right),\nonumber\\
	& \frac{\epsilon^{k+1}_{e}-\epsilon^{k}_{e}}{\Delta t} + T^{k+1}_{e}\nabla \cdot \left(s^{k}_{e}  \big(\mathbf{u}^{k}_{\star}-\frac{\mathbf{J}^{k+1}}{n^{k+1}e}\big)\right) = -\nabla \cdot\mathbf{q}^{k+1}_{e}+ \nabla\left(\frac{\mathbf{J}^{k+1}}{n^{k+1} e}\right) : \pmb{\sigma}^{k+1}_{e} \nonumber\\
	&\quad-\mu^{k+1}_{e} \nabla \cdot \left(\big(\mathbf{u}^{k}_{\star} -\frac{\mathbf{J}^{k+1}}{n^{k+1}e}\big)n^{k+1}\right)+\frac{\mathbf{J}^{k+1}}{n^{k+1}e} \cdot\mathbf{R}^{k+1} +\frac{1}{2\mu_0\Delta t}|\mathbf{B}^{k+1}-\mathbf{B}^k|^2.\label{eq:dis_energyee}
\end{align}

\begin{theorem}
	The semi-implicit scheme \eqref{eq:scheme2}--\eqref{eq:scheme10} satisfies the second law of thermodynamics for ions ($i$), electrons ($e$), and total entropy :
	
	\textbf{1). Ions and Electrons:}
	\begin{align}
		& \frac{\mathcal{S}_i^{k+1}-\mathcal{S}_i^k}{\Delta t} \geq\left\|\frac{\sqrt{\Theta_i^{k+1}}}{T_i^{k+1}} \nabla T_i^{k+1}\right\|^2+\left(\eta D\left(\mathbf{u}^{k+1}\right): \nabla \mathbf{u}^{k+1}, \frac{1}{T_i^{k+1}}\right) \label{eq:dis_entropyi}\\
		&\quad+\left(\lambda\left|\nabla \cdot \mathbf{u}^{k+1}\right|^2+\frac{1}{2 \Delta t} n^{k} M_{i}\left(\left|\mathbf{u}^{k+1}-\mathbf{u}_{\star}^k\right|^2+\left|\mathbf{u}_{\star}^k-\mathbf{u}^k\right|^2\right), \frac{1}{T_i^{k+1}}\right) \geq 0,\nonumber\\
		& \frac{\mathcal{S}_e^{k+1}-\mathcal{S}_e^k}{\Delta t} \geq\left\|\frac{\sqrt{\Theta_e^{k+1}}}{T_e^{k+1}} \nabla T_e^{k+1}\right\|^2 +\left( \xi_R|\mathbf{J}^{k+1}|^{2}+\xi_e\xi_R \left|\nabla\left(\frac{\mathbf{J}^{k+1}}{n^{k} e}\right)\right| ^2, \frac{1}{T_e^{k+1}}\right)\nonumber\\
		&\quad+\left(\frac{1}{2\mu_0\Delta t}|\mathbf{B}^{k+1}-\mathbf{B}^k|^2, \frac{1}{T_e^{k+1}}\right) \geq 0.\label{eq:dis_entropye}
	\end{align}
	
	\textbf{2). Total Entropy} ($\mathcal{S}=\mathcal{S}_i+\mathcal{S}_e$):
	\begin{align}\label{eq:dis_entropyall}
		&\frac{\mathcal{S}^{k+1}-\mathcal{S}^k}{\Delta t}  \geq \sum_{\alpha=i,e}\left\|\frac{\sqrt{\Theta_\alpha^{k+1}}}{T_\alpha^{k+1}} \nabla T_\alpha^{k+1}\right\|^2 +\frac{1}{2}\left(\eta |D\left(\mathbf{u}^{k+1}\right)|^2, \frac{1}{T_i^{k+1}}\right) \\
		&\quad+\left(\lambda\left|\nabla \cdot \mathbf{u}^{k+1}\right|^2+\frac{1}{2 \Delta t} n^{k} M_{i}\left(\left|\mathbf{u}^{k+1}-\mathbf{u}_{\star}^k\right|^2+\left|\mathbf{u}_{\star}^k-\mathbf{u}^k\right|^2\right), \frac{1}{T_i^{k+1}}\right)\nonumber\\
		&\quad+\left( \xi_R|\mathbf{J}^{k+1}|^{2} +\xi_e\xi_R \left|\nabla\left(\frac{\mathbf{J}^{k+1}}{n^{k+1} e}\right)\right| ^2+\frac{1}{2\mu_0\Delta t}|\mathbf{B}^{k+1}-\mathbf{B}^k|^2, \frac{1}{T_e^{k+1}}\right)\geq 0,\nonumber
	\end{align}
	where $\mathcal{S}^k_\alpha=\int_{\Omega} s_\alpha^k dx, \quad \alpha=i,e.$
\end{theorem}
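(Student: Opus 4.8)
The plan is to reproduce, at the discrete level, the continuous entropy argument behind \eqref{eq:entropy_ie}--\eqref{eq:proof3c_e}, with the convexity/concavity of $f_\alpha$ turning the exact chain-rule identity into a one-sided inequality. First I would fix $\alpha\in\{i,e\}$ and use that the thermodynamic relation $\epsilon_\alpha-f_\alpha=s_\alpha T_\alpha$ holds at every time level, so that $s_\alpha^{k+1}=(\epsilon_\alpha^{k+1}-f_\alpha^{k+1})/T_\alpha^{k+1}$. Writing $s_\alpha^k T_\alpha^k=\epsilon_\alpha^k-f_\alpha^k$ and adding and subtracting, I obtain the discrete analogue of \eqref{eq:entropy_ie},
\begin{equation*}
\frac{s_\alpha^{k+1}-s_\alpha^k}{\Delta t}=\frac{1}{T_\alpha^{k+1}}\left[\frac{\epsilon_\alpha^{k+1}-\epsilon_\alpha^k}{\Delta t}-\frac{f_\alpha^{k+1}-f_\alpha^k}{\Delta t}-s_\alpha^k\frac{T_\alpha^{k+1}-T_\alpha^k}{\Delta t}\right],
\end{equation*}
which requires only algebra together with the positivity $T_\alpha^{k+1}>0$ so that multiplication by $1/T_\alpha^{k+1}$ preserves signs in the subsequent steps.

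Next I would insert the discrete Helmholtz free energy inequality of Lemma~\ref{Helm_dis}, which bounds $-(f_\alpha^{k+1}-f_\alpha^k)/\Delta t$ from below by $s_\alpha^k(T_\alpha^{k+1}-T_\alpha^k)/\Delta t+\mu_\alpha^{k+1}\nabla\cdot(n^{k+1}\mathbf{v}_\alpha^k)$. Substituting this cancels the two $s_\alpha^k$ temperature terms and leaves the clean lower bound
\begin{equation*}
\frac{s_\alpha^{k+1}-s_\alpha^k}{\Delta t}\geq\frac{1}{T_\alpha^{k+1}}\left[\frac{\epsilon_\alpha^{k+1}-\epsilon_\alpha^k}{\Delta t}+\mu_\alpha^{k+1}\nabla\cdot(n^{k+1}\mathbf{v}_\alpha^k)\right].
\end{equation*}
I would then substitute the rewritten internal energy balances \eqref{eq:dis_energyii} (for $\alpha=i$, with $\mathbf{v}_i^k=\mathbf{u}_\star^k$) and \eqref{eq:dis_energyee} (for $\alpha=e$, with $\mathbf{v}_e^k=\mathbf{u}_\star^k-\mathbf{J}^{k+1}/(n^{k+1}e)$), in which the term $\mu_\alpha^{k+1}\nabla\cdot(n^{k+1}\mathbf{v}_\alpha^k)$ has been deliberately isolated. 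The bracket then reduces to $-T_\alpha^{k+1}\nabla\cdot(s_\alpha^k\mathbf{v}_\alpha^k)-\nabla\cdot\mathbf{q}_\alpha^{k+1}$ plus the dissipative sources (viscous and kinetic remainders for ions; resistive, hyper-resistive and magnetic remainders for electrons).

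Finally I would integrate over $\Omega$ and treat each surviving term. The advective entropy flux integrates to zero, $\int_\Omega\nabla\cdot(s_\alpha^k\mathbf{v}_\alpha^k)\,dx=0$, because $\mathbf{v}_\alpha^k\cdot\boldsymbol{\nu}_{\partial\Omega}=0$: indeed $\mathbf{u}=0$ on $\partial\Omega$ and the Neumann data $\nabla n\cdot\boldsymbol{\nu}_{\partial\Omega}=\nabla T_\alpha\cdot\boldsymbol{\nu}_{\partial\Omega}=0$ of \eqref{eq:boundary} force $\nabla\mu_\alpha\cdot\boldsymbol{\nu}_{\partial\Omega}=0$ through the explicit form of $\mu_\alpha$, whence $\mathbf{u}_\star^k\cdot\boldsymbol{\nu}_{\partial\Omega}=0$ by \eqref{eq:scheme2} and $\mathbf{J}^{k+1}\cdot\boldsymbol{\nu}_{\partial\Omega}=0$. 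Integrating the heat-flux term by parts (its boundary contribution vanishing by $\nabla T_\alpha\cdot\boldsymbol{\nu}_{\partial\Omega}=0$) yields the temperature-gradient dissipation norm $\|\Theta_\alpha^{1/2}(T_\alpha^{k+1})^{-1}\nabla T_\alpha^{k+1}\|^2$. The constitutive substitutions $\pmb{\sigma}_i=-\eta D(\mathbf{u})-(\lambda\nabla\cdot\mathbf{u})\mathbf{I}$, $\pmb{\sigma}_e=\xi_e\xi_R\nabla(\mathbf{J}/(ne))$ and $\mathbf{R}=\xi_R ne\mathbf{J}$, together with the identity $\eta D(\mathbf{u}^{k+1}):\nabla\mathbf{u}^{k+1}=\tfrac12\eta|D(\mathbf{u}^{k+1})|^2$, recast the sources as the manifestly nonnegative quantities of \eqref{eq:dis_entropyi}--\eqref{eq:dis_entropye}, each weighted by $1/T_\alpha^{k+1}>0$; adding the ion and electron inequalities yields the total bound \eqref{eq:dis_entropyall}.

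The main obstacle I anticipate is not the algebra of the first two steps but the interface between the pointwise convexity/concavity inequality and the global sign conclusion: one must verify that every production term surviving the substitution is genuinely nonnegative under the stated coefficient constraints ($\eta>0$, $\lambda=\xi-\tfrac23\eta$ with $\xi>\tfrac23\eta$, and $\xi_e,\xi_R,\Theta_\alpha\geq 0$) and, crucially, that $T_\alpha^{k+1}>0$, so that weighting by $1/T_\alpha^{k+1}$ neither reverses the inequality inherited from Lemma~\ref{Helm_dis} nor spoils positivity after integration. The boundary-flux cancellation for the auxiliary velocity $\mathbf{u}_\star^k$ is the one place where the explicit algebraic form of $\mu_\alpha$ and the Neumann conditions must be invoked directly, rather than being inherited automatically from the continuous model.
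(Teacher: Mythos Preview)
Your proposal is correct and follows essentially the same route as the paper's proof: derive the discrete analogue of \eqref{eq:entropy_ie}, apply Lemma~\ref{Helm_dis} to turn it into an inequality, then substitute the rewritten internal energy balances \eqref{eq:dis_energyii}--\eqref{eq:dis_energyee} and integrate. The only cosmetic difference is that you carry out the first steps pointwise and integrate at the end, whereas the paper works with the $L^2$ inner products from the outset; your explicit treatment of the boundary-flux cancellation for $\mathbf{u}_\star^k$ and the positivity of $T_\alpha^{k+1}$ is more careful than what the paper spells out.
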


\begin{proof}
	For any $\alpha\in\{i,e\}$, from the entropy density definition $s_\alpha=\frac{1}{T_\alpha}(\epsilon_{\alpha}-f_\alpha)$, we obtain
	\begin{equation}\label{eq:dis_entropyTot1ie}
		\begin{aligned}
			\frac{\mathcal{S}_\alpha^{k+1}-\mathcal{S}_\alpha^k}{\Delta t} 
			=\left(\frac{\epsilon_{\alpha}^{k+1}-\epsilon_{\alpha}^k-f_\alpha^{k+1}+f_\alpha^k}{\Delta t}, \frac{1}{T_\alpha^{k+1}}\right)-\left(\frac{T_\alpha^{k+1}-T_\alpha^k}{\Delta t}, \frac{s_\alpha^k}{T_\alpha^{k+1}}\right).
		\end{aligned}
	\end{equation}
	Substituting the discrete free energy inequality \eqref{eq:dis_free2ie} into \eqref{eq:dis_entropyTot1ie} yields
	\begin{equation}\label{dis_entropyTot2}
		\begin{aligned}
			\frac{\mathcal{S}_\alpha^{k+1}-\mathcal{S}_\alpha^k}{\Delta t} \geq  \left(\frac{\epsilon_{\alpha}^{k+1}-\epsilon_{\alpha}^k}{\Delta t}, \frac{1}{T_\alpha^{k+1}}\right)+\left(\mu_\alpha^{k+1} \nabla \cdot\left(n^{k+1} \mathbf{v}_{\alpha}^k\right), \frac{1}{T_\alpha^{k+1}}\right).
		\end{aligned}
	\end{equation}
	From \eqref{eq:dis_energyii} and \eqref{eq:dis_energyee}, we can respectively obtain $\left(\frac{\epsilon_\alpha^{k+1}-\epsilon_{\alpha}^k}{\Delta t}, \frac{1}{T_\alpha^{k+1}}\right),\alpha=i,e.$
	Substituting them into \eqref{dis_entropyTot2}, we obtain the inequalities  \eqref{eq:dis_entropyi} and \eqref{eq:dis_entropye}.
	Finally, summing the ion and electron results, we get the total entropy \eqref{eq:dis_entropyall}.
	
\end{proof}

The semi-discrete system retains weak nonlinear coupling, despite the semi-implicit approach having largely decoupled. This is solved efficiently by the following fully decoupled, linearized iterative scheme, which builds upon the auxiliary velocity.
\begin{align}
	& \mathbf{u}^{k,l}_{\star} = \mathbf{u}^{k} -\frac{\Delta t}{n^{k}M_i}(n^{k+1,l}\nabla \mu^{k+1,l+1} +s_i^{k} \nabla T_i^{k+1,l}+s_e^{k} \nabla T_e^{k+1,l}), \label{eq:schemeM1} \\
	& \frac{n^{k+1,l+1}-n^{k}}{\Delta t} + \nabla \cdot (n^{k+1,l}\mathbf{u}_{\star}^{k,l}) = 0, \label{eq:schemeM2} \\
	&  n^{k} M_{i} \frac{\mathbf{u}^{k+1,l+1}-\mathbf{u}_{\star}^{k}}{\Delta t} +n^{k+1,l}M_i\mathbf{u}^{k,l}_{\star} \nabla \mathbf{u}^{k+1,l+1} =\nabla \cdot (\eta D(\mathbf{u}^{k+1,l+1})) \label{eq:schemeM3}\\
	&\quad + \nabla (\lambda \nabla \cdot \mathbf{u}^{k+1,l+1})+\mathbf{J}^{k+1,l}\times \mathbf{B}^{k+1,l} + M_{i} n^{k+1,l+1} \mathbf{g}^{k+1,l+1}.\nonumber
\end{align}
For the electromagnetic fields $\mathbf{E}$, $\mathbf{B}$, and $\mathbf{J}$, we have:
\begin{align}
	& \frac{\mathbf{B}^{k+1,l+1}-\mathbf{B}^{k}}{\Delta t} = -\nabla \times \mathbf{E}^{k+1,l+1}, \label{eq:schemeM7}\\
	& \mathbf{E}^{k+1,l+1} + \mathbf{u}^{k+1,l+1} \times \mathbf{B}^{k+1,l+1}= \frac{1}{n^{k+1,l}e}\left(\mathbf{R}^{k+1,l+1} + \mathbf{J}^{k+1,l+1} \times \mathbf{B}^{k}\right.\nonumber\\
	&\left.\quad-\left(n^{k+1,l}\nabla \mu^{k+1,l+1}_e+s^k_e\nabla T^{k+1,l}_e\right)-\nabla \cdot \pmb{\sigma}^{k+1,l+1}_{e}\right),\label{eq:schemeM8}\\
	& \mu_{0} \mathbf{J}^{k+1,l+1} = \nabla \times \mathbf{B}^{k+1,l+1},\label{eq:schemeM9}\\
	& \nabla \cdot \mathbf{B}^{k+1,l+1} = 0. \label{eq:schemeM10}
\end{align}
For the internal energy equations of ions and electrons, we have:
\begin{align}
	& \frac{\epsilon_i^{k+1,l+1}-\epsilon_i^{k}}{\Delta t} +\nabla \cdot \left( \mathbf{u}^{k}_{\star}(n^{k+1,l}\mu^{k+1,l+1}_i+s_i^kT_i^{k+1,l})\right)= -\nabla \cdot\mathbf{q}_i^{k+1,l+1} \label{eq:schemeM4}\\
	&\quad+\lambda|\nabla \cdot \mathbf{u}^{k+1,l+1}|^{2} + \frac{\eta}{2}\left|D(\mathbf{u}^{k+1,l+1})\right|^2+\mathbf{u}^{k}_{\star}\cdot(n^{k+1,l}\nabla \mu_i^{k+1,l+1}+s_i^k\nabla T_i^{k+1,l})\nonumber\\
	&\quad+\frac{1}{2 \Delta t} n^{k} M_{i}\left( |\mathbf{u}^{k+1,l+1}-\mathbf{u}^{k}_{\star}|^{2} + |\mathbf{u}^{k}_{\star}-\mathbf{u}^{k}|^{2}\right),\nonumber \\
	&  \frac{\epsilon^{k+1,l+1}_{e}-\epsilon^{k}_{e}}{\Delta t} +\nabla \cdot \left( \mathbf{u}^{k}_{\star}(n^{k+1,l}\mu^{k+1,l+1}_e+s_e^kT_e^{k+1,l})\right) = -\nabla \cdot\mathbf{q}^{k+1,l+1}_{e}\label{eq:schemeM5}\\ 
	&\quad +\xi_R|\mathbf{J}^{k+1,l+1}|^2+ \xi_e\xi_R\left|\nabla\left(\frac{\mathbf{J}^{k+1,l+1}}{n^{k+1,l} e}\right) \right|^2 +\mathbf{u}^{k}_{\star}\cdot(n^{k+1,l}\nabla \mu_e^{k+1,l+1}+s_e^k\nabla T_e^{k+1,l})\nonumber\\
	&\quad+\nabla\cdot\left(\frac{\mathbf{J}^{k+1,l+1}}{n^{k+1,l}e}(n^{k+1,l}\mu^{k+1,l+1}_e+s^k_eT^{k+1,l}_e)\right)+\frac{1}{2\mu_0\Delta t}|\mathbf{B}^{k+1,l+1}-\mathbf{B}^k|^2\nonumber\\
	&\quad-\frac{\mathbf{J}^{k+1,l+1}}{n^{k+1,l}e}\cdot(n^{k+1,l}\nabla \mu^{k+1,l+1}_e+s^k_e\nabla T^{k+1,l}_e).\nonumber
\end{align}
Here, the superscripts $l$ and $l+1$ denote the $l$th and $(l+1)$th iterations, respectively.
The variables $\mu_\alpha^{k+1,l+1}$ and $\epsilon_{\alpha}^{k+1,l+1}$ $(\alpha=i,e)$ are defined as:
\begin{align}
	\mu_{\alpha}^{k+1,l+1} =&\mu_{\alpha}(n^{k+1,l},T_{\alpha}^{k+1,l})+\frac{\partial\mu_{\alpha}}{\partial n}(n^{k+1,l},T_{\alpha}^{k+1,l})(n^{k+1,l+1}-n^{k+1,l}),\label{eq:schmemu}\\
	\epsilon^{k+1,l+1}_{\alpha}=& \epsilon_\alpha(n^{k+1,l+1},T_{\alpha}^{k+1,l})+\frac{\partial\epsilon_{\alpha}}{\partial T_{\alpha}}(n^{k+1,l+1},T_{\alpha}^{k+1,l})(T_{\alpha}^{k+1,l+1}-T_{\alpha}^{k+1,l}).\label{eq:schmemuepe}
\end{align} 

The iterative method \eqref{eq:schemeM1}--\eqref{eq:schemeM5} is implemented via the following decoupled procedure:

The implementation follows a sequential, decoupled strategy. First, a unique solution for $n^{k+1,l+1}$ is obtained from the linear elliptic system formed by combining \eqref{eq:schemeM1} and \eqref{eq:schemeM2}, utilizing the linearity of \eqref{eq:schmemu}. This, in turn, makes \eqref{eq:schemeM3} a linear elliptic problem for $\mathbf{u}^{k+1,l+1}$, guaranteeing a unique solution. The electromagnetic fields $\mathbf{E}^{k+1,l+1}$, $\mathbf{B}^{k+1,l+1}$, and $\mathbf{J}^{k+1,l+1}$ are then computed explicitly from \eqref{eq:schemeM7}--\eqref{eq:schemeM10}—though their full 3D coupling is complex and deferred, the 2D well-posedness is treated next. Finally, the linear dependence in \eqref{eq:schmemuepe} ensures that \eqref{eq:schemeM4}--\eqref{eq:schemeM5} become uniquely solvable linear elliptic equations for $T_i^{k+1,l+1}$ and $T_e^{k+1,l+1}$.

\section{Fully Discrete Error Analysis for 2D Degenerate Equations}
\label{sec:2Dmodel}
While the preceding section established a thermodynamically consistent semi-implicit scheme for the full 3D two-fluid MHD system, a complete 3D error analysis remains prohibitively complex. We therefore focus on a 2D degenerate formulation that retains the essential thermodynamic and electromagnetic structure of the full model while facilitating a rigorous theoretical analysis.

In the 2D formulation, the ion velocity $\mathbf{u}=(u_1,u_2)^T$ and magnetic field $\mathbf{B}=(B_1,B_2)^T$ are confined to the plane, while the electric field $E$ and current density $J$ become scalar quantities normal to it. Together with the scalar variables for plasma density $n$, ion temperature $T_i$, and electron temperature $T_e$, these variables collectively define a reduced two-fluid MHD system that is consistent with the previously established thermodynamic framework. The weak formulation of this system is presented below, from which the corresponding strong form can be directly recovered.
\subsection{Finite element space}
We define the functional spaces for the relevant variables as follows: (\romannumeral 1) $Q := H^1(\Omega)$, used for scalar variables $n, T_i, T_e$; (\romannumeral 2) $V := H_0^1 (\Omega)=\left\{v \in H^1(\Omega): v=0 \text { on } \partial \Omega\right\} $, used for scalar variables $E$ and $J$; (\romannumeral 3) $\mathbf{V} := [H_0^1 (\Omega)]^2$, used for the velocity field $\mathbf{u}$; (\romannumeral 4) $\mathbf{W}:= H_0(\text{div}, \Omega)=\{\boldsymbol{w} \in H(\operatorname{div}, \Omega), \boldsymbol{w} \cdot \boldsymbol{n}=0 \text { on } \partial \Omega\}$, used for magnetic field $\mathbf{B}$. 
We adopt the standard Sobolev norms:
\begin{equation*}
	\begin{aligned}
		\|u\|_m=\left[\int_{\Omega} \sum_{\alpha \leq m}\left|D^\alpha u\right|^2 d \mathbf{x}\right]^{\frac{1}{2}},\quad
		\|u\|_{\infty}=\sup _{\mathbf{x} \in \Omega}|u(\mathbf{x})|.
	\end{aligned}
\end{equation*}
For any $\mathbf{C}\in \mathbf{W}$, we use the norm $\|\mathbf{C}\|_\mathbf{W}=(||\mathbf{C}\|^2+\|\nabla\cdot \mathbf{C}\|^2)^{\frac{1}{2}}.$
For sequences of functions $\left\{w^{k+1}(\mathbf{x})\right\}$ associated with time levels $k=0,1,2, \ldots, N-1$, we introduce the following notations:
\begin{equation*}
	\begin{aligned}
		\triplenorm{w}_{\infty, m}:=\max _{0 \leq k \leq N-1}\left\|w^{k+1}\right\|_m ,\quad
		\triplenorm{w}_{0, m}:=\left[\sum_{k=0}^{k=N-1}\left\|w^{k+1}\right\|_m^2 \Delta t\right]^{\frac{1}{2}}.
	\end{aligned}
\end{equation*}
Similarly, for a continuous function $w(\mathbf{x}, t)$ defined on the  time interval $(0, \mathrm{~T})$, we define:
\begin{equation*}
	\begin{aligned}
		\|w\|_{\infty, m}:=\sup _{0<t<T}\|w(\cdot, t)\|_m ,\quad
		\|w\|_{0, m}:=\left[\int_0^T\|w(\cdot, t)\|_m^2 d t\right]^{\frac{1}{2}}.
	\end{aligned}
\end{equation*}

Let $\Omega_h$ be a uniformly shape-regular triangulation of the polygonal domain $\Omega$ with maximal element diameter no greater than $h$.
The associated finite element spaces are defined as follows: (\romannumeral 1) $Q_h\subset Q$: Lagrange element of degree $m_1$; (\romannumeral 2) $\mathbf{V}_h\subset\mathbf{V}$: vector-valued Lagrange elements of degree $m_2$; (\romannumeral 3) $V_h\subset V$: Lagrange element of degree $m_3$; (\romannumeral 4) $\mathbf {W} _h \subset \mathbf {W} $: Raviart-Thomas (RT) elements of degree $m_4 $(denoted $RT_{m_4}$). We assume the following standard approximation properties \cite{boffi2013mixed,Liu2000}:
\begin{align}\label{eq:gamma}
	\inf _{v \in V_h}\|u-v\|_s &\leq C h^{m_1+1-s}\|u\|_{m_1+1}, \forall u \in  H^{m_1+1}(\Omega); \nonumber\\
	\inf _{\mathbf{v} \in \mathbf{V}_h}\|\mathbf{u}-\mathbf{v}\|_s &\leq C h^{m_2+1-s}\|\mathbf{u}\|_{m_2+1}, \forall \mathbf{u} \in \mathbf{V} \cap\left(H^{m_2+1}(\Omega)\right)^2 ;\nonumber\\
	\inf _{v \in V_h}\|u-v\|_s &\leq C h^{m_3+1-s}\|u\|_{m_3+1}, \forall u \in V \cap H^{m_3+1}(\Omega) ;\nonumber\\
	\inf _{\mathbf{C} \in \mathbf{W}_h}\|\mathbf{B}-\mathbf{C}\|_s &\leq C h^{m_4+1-s}\|\mathbf{B}\|_{m_4+1}, \forall \mathbf{B} \in \mathbf{W} \cap\left(H^{m_4+1}(\Omega)\right)^2;\nonumber\\
	\inf _{\mathbf{C} \in \mathbf{W}_h}\|\operatorname{div}(\mathbf{B}-\mathbf{C})\| &\leq C h^{m_4+1}\|\operatorname{div} \mathbf{B}\|_{m_4+1}, \forall \mathbf{B} \in \mathbf{W} \cap\left(H^{m_4+1}(\Omega)\right)^2,
\end{align}
where, $m_1, m_2, m_3 \geq 1, m_4 \geq 0$ and $s\in{0,1}$.

\begin{remark}\label{rem:deRham}
	The finite element spaces $V_h$ and $\mathbf{W}_h$ satisfy the following inclusion property:
	\begin{equation*}
		\begin{aligned}
			\nabla \times V_h \subset \mathbf{W}_h,
		\end{aligned}
	\end{equation*}
	where the two-dimensional curl operator is defined by  $\nabla \times v:=\left(\frac{\partial v}{\partial y},-\frac{\partial v}{\partial x}\right)$. This inclusion forms part of the discrete 2D de Rham complex \cite{boffi2013mixed}: $V_h \xrightarrow{\nabla \times} \mathbf{W}_h\xrightarrow{\nabla \cdot} S_h,$
	where $S_h \subset L_0^2(\Omega)$ denotes a suitable piecewise polynomial space (e.g., piecewise constants). A typical example is the pair $V_h=P_m$ (Lagrange elements of degree $m$) and $\mathbf{W}_h=RT_{m-1}$ (Raviart–Thomas elements of degree $m-1$).
\end{remark}

We observe that it is convenient to group the electromagnetic variables as $\boldsymbol{\xi}=(\mathbf{B},J)$, forming the mixed Sobolev space and its finite element counterpart:
$$\mathbf{X}^{\boldsymbol{\xi}}:=\mathbf{W}\times V,\quad \mathbf{X}^{\boldsymbol{\xi}}_h:=\mathbf{W}_h\times V_h.$$
We also introduce the finite element subspace:
$$\mathbf{X}_h^{\boldsymbol{\xi}, 0}=\{\mathbf{C}\in \mathbf{W}_h,G\in V_h: -\mu_0(G,F)+(\mathbf{C},\nabla\times F)=0,\quad \forall F\in V_h\}.$$
All variables collectively belong to the space $(n,\mathbf{u},T_i,T_e,E,\mathbf{B},J)\in \mathbf{X}$,
where
\begin{equation*}
	\begin{aligned}
		\mathbf{X}:=Q\times \mathbf{V}\times Q\times Q\times V \times \mathbf{X}^{\boldsymbol{\xi}}.
	\end{aligned}
\end{equation*}
For $\boldsymbol{\eta}=(\mathbf{C},G)\in \mathbf{X}^{\boldsymbol{\xi}}$ and $F\in V$, we define the bilinear form $\mathbf{b}(\cdot,\cdot)$ on $\mathbf{X}^{\boldsymbol{\xi}}\times V$ by:
\begin{equation*}
	\begin{aligned}
		\mathbf{b}(\boldsymbol{\eta},F)=-\mu_0(G,F)+(\mathbf{C},\nabla\times F).
	\end{aligned}
\end{equation*}
To simplify the notation in the equations, we also define  $a(\mathbf{u},\mathbf{v})=\eta(\nabla \mathbf{u},\nabla \mathbf{v})+(\eta+\lambda)(\nabla\cdot \mathbf{u},\nabla\cdot \mathbf{v}).$

\subsection{Variational Formulation of 2D Equations}
We now give the variational formulation of the 2D degenerate system.
For any $(w,\mathbf{v},q,g,F,\mathbf{C}, G) \in\mathbf{X}$, find $(n,\mathbf{u},T_i,T_e,E,\mathbf{B},J)\in\mathbf{X}$ such that
\begin{subequations}
	\begin{align}
		&\left(\frac{\partial n}{\partial t},w\right) -(n \mathbf{u},\nabla w) = 0 , \label{eq:2Dcontinue1}\\
		& M_{i}\left(n\frac{\partial \mathbf{u}}{\partial t} ,\mathbf{v}\right)+M_i( n\mathbf{u} \cdot \nabla{\mathbf{u}} ,\mathbf{v})=(- n \nabla \mu - s_i \nabla T_i-s_e\nabla T_e,\mathbf{v})+ (J\times \mathbf{B} ,\mathbf{v})\nonumber\\
		&\quad+ M_{i} (n \mathbf{g} ,\mathbf{v})-a(\mathbf{u},\mathbf{v}),  \label{eq:2Dcontinue2} \\
		&\left(\frac{\partial \epsilon_i}{\partial t},q\right) - ( \mathbf{u} (n\mu_i+s_iT_i),\nabla q)+\kappa_i(n\nabla T_i,\nabla q)-(\mathbf{u}\cdot(n\nabla \mu_i+s_i\nabla T_i),q)\nonumber\\
		&\quad+ (\pmb{\sigma}_{i} :\nabla \mathbf{u},q)=0, \label{eq:2Dcontinue31}\\
		& \left(\frac{\partial \epsilon_{e}}{\partial t} ,g\right)-( \mathbf{u} (n\mu_e+s_eT_e) ,\nabla g)+ \kappa_e (n\nabla T_e,\nabla g)-(\mathbf{u}\cdot(n\nabla \mu_e+s_e\nabla T_e) ,g)\nonumber\\
		&\quad-\left( \nabla\big(\frac{J}{n e}\big) : \pmb{\sigma}_{e},g\right)-\left(\frac{J}{n e} \cdot  R,g\right)=0, \label{eq:2Dcontinue32}
		\\
		& \left(\frac{\partial \mathbf{B}}{\partial t}, \mathbf{C}\right)-\mu_0(\mathbf{u}\times\mathbf{B},  G)+\mu_0\xi_R(J,G)+\mu_0\left(\pmb{\sigma}_{e},\nabla\big(\frac{G}{ne}\big)\right)\nonumber\\
		&\quad+\mathbf{b}((\mathbf{C},G),E)-\mathbf{b}((\mathbf{B},J),F)=0 . \label{eq:2Dcontinue4}
	\end{align}
\end{subequations}

The fully discrete scheme is given by the following variational equation. Define the backward difference operator as $d_t u^{k+1} = (u^{k+1} - u^k) / \Delta t$. For each $k = 0, 1, \ldots, N-1$, the solution $\left(n_h^{k+1},\mathbf{u}_h^{k+1},T_{ih}^{k+1},T_{eh}^{k+1},E_h^{k+1},\mathbf{B}_h^{k+1},J_h^{k+1}\right) \in \mathbf{X}_h$ is required to satisfy the following for all test functions $\left(w_h,\mathbf{v}_h,q_h,g_h,F_h,\mathbf{C}_h,G_h\right) \in \mathbf{X}_h$:
\begin{subequations}
	\begin{align}
		& \textbf{Mass equation}\nonumber\\[-2pt]
		&(d_t n_h^{k+1},w_h) - (n_h^{k+1}\mathbf{u}^k_{\star h},\nabla w_h) = 0, \label{eq:2Ddiscrete1}\\
		& \textbf{Momentum equation} \nonumber\\[-2pt]
		& M_i(n_h^k d_t\mathbf{u}_h^{k+1},\mathbf{v}_h)
		+ M_i(n_h^{k+1}\mathbf{u}^k_{\star h}\cdot\nabla\mathbf{u}_h^{k+1},\mathbf{v}_h)
		= (J_h^{k+1}\times\mathbf{B}_h^{k+1},\mathbf{v}_h) \label{eq:2Ddiscrete22}\\
		& \quad - (n_h^{k+1}\nabla\mu_h^{k+1} + s_{ih}^k\nabla T_{ih}^{k+1} + s_{eh}^k\nabla T_{eh}^{k+1},\mathbf{v}_h)
		+ M_i(n_h^{k+1}\mathbf{g},\mathbf{v}_h)
		- a(\mathbf{u}_h^{k+1},\mathbf{v}_h), \nonumber\\
		& \textbf{Ion internal energy equation} \nonumber\\[-2pt]
		& (d_t\epsilon_{ih}^{k+1},q_h)
		- (\mathbf{u}^k_{\star h}(n_h^{k+1}\mu_{ih}^{k+1}+s_{ih}^kT_{ih}^{k+1}),\nabla q_h)
		+ \kappa_i(n_h^k\nabla T_{ih}^{k+1},\nabla q_h) \label{eq:2Ddiscrete31}\\
		& \quad + (\pmb{\sigma}_{ih}^{k+1}:\nabla\mathbf{u}_h^{k+1},q_h)
		- (\mathbf{u}^k_{\star h}\cdot(n_h^{k+1}\nabla\mu_{ih}^{k+1}+s_{ih}^k\nabla T_{ih}^{k+1}),q_h) \nonumber\\
		& \quad - \tfrac{M_i}{2\Delta t}(n_h^k(|\mathbf{u}_h^{k+1}-\mathbf{u}^k_{\star h}|^2+|\mathbf{u}^k_{\star h}-\mathbf{u}_h^k|^2),q_h)=0, \nonumber\\
		& \textbf{Electron internal energy equation} \nonumber\\[-2pt]
		& (d_t\epsilon_{eh}^{k+1},g_h)
		- (\mathbf{u}^k_{\star h}(n_h^{k+1}\mu_{eh}^{k+1}+s_{eh}^kT_{eh}^{k+1}),\nabla g_h)
		+ \kappa_e(n_h^k\nabla T_{eh}^{k+1},\nabla g_h) \label{eq:2Ddiscrete32}\\
		& \quad - (\mathbf{u}^k_{\star h}\cdot(n_h^{k+1}\nabla\mu_{eh}^{k+1}+s_{eh}^k\nabla T_{eh}^{k+1}),g_h)
		- (\tfrac{1}{2\mu_0\Delta t}|\mathbf{B}_h^{k+1}-\mathbf{B}_h^k|^2,g_h) \nonumber\\
		& \quad - (\tfrac{J_h^{k+1}}{n_h^{k+1}e}\cdot R_h^{k+1},g_h)
		- (\nabla(\tfrac{J_h^{k+1}}{n_h^{k+1}e}):\pmb{\sigma}_{eh}^{k+1},g_h)=0, \nonumber\\
		& \textbf{Maxwell equation} \nonumber\\[-2pt]
		& (d_t\mathbf{B}_h^{k+1},\mathbf{C}_h)
		- \mu_0(\mathbf{u}_h^{k+1}\times\mathbf{B}_h^{k+1},G_h)
		+ \mu_0\xi_R(J_h^{k+1},G_h)
		+ \mu_0(\pmb{\sigma}_{eh}^{k+1},\nabla(\tfrac{G_h}{n_h^{k+1}e})) \nonumber\\
		& \quad + \mathbf{b}((\mathbf{C}_h,G_h),E_h^{k+1})
		- \mathbf{b}((\mathbf{B}_h^{k+1},J_h^{k+1}),F_h) = 0. \label{eq:2Ddiscrete4}
	\end{align}
\end{subequations}
The auxiliary quantities are defined by
\begin{equation*}
	\begin{aligned}
		& \mathbf{u}^k_{\star h} = \mathbf{u}_h^k - \tfrac{\Delta t}{n_h^k M_i}(n_h^{k+1}\nabla\mu_h^{k+1}
		+ s_{ih}^k\nabla T_{ih}^{k+1} + s_{eh}^k\nabla T_{eh}^{k+1}),\\
		& \mu_h^{k+1} = \mu_{ih}^{k+1} + \mu_{eh}^{k+1},\quad
		\mu_{\alpha h}^{k+1} = k_B T_{\alpha h}^{k+1}(\ln n_h^{k+1}+1)
		- \tfrac{k_B}{r-1}(T_{\alpha h}^{k+1})^2,\\
		& s_{\alpha h}^k = -k_B n_h^k \ln n_h^k + \tfrac{2k_B}{r-1}T_{\alpha h}^k n_h^k,\quad
		\epsilon_{\alpha h}^{k+1} = \tfrac{k_B}{r-1}(T_{\alpha h}^{k+1})^2 n_h^{k+1},\\
		& \pmb{\sigma}_{eh}^{k+1} = \xi_e\xi_R\nabla(\tfrac{J_h^{k+1}}{n_h^{k+1}e}),\quad
		R_h^{k+1} = \xi_R n_h^{k+1} e J_h^{k+1}.
	\end{aligned}
\end{equation*}

\subsection{Well-posedness of the EBJ System}
\label{sec:inf-sup} 
The well-posedness analysis for the overall decoupled system reduces to that of the Maxwell subsystem \eqref{eq:2Ddiscrete4}, since all other subproblems are standard linear elliptic equations whose existence and uniqueness are assured.

Let $\boldsymbol{\xi}=(\mathbf{B},J)$ and $\boldsymbol{\eta}=(\mathbf{C},G)\in \mathbf{X}^{\boldsymbol{\xi}}_h$.
We define the bilinear functional $\mathbf{a}(\cdot,\cdot)$ on $\mathbf{X}^{\boldsymbol{\xi}}_h\times \mathbf{X}^{\boldsymbol{\xi}}_h$  by:
$$\mathbf{a}(\boldsymbol{\xi},\boldsymbol{\eta})=\frac{1}{\Delta t}(\mathbf{B},\mathbf{C})-\mu_0(\mathbf{u}^{-}\times \mathbf{B},G)+\xi_R\mu_0(J,G)+\mu_0\Big(\pmb{\sigma}_e,\nabla\big(\frac{G}{n^-e}\big)\Big),$$
where $n^-$ and $\mathbf{u}^-$ are known from the previous iteration step.
Then, equation \eqref{eq:2Ddiscrete4} can be reformulated into the standard mixed (saddle-point) form:
\begin{ques}\label{andian_ques}
	For any $\boldsymbol{h}\in \mathbf{X}^{\boldsymbol{\xi},\ast}_h$ and $g\in V^\ast_h$, find $(\boldsymbol{\xi},E)\in \mathbf{X}^{\boldsymbol{\xi}}_h\times V_h$, such that for any $(\boldsymbol{\eta},F)\in  \mathbf{X}^{\boldsymbol{\xi}}_h\times V_h$
	$$
	\left\{\begin{array}{l}
		\boldsymbol{a}(\boldsymbol{\xi}, \boldsymbol{\eta})+\boldsymbol{b}(\boldsymbol{\eta}, E)=\langle\boldsymbol{h}, \boldsymbol{\eta}\rangle, \\
		\boldsymbol{b}(\boldsymbol{\xi}, F)=\langle g, F\rangle.
	\end{array}\right.
	$$
\end{ques}

According to Brezzi's theorem\cite{Brezzi1974,boffi2013mixed}, Problem \ref{andian_ques} admits a unique solution if the following conditions hold:
(\romannumeral 1) $\boldsymbol{a}(\cdot, \cdot)$ and $\boldsymbol{b}(\cdot, \cdot)$ are bounded;
(\romannumeral 2) the inf-sup condition holds for $\boldsymbol{a}(\cdot, \cdot)$ in the kernel of $\boldsymbol{b}(\cdot, \cdot)$;
(\romannumeral 3) the inf-sup condition holds for $\boldsymbol{b}(\cdot, \cdot)$.
We define the norms:
\begin{equation*}
	\begin{aligned}
		||\boldsymbol{\xi}||^2_\mathbf{X}=||\mathbf{B}||^2_\mathbf{W}+||J||^2_{1,n},\quad ||J||_{1,n}^2=||J||^2+||\nabla (\tfrac{J}{n^-})||^2.
	\end{aligned}
\end{equation*}
Since the discrete field $\mathbf{B}_h$ satisfies $\nabla\cdot \mathbf{B}_h=0$, the norms $\|\cdot\|$ and $\|\cdot\|_\mathbf{W}$ are equivalent in our case \cite{Hu2017}.
\begin{lemma}[Equivalence of the weighted norm and the $H^1$ norm]
	Let $J \in V_h$ , and let the given function $n^- \in Q_h$ satisfy: (\romannumeral 1) $n^-\geq n_0>0;$
	(\romannumeral 2)$\|n^-\|_{L^{\infty}(\Omega)}\leq K$ and $\|\nabla n^-\|_{L^{\infty}(\Omega)} \leq K.$
	Then there exist constants $C_1, C_2>0$, depending only on $n_0$ and $K$, such that:
	$$
	C_1\|J\|_1 \leq\|J\|_{1, n} \leq C_2\|J\|_1,
	$$
	where $\|J\|_1:=\left(\|J\|^2+\|\nabla J\|^2\right)^{\frac{1}{2}}$ is the standard $H^1$ norm.
\end{lemma}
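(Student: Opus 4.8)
The plan is to reduce the equivalence to a single comparison between $\|\nabla(J/n^-)\|$ and $\|\nabla J\|$. Observe that the two norms differ only in their gradient term: $\|J\|_{1,n}^2=\|J\|^2+\|\nabla(J/n^-)\|^2$ and $\|J\|_1^2=\|J\|^2+\|\nabla J\|^2$ share the identical $\|J\|^2$ contribution. Hence it suffices to bound each gradient term by the other, up to a controlled multiple of $\|J\|$. Everything then follows from the quotient rule together with the pointwise hypotheses on $n^-$.

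First I would record the pointwise identity
\begin{equation*}
\nabla\left(\frac{J}{n^-}\right)=\frac{1}{n^-}\nabla J-\frac{J}{(n^-)^2}\nabla n^-,
\end{equation*}
which holds a.e. on each element: since $J\in V_h$ and $n^-\in Q_h$ are piecewise polynomial and $n^-\ge n_0>0$ guarantees the denominator never vanishes, the quotient $J/n^-$ is a smooth rational function on each simplex, so its gradient lies in $L^2$ and the product rule applies. The three factors $1/n^-\le 1/n_0$, $1/(n^-)^2\le 1/n_0^2$, and $|\nabla n^-|\le K$ are uniformly bounded in terms of $n_0$ and $K$ alone, which is what produces $h$-independent and $J$-independent constants.

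For the upper bound $\|J\|_{1,n}\le C_2\|J\|_1$ I would apply the triangle inequality to the identity, giving $\|\nabla(J/n^-)\|\le n_0^{-1}\|\nabla J\|+(K/n_0^2)\|J\|$, then square using $(a+b)^2\le 2a^2+2b^2$ to obtain $\|\nabla(J/n^-)\|^2\le 2n_0^{-2}\|\nabla J\|^2+2(K^2/n_0^4)\|J\|^2$. Adding $\|J\|^2$ yields $\|J\|_{1,n}^2\le C_2^2\|J\|_1^2$ with $C_2^2=\max(1+2K^2/n_0^4,\,2/n_0^2)$. For the lower bound $C_1\|J\|_1\le\|J\|_{1,n}$ I would rearrange the same identity as $\nabla J=n^-\nabla(J/n^-)+(\nabla n^-/n^-)J$ and estimate with $\|n^-\|_{L^\infty}\le K$ and $|\nabla n^-|/n^-\le K/n_0$, giving $\|\nabla J\|\le K\|\nabla(J/n^-)\|+(K/n_0)\|J\|$; squaring in the same way produces $\|J\|_1^2\le\max(1+2K^2/n_0^2,\,2K^2)\,\|J\|_{1,n}^2$, from which $C_1$ is read off.

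I do not anticipate any substantive obstacle: the computation is elementary and rests entirely on the three uniform bounds supplied by the hypotheses. The only point requiring care is the bookkeeping through the two $(a+b)^2\le 2a^2+2b^2$ splits, so that the resulting $C_1$ and $C_2$ are shown explicitly to depend solely on $n_0$ and $K$, and in particular not on the mesh size $h$ nor on $J$. This uniformity is precisely why the lower positivity floor $n_0$ and the $L^\infty$ control of $n^-$ and $\nabla n^-$ are assumed, and it is what makes the equivalence usable subsequently in the inf--sup analysis of Problem~\ref{andian_ques}.
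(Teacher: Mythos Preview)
Your proposal is correct and follows essentially the same approach as the paper: both use the quotient rule identity $\nabla(J/n^-)=n^{-1}\nabla J-(J/(n^-)^2)\nabla n^-$ for the upper bound and its rearrangement $\nabla J=n^-\nabla(J/n^-)+(J/n^-)\nabla n^-$ for the lower bound, combined with the triangle inequality and the uniform bounds on $n^-$ and $\nabla n^-$. Your version is slightly more explicit in tracking the constants and in justifying why the quotient rule applies on each element, but the argument is identical in substance.
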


\begin{proof}
	Applying the chain rule yields: $
	\nabla\left(\frac{J}{n^-}\right)=\frac{\nabla J}{n^-}-\frac{J}{(n^-)^2}  \nabla n^-,$ we obtain $	\left\|\nabla\left(\frac{J}{n^-}\right)\right\| \leq \frac{1}{n_0}\|\nabla J\|+\frac{K}{n_0^2}\|J\| $, that is $\|J\|_{1, n} \leq C_2\|J\|_1.$
	Conversely, rearranging gives: $\nabla J=n^- \nabla\left(\frac{J}{n^-}\right)+\frac{J }{n^-}\nabla n^-  $, which implies $\|J\|_1 \leq C_1\|J\|_{1, n}.$
	This completes the proof of the lemma.
\end{proof}

\begin{lemma}[Boundedness of bilinear forms]
	There exists a constant C such that
	$$
	\boldsymbol{a}(\boldsymbol{\xi}, \boldsymbol{\eta}) \leq C\|\boldsymbol{\xi}\|_\mathbf{X}\|\boldsymbol{\eta}\|_\mathbf{X}, \quad \boldsymbol{b}(\boldsymbol{\eta}, F) \leq C\|\boldsymbol{\eta}\|_\mathbf{X}\|F\|_1 ,
	$$
	where C depends on $\Omega,\mu_0,\xi_R,\|\mathbf{u}^-\|_{\infty},\Delta t,$ but is independent of the mesh size $h$.
	
\end{lemma}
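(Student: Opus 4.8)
The plan is to establish both continuity estimates by a term-by-term application of the Cauchy--Schwarz inequality, matching each bilinear contribution to the appropriate component of the composite norm $\|\cdot\|_\mathbf{X}$. Throughout I will use that $\|\mathbf{C}\|\le\|\mathbf{C}\|_\mathbf{W}$, that $\|G\|\le\|G\|_{1,n}$ (since $\|G\|_{1,n}^2=\|G\|^2+\|\nabla(G/n^-)\|^2$), and hence that every single-component norm is dominated by $\|\boldsymbol{\xi}\|_\mathbf{X}$ or $\|\boldsymbol{\eta}\|_\mathbf{X}$. The essential point for the claimed mesh independence is that none of the estimates invokes an inverse inequality: they rely only on Cauchy--Schwarz together with the $L^\infty$ control of the frozen coefficients $\mathbf{u}^-$ and $n^-$ (the latter positive by the hypothesis $n^-\ge n_0>0$ of the previous lemma), so the resulting constant depends only on the listed data and not on $h$.

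For $\mathbf{a}(\boldsymbol{\xi},\boldsymbol{\eta})$ I would treat the four terms separately. The mass-type term $\tfrac{1}{\Delta t}(\mathbf{B},\mathbf{C})$ and the resistive term $\xi_R\mu_0(J,G)$ are bounded immediately by $\tfrac{1}{\Delta t}\|\mathbf{B}\|_\mathbf{W}\|\mathbf{C}\|_\mathbf{W}$ and $\xi_R\mu_0\|J\|_{1,n}\|G\|_{1,n}$, respectively. For the hyper-resistivity term I substitute $\pmb{\sigma}_e=\xi_e\xi_R\nabla(J/(n^-e))$ and pull the constant charge $e$ out of each gradient, so that the term equals $\tfrac{\mu_0\xi_e\xi_R}{e^2}\bigl(\nabla(J/n^-),\nabla(G/n^-)\bigr)$; Cauchy--Schwarz then bounds it by $\tfrac{\mu_0\xi_e\xi_R}{e^2}\|J\|_{1,n}\|G\|_{1,n}$, which is precisely the weighted-gradient part of the $\|\cdot\|_{1,n}$ norm. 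The only term requiring extra care is the motional term $-\mu_0(\mathbf{u}^-\times\mathbf{B},G)$: in the 2D setting $\mathbf{u}^-\times\mathbf{B}$ is the out-of-plane scalar $u_1^-B_2-u_2^-B_1$, so I would estimate $|(\mathbf{u}^-\times\mathbf{B},G)|\le\|\mathbf{u}^-\|_\infty\|\mathbf{B}\|\|G\|$, the $L^\infty$ bound on the frozen velocity being exactly what keeps the constant finite. Summing the four bounds yields $\mathbf{a}(\boldsymbol{\xi},\boldsymbol{\eta})\le C\|\boldsymbol{\xi}\|_\mathbf{X}\|\boldsymbol{\eta}\|_\mathbf{X}$.

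For $\mathbf{b}(\boldsymbol{\eta},F)=-\mu_0(G,F)+(\mathbf{C},\nabla\times F)$, the first term is bounded by $\mu_0\|G\|\|F\|\le\mu_0\|\boldsymbol{\eta}\|_\mathbf{X}\|F\|_1$. For the curl term I would use the 2D identity $\nabla\times F=(\partial_y F,-\partial_x F)$, which gives $\|\nabla\times F\|=\|\nabla F\|\le\|F\|_1$, so that $(\mathbf{C},\nabla\times F)\le\|\mathbf{C}\|\,\|\nabla F\|\le\|\boldsymbol{\eta}\|_\mathbf{X}\|F\|_1$. Adding the two contributions gives the second stated bound.

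The argument is essentially routine, and the only genuine subtlety lies in the motional term: one must recognize that the 2D cross product reduces to a scalar and therefore pairs correctly against $G\in V_h$, and one must exploit the $L^\infty$ control of $\mathbf{u}^-$ (and, via the preceding lemma, the strict positivity $n^-\ge n_0>0$ that legitimizes writing $\nabla(G/n^-)$) to keep the continuity constant finite and $h$-independent. Collecting the constants from all terms produces a single $C$ depending on $\Omega$, $\mu_0$, $\xi_R$, $\|\mathbf{u}^-\|_\infty$ and $\Delta t$ (together with the fixed physical constants $\xi_e$, $e$ and the lower bound $n_0$), but not on the mesh size, which completes the proof.
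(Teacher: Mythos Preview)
Your proposal is correct and follows essentially the same approach as the paper: a term-by-term application of the Cauchy--Schwarz inequality, bounding each contribution to $\mathbf{a}$ and $\mathbf{b}$ by the appropriate components of $\|\cdot\|_\mathbf{X}$ and $\|\cdot\|_1$. Your treatment is in fact slightly more explicit than the paper's (which simply lists the six individual bounds without commentary), but the argument is identical in substance.
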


\begin{proof}
	By Cauchy-Schwarz inequality, we have
	\begin{align*}
		& (\mathbf{B},\mathbf{C})\leq \|\mathbf{B}\|_{\mathbf{W}}\|\mathbf{C}\|_{\mathbf{W}}, \quad (\mathbf{u}^-\times \mathbf{B},G)\leq \|\mathbf{u}^-\|_{\infty}\|\mathbf{B}\|_\mathbf{W}\|G\|_{1,n}, \\
		&(J,G)\leq \|J\|_{1,n} \|G\|_{1,n},\quad(\pmb{\sigma}_e,\nabla(\frac{G}{n^-e}))\leq\|J\|_{1,n}\|\mathbf{G}\|_{1,n}, \\
		& (G,F)\leq \|G\|_{1,n}\|F\|_1,\quad (\mathbf{C},\nabla \times F)\leq \|\mathbf{C}\|_\mathbf{W}\|F\|_1.
	\end{align*}
	Thus, each term is uniformly bounded, completing the proof.
	
\end{proof}

\begin{lemma}[Inf-sup condition for $\boldsymbol{a}(\cdot,\cdot)$]
	If the time step $\Delta t$ satisfies $\Delta t\leq \frac{min\{\xi_R\mu_0,\xi_e\xi_R\mu_0\}}{\|\mathbf{u}^-\|^2_{\infty}}$, then $\boldsymbol{a}(\cdot, \cdot)$ satisfies the inf-sup condition. Specifically, there exists $\gamma>0$ such that
	$$
	\begin{gathered}
		\inf _{\mathbf{0} \neq \boldsymbol{\xi} \in \mathbf{X}_h^{\xi, 0}} \sup _{0 \neq \boldsymbol{\eta} \in \mathbf{X}_h^{\xi, 0}} \frac{\boldsymbol{a}(\boldsymbol{\xi}, \boldsymbol{\eta})}{\|\boldsymbol{\xi}\|_ {\mathbf{X}}\|\boldsymbol{\eta}\|_{\mathbf{X}}} \geq \gamma, \quad
		\inf _{\mathbf{0} \neq \boldsymbol{\eta} \in \mathbf{X}_h^{\xi, 0}} \sup _{0 \neq \boldsymbol{\xi} \in \mathbf{X}_h^{\xi, 0}} \frac{\boldsymbol{a}(\boldsymbol{\xi}, \boldsymbol{\eta})}{\|\boldsymbol{\xi}\|_ {\mathbf{X}}\|\boldsymbol{\eta}\|_{\mathbf{X}} }\geq \gamma,
	\end{gathered}
	$$
	where $\gamma$  is independent of the mesh size $h$.
	
\end{lemma}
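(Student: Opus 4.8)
The plan is to prove that $\mathbf{a}(\cdot,\cdot)$ is coercive on the kernel $\mathbf{X}_h^{\xi,0}$ with respect to the norm $\|\cdot\|_\mathbf{X}$, and then to observe that coercivity of a (not necessarily symmetric) bilinear form on a subspace automatically yields both required inf--sup inequalities: for the first one takes $\boldsymbol{\eta}=\boldsymbol{\xi}$ in the supremum, giving $\sup_{\boldsymbol{\eta}}\mathbf{a}(\boldsymbol{\xi},\boldsymbol{\eta})/\|\boldsymbol{\eta}\|_\mathbf{X}\ge \mathbf{a}(\boldsymbol{\xi},\boldsymbol{\xi})/\|\boldsymbol{\xi}\|_\mathbf{X}\ge\gamma\|\boldsymbol{\xi}\|_\mathbf{X}$, and symmetrically $\boldsymbol{\xi}=\boldsymbol{\eta}$ handles the second. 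This reduces the entire statement to a single coercivity estimate, so I would concentrate all the work there.

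First I would test the form on the diagonal, evaluating $\mathbf{a}(\boldsymbol{\xi},\boldsymbol{\xi})$ for $\boldsymbol{\xi}=(\mathbf{B},J)$. This splits into three manifestly nonnegative contributions, namely $\tfrac{1}{\Delta t}\|\mathbf{B}\|^2$, $\xi_R\mu_0\|J\|^2$ and $\mu_0\xi_e\xi_R\|\nabla(\tfrac{J}{n^-e})\|^2$, together with the single indefinite cross term $-\mu_0(\mathbf{u}^-\times\mathbf{B},J)$. The core of the argument is to dominate this cross term. I would bound it by Cauchy--Schwarz and the $L^\infty$ control of $\mathbf{u}^-$, giving $|\mu_0(\mathbf{u}^-\times\mathbf{B},J)|\le\mu_0\|\mathbf{u}^-\|_\infty\|\mathbf{B}\|\,\|J\|$, and then apply Young's inequality with a parameter proportional to $\Delta t^{-1}$, charging a fraction of $\tfrac{1}{\Delta t}\|\mathbf{B}\|^2$ against the $\|\mathbf{B}\|$ factor and a fraction of the positive $J$-terms against the $\|J\|$ factor. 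The $\|J\|^2$ produced by Young can be absorbed either directly into $\xi_R\mu_0\|J\|^2$ or, via the Poincar\'e inequality on $V=H_0^1(\Omega)$ combined with the weighted-gradient term $\mu_0\xi_e\xi_R\|\nabla(\tfrac{J}{n^-e})\|^2$; this dual route is precisely why the threshold involves $\min\{\xi_R\mu_0,\xi_e\xi_R\mu_0\}$. Under the stated smallness condition on $\Delta t$, the residual coefficients of $\|\mathbf{B}\|^2$, $\|J\|^2$ and $\|\nabla(\tfrac{J}{n^-e})\|^2$ all remain strictly positive, yielding $\mathbf{a}(\boldsymbol{\xi},\boldsymbol{\xi})\ge c\big(\|\mathbf{B}\|^2+\|J\|^2+\|\nabla(\tfrac{J}{n^-e})\|^2\big)$ with $c>0$ independent of $h$.

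To finish I would upgrade this lower bound to the full $\|\cdot\|_\mathbf{X}$ norm. The $J$-part is immediate from the weighted-norm equivalence lemma proved above, since $\|J\|^2+\|\nabla(\tfrac{J}{n^-e})\|^2$ controls $\|J\|_{1,n}^2$ up to the constant $e$ and the constants $C_1,C_2$. The $\mathbf{B}$-part requires recovering $\|\mathbf{B}\|_\mathbf{W}^2=\|\mathbf{B}\|^2+\|\nabla\cdot\mathbf{B}\|^2$ from $\|\mathbf{B}\|^2$ alone; here I would invoke the equivalence of $\|\cdot\|$ and $\|\cdot\|_\mathbf{W}$ already noted for the divergence-free discrete magnetic field, which rests on the discrete de Rham structure $\nabla\times V_h\subset\mathbf{W}_h$ of Remark~\ref{rem:deRham}. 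Combining the two equivalences gives $\mathbf{a}(\boldsymbol{\xi},\boldsymbol{\xi})\ge\gamma\|\boldsymbol{\xi}\|_\mathbf{X}^2$, and hence both inf--sup bounds with the same $\gamma$, which is independent of $h$ because every constant entering the estimate ($\mu_0$, $\xi_R$, $\xi_e$, $\|\mathbf{u}^-\|_\infty$, the Poincar\'e constant, and $C_1,C_2,n_0,K$ from the weighted-norm lemma) is mesh-independent.

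I expect the main obstacle to be the bookkeeping in the Young's-inequality split: a single indefinite cross term must be balanced against three distinct positive terms in such a way that the resulting threshold comes out as the stated minimum, and the allocation between the zeroth-order $\xi_R\mu_0\|J\|^2$ term and the weighted-gradient $\xi_e\xi_R\mu_0$ term (through Poincar\'e) must be chosen consistently. A secondary but essential point is justifying the passage from $\|\mathbf{B}\|$-control to $\|\mathbf{B}\|_\mathbf{W}$-control, i.e.\ ensuring that the divergence-free reduction is legitimate for the elements over which the inf--sup is taken, so that coercivity in the $L^2$ part of $\mathbf{B}$ genuinely recovers the full $\mathbf{W}$-norm.
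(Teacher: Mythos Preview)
Your proposal is correct and follows essentially the same route as the paper: establish coercivity by testing on the diagonal $\boldsymbol{\eta}=\boldsymbol{\xi}$, control the single indefinite cross term $-\mu_0(\mathbf{u}^-\times\mathbf{B},J)$ via Cauchy--Schwarz and Young, and then pass to the full $\|\cdot\|_\mathbf{X}$ norm using the divergence-free equivalence $\|\mathbf{B}\|\simeq\|\mathbf{B}\|_\mathbf{W}$. The only noteworthy difference is how the $\min$ in the threshold is explained. The paper does not split into two absorption routes or invoke Poincar\'e; instead it bounds the cross term directly in the weighted norm, $|(\mathbf{u}^-\times\mathbf{B},J)|\le\|\mathbf{u}^-\|_\infty\|\mathbf{B}\|_\mathbf{W}\|J\|_{1,n}$, and then chooses the Young parameter $\epsilon_1=\tfrac12\min\{\xi_R\mu_0,\xi_e\xi_R\mu_0\}$ so that $\epsilon_1\|J\|_{1,n}^2=\epsilon_1\|J\|^2+\epsilon_1\|\nabla(J/n^-)\|^2$ is absorbed simultaneously into both positive $J$-terms. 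Your Poincar\'e detour is unnecessary (and would in fact require an extra smallness assumption to close, since passing from $\|\nabla(J/n^-)\|$ back to $\|\nabla J\|$ picks up a $\|J\|$ term); the direct absorption into $\xi_R\mu_0\|J\|^2$ that you mention first already suffices, and the paper's device of bounding by $\|J\|_{1,n}$ from the outset is what makes the $\min$ appear cleanly.
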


\begin{proof}
	We only prove the ﬁrst inf-sup condition. The second one follows by the same argument.
	Taking $\mathbf{C}=\mathbf{B},G=J$, we get
	\begin{align*}
		\boldsymbol{a}(\boldsymbol{\xi}, \boldsymbol{\eta})= & \Delta t^{-1}\|\boldsymbol{B}\|^2-\mu_0(\mathbf{u}^-\times \mathbf{B},J)+\xi_R\mu_0\|J\|^2+\xi_e\xi_R\mu_0\|\nabla (\tfrac{J}{n^-})\|^2.
	\end{align*}
	By the Cauchy-Schwarz inequality and Young’s inequality $ab\leq \epsilon a^2+\frac{1}{4\epsilon} b^2$(where $\epsilon>0$), we have:
	\begin{align*}
		|(\mathbf{u}^-\times \mathbf{B},J)|
		\leq  \|\mathbf{u}^-\|_{\infty}\|\mathbf{B}\|_\mathbf{W}\|J\|_{1,n}\leq \epsilon_1\|J\|^2_{1,n}+\frac{\|\mathbf{u}^-\|^2_{\infty}}{4\epsilon_1}\|\mathbf{B}\|^2_\mathbf{W}.
	\end{align*}
	Choosing $\epsilon_1=\frac{1}{2}\min\{\xi_R\mu_0,\xi_e\xi_R\mu_0\}$ and
	$\Delta t\leq \frac{2\epsilon_1}{\|\mathbf{u}^-\|^2_{\infty}},$
	we derive
	$$
	\boldsymbol{a}(\boldsymbol{\xi}, \boldsymbol{\eta}) \geq \frac{1}{2\Delta t}\|\mathbf{B}\|^2_\mathbf{W}+\frac{1}{2}\min\{\xi_R\mu_0,\xi_e\xi_R\mu_0\}\|J\|^2_{1,n}.
	$$
	Thus, there exists $\gamma_1$ such that for any $\boldsymbol{\xi} \in \mathbf{X}_h^{\boldsymbol{\xi}, 0}$, there exists $\boldsymbol{\eta} \in \mathbf{X}_h^{\xi, 0}$ satisfying 
	$$
	\boldsymbol{a}(\boldsymbol{\xi}, \boldsymbol{\eta}) \geq \gamma_1\|\boldsymbol{\xi}\|_\mathbf{X}^2.
	$$
	Since $\|\boldsymbol{\eta}\|_{\mathbf{X}} \leq C\|\boldsymbol{\xi}\|_{\mathbf{X}}$ , the conclusion follows.
	
\end{proof}

\begin{lemma}[Inf-sup condition for $\boldsymbol{b}(\cdot,\cdot)$]
	The bilinear form $\boldsymbol{b}(\cdot, \cdot)$ satisfies the inf-sup condition, that is, there exists a constant $\beta>0$, such that
	\begin{align}\label{eq:infsup}
		\inf _{F \in V_h} \sup _{\boldsymbol{\eta} \in \mathbf{X}^{\boldsymbol{\xi}}_h} \frac{\boldsymbol{b}(\boldsymbol{\eta}, F)}{\|\boldsymbol{\eta}\|_{\mathbf{X}}\left\|F\right\|_1} \geq \beta,
	\end{align}
	where $\beta$ is independent of the mesh size $h$.
\end{lemma}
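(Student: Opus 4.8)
The plan is to establish the inf-sup condition \eqref{eq:infsup} by the classical Fortin-type construction: for each fixed $F\in V_h$, I will exhibit an explicit test function $\boldsymbol{\eta}=(\mathbf{C},G)\in\mathbf{X}^{\boldsymbol{\xi}}_h$ whose $\mathbf{X}$-norm is controlled by $\|F\|_1$ and for which $\boldsymbol{b}(\boldsymbol{\eta},F)$ is bounded below by a multiple of $\|F\|_1^2$. The decisive structural ingredient is the discrete de Rham inclusion $\nabla\times V_h\subset\mathbf{W}_h$ recorded in Remark~\ref{rem:deRham}, which guarantees that $\nabla\times F$ is an admissible element of $\mathbf{W}_h$.

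Concretely, I would choose $\mathbf{C}=\nabla\times F$ and $G=-F$, so that $\boldsymbol{\eta}=(\nabla\times F,-F)\in\mathbf{X}^{\boldsymbol{\xi}}_h$. Substituting into the definition of $\boldsymbol{b}$ and using the two-dimensional identity $|\nabla\times F|^2=|\nabla F|^2$ (immediate from $\nabla\times F=(\partial_y F,-\partial_x F)$), one computes
\begin{align*}
	\boldsymbol{b}(\boldsymbol{\eta},F)=-\mu_0(-F,F)+(\nabla\times F,\nabla\times F)=\mu_0\|F\|^2+\|\nabla F\|^2\geq\min\{\mu_0,1\}\,\|F\|_1^2 .
\end{align*}

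Next I would bound $\|\boldsymbol{\eta}\|_{\mathbf{X}}$ from above. For the magnetic component, the planar identity $\nabla\cdot(\nabla\times F)=0$ collapses the $\mathbf{W}$-norm to $\|\mathbf{C}\|_{\mathbf{W}}^2=\|\nabla\times F\|^2=\|\nabla F\|^2$. For the scalar component, the preceding norm-equivalence lemma gives $\|G\|_{1,n}=\|F\|_{1,n}\leq C_2\|F\|_1$. Combining these,
\begin{align*}
	\|\boldsymbol{\eta}\|_{\mathbf{X}}^2=\|\nabla F\|^2+\|F\|_{1,n}^2\leq(1+C_2^2)\|F\|_1^2,
\end{align*}
so that $\|\boldsymbol{\eta}\|_{\mathbf{X}}\leq C\|F\|_1$ with $C=\sqrt{1+C_2^2}$ depending only on $n_0,K,\mu_0$, and hence independent of $h$. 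Dividing the lower bound by $\|\boldsymbol{\eta}\|_{\mathbf{X}}\|F\|_1$ yields
\begin{align*}
	\sup_{\boldsymbol{\eta}\in\mathbf{X}^{\boldsymbol{\xi}}_h}\frac{\boldsymbol{b}(\boldsymbol{\eta},F)}{\|\boldsymbol{\eta}\|_{\mathbf{X}}\|F\|_1}\geq\frac{\min\{\mu_0,1\}}{C}=:\beta>0,
\end{align*}
since the supremum over $\boldsymbol{\eta}$ can only exceed the ratio achieved by the particular choice constructed above; taking the infimum over $F$ then gives \eqref{eq:infsup}.

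I expect the only genuinely substantive point to be the admissibility of the test function: without the de Rham inclusion $\nabla\times V_h\subset\mathbf{W}_h$, the canonical choice $\mathbf{C}=\nabla\times F$ would not lie in the discrete space $\mathbf{W}_h$ and the construction would break down. Once that structure-preserving property is invoked, the remaining steps—the two elementary planar vector identities and one application of the norm-equivalence lemma—are routine and, crucially, uniform in the mesh size $h$.
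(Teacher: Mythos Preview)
Your proof is correct and follows essentially the same approach as the paper: construct $\boldsymbol{\eta}=(\nabla\times F,\,G)$ with $G$ a scalar multiple of $-F$, invoke the de Rham inclusion $\nabla\times V_h\subset\mathbf{W}_h$, and use the planar identity $\|\nabla\times F\|=\|\nabla F\|$ together with the norm equivalence. The only cosmetic difference is that the paper takes $G=-\tfrac{1}{\mu_0}F$ (yielding $\boldsymbol{b}(\boldsymbol{\eta},F)=\|F\|^2+\|\nabla F\|^2$) whereas you take $G=-F$, which is immaterial.
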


\begin{proof}
	For any $F\in V_h$, since $\nabla \times V_h\subset \mathbf{W}_h$, we  take $\mathbf{C}=\nabla \times F\in \mathbf{W}_h$, and  $G=-\frac{1}{\mu_0}F\in V_h.$ Then 
	$$
	\boldsymbol{b}(\boldsymbol{\eta},F) =\|F\|^2+\|\nabla \times F\|^2\geq C\|F\|_1^2,
	$$
	where we use the fact that $\|\nabla \times F\|^2=\|\partial_y F\|^2+\|\partial_x F\|^2=\|\nabla F\|^2.$ Moreover, $\|\boldsymbol{\eta}\|^2_{\mathbf{X}}=\|\nabla \times F\|_{\mathbf{W}}^2+\|F\|^2_{1,n}\leq C\|F\|^2_1$. Thus, the inf-sup condition holds with $\beta$ independent of $h$.
\end{proof}

\subsection{Error estimation}
To derive error estimates, we begin by positing several regularity and boundedness assumptions. We assume that the continuous problem \eqref{eq:2Dcontinue1}--\eqref{eq:2Dcontinue4} admits a unique solution obeying \eqref{eq:assume_conti}, and that the discrete problem \eqref{eq:2Ddiscrete1}--\eqref{eq:2Ddiscrete4} similarly possesses a unique solution, with analogous conditions given by \eqref{eq:assume_dis}:
\begin{align}
	&0<\underline{n}\leq n\leq \overline{n}\leq \infty\quad \text{and}\quad\|n \|_{\infty},\|\nabla n \|_{\infty},\|\mathbf{u}\|_{\infty},\left\|\mathbf{u}_t\right\|_{\infty},\|\nabla \mathbf{u}\|_{\infty},\|\nabla \mathbf{u}_t\|_{\infty},\label{eq:assume_conti}\\
	&\|T_\alpha \|_{\infty},\|\nabla T_\alpha \|_{\infty},\|E\|_{\infty},\|\mathbf{B}\|_{\infty},\|J\|_{\infty},\|\nabla J\|_{\infty}, \|\nabla J_t\|_{\infty} \leq M \quad\text{for } t\in [0,T],\nonumber\\
	&0\leq n_0\leq n^k_h\leq n_1\leq \infty\quad\text{and}\quad\left\|n_h^{k}\right\|_{\infty},\left\|\nabla n_h^{k}\right\|_{\infty},\left\|\mathbf{u}_h^{k}\right\|_{\infty},\left\|\nabla \mathbf{u}_h^{k}\right\|_{\infty},\label{eq:assume_dis}\\
	&\left\|T_{\alpha h}^{k}\right\|_{\infty},\left\|\nabla T_{\alpha h}^{k}\right\|_{\infty},\left\|\mathbf{B}_h^{k}\right\|_{\infty},\left\|J_h^{k}\right\|_{\infty},\left\|\nabla J_h^{k}\right\|_{\infty} \leq K \quad\text{for } 1\leq k\leq N .\nonumber
\end{align}

We further assume that the thermodynamic functions
$\mu_\alpha$, $s_\alpha$, and $\epsilon_\alpha$ are Lipschitz continuous with respect to $(n, T_\alpha)$, with Lipschitz constant $H$, and that
$\frac{\partial \epsilon_\alpha}{\partial T_\alpha}$ is uniformly positive over the domain of $T_\alpha$. That is, there exist constants $H>0$ and $H_0>0$ such that
\begin{align}\label{eq:assume_H}
	\left|\frac{\partial \mu_\alpha}{\partial n}\right|,\left|\frac{\partial \mu_\alpha}{\partial T_\alpha}\right|, \left|\frac{\partial s_\alpha}{\partial n}\right|,\left|  \frac{\partial s_\alpha}{\partial T_\alpha}\right|, \left|\frac{\partial \epsilon_\alpha}{\partial n}\right|,\left|\frac{\partial \epsilon_\alpha}{\partial T_\alpha}\right| \leq H,\quad \frac{\partial \epsilon_\alpha}{\partial T_\alpha}\geq H_0.
\end{align}

Define $\mu s_{\alpha h}^{k+1}:=\frac{1}{n^k_h}(n^{k+1}_h\nabla \mu_h^{k+1}+s^k_{ih}\nabla T^{k+1}_{ih}+s^k_{eh}\nabla T^{k+1}_{eh}).$
Under the above assumptions, we introduce unified upper bounds $M'$ and $K'$
for the continuous and discrete derived quantities such as the thermodynamic potentials, auxiliary velocity, and $\mu s_{\alpha h}^{k+1}$:
\begin{align}\label{eq:assume_mus}
	\|\mu^{k+1}_\alpha\|_{\infty},\|s^{k+1}_\alpha\|_{\infty},\|\epsilon_\alpha^{k+1}\|_{\infty},\|\nabla \mu^{k+1}_\alpha\|_{\infty}&\leq M',\\
	\|\mu^{k+1}_{\alpha h}\|_{\infty},\|s^{k+1}_{\alpha h}\|_{\infty},\|\nabla \mu^{k+1}_{\alpha h}\|_{\infty},\|\mathbf{u}^k_{\star h}\|_{\infty},\|\nabla\cdot\mathbf{u}^k_{\star h}\|_{\infty},\|\mu s_{\alpha h}^{k+1}\|_{\infty}&\leq K'.\nonumber
\end{align}

In order to derive the error estimates, we introduce a Maxwell mixed projection (analogous to the Stokes projection)\cite{Mao2025error,Bochev2006} $(\mathbf{B},J,E)$ as the pair $(\mathcal{R}_h\mathbf{B},\mathcal{R}_hJ,\mathcal{R}_hE)\in \mathbf{W}_h\times V_h\times V_h$ that solves
\begin{align}
	\mathbf{a}_1((\mathcal{R}_h\mathbf{B},\mathcal{R}_hJ),(\mathbf{C},G))+\mathbf{b}((\mathbf{C},G),\mathcal{R}_hE)&=\mathbf{a}_1((\mathbf{B},J),(\mathbf{C},G))+\mathbf{b}((\mathbf{C},G),E),\label{eq:proj:a}\\
	\mathbf{b}((\mathcal{R}_h\mathbf{B},\mathcal{R}_hJ),F)&=\mathbf{b}((\mathbf{B},J),F)\label{eq:proj:b},
\end{align}
for any $(\mathbf{C},G,F)\in \mathbf{W}_h\times V_h\times V_h$, where $\mathbf{a}_1((\mathbf{B},J),(\mathbf{C},G)):=(\mathbf{B},\mathbf{C})+\xi_R\mu_0(J,G)+\mu_0\xi_e\xi_R(\frac{1}{n^2}\nabla J,\nabla G)$.  Since $n$ is assumed to satisfy \eqref{eq:assume_dis}, $n^{-2}$ is uniformly bounded and independent of $h$ and $\Delta t$. Hence, $\mathbf{a}_1$ is both coercive and continuous on $\mathbf{W}_h\times V_h$, with constants depending only on
$n_0,n_1,K,\mu_0,\xi_e,\xi_R$. According to the Brezzi theory (see Section~\ref{sec:inf-sup}), the projection problem is well-posed, and the following approximation property holds:
\begin{align}
	&\|\mathbf{B}-\mathcal{R}_h \mathbf{B}\|+\|J-\mathcal{R}_h J\|_1+\|E-\mathcal{R}_h E\|_1\nonumber\\
	&\leq \left\{\inf _{\mathbf{C} \in \mathbf{W}_h}\|\mathbf{B}-\mathbf{C}\|+\inf _{G \in V_h}\|J-G\|_1+\inf _{F \in V_h}\|E-F\|_1\right\}.
\end{align}
Note that since the continuous solution satisfies the constraint
$\mathbf b((\mathbf B,J),F)=0,\forall F\in V_h$
(which corresponds to the weak Amp\`ere law), relation \eqref{eq:proj:b} implies that $(\mathcal R_h\mathbf B,\\ \mathcal R_h J)\in\mathbf{X}_h^{\xi,0}.$

We establish error estimates for the fully discrete scheme \eqref{eq:2Ddiscrete1}-\eqref{eq:2Ddiscrete4} as follows:

\begin{theorem}\label{error_estimates}
	Suppose the continuous variational system \eqref{eq:2Dcontinue1}-\eqref{eq:2Dcontinue4} admits an exact solution $(n,\mathbf{u},T_i,T_e,E,\mathbf{B},J)$ satisfying assumption \eqref{eq:assume_conti}. Let $(n_h,\mathbf{u}_h,T_{ih},\\
	T_{eh},E_h,\mathbf{B}_h,J_h)$ be the numerical solution of the fully discrete scheme \eqref{eq:2Ddiscrete1}-\eqref{eq:2Ddiscrete4} satisfying \eqref{eq:assume_dis}. Then there exist a constants $C>0$ such that, when $\Delta t\leq Ch$, the numerical solution converges to the exact solution as $h\to 0$, and the following error estimates:
	\begin{align}
		&\triplenorm{n_h - n}_{\infty,0}+\triplenorm{\mathbf{u}_h - \mathbf{u}}_{\infty,0}+\triplenorm{T_{ih} - T_i}_{\infty,0}+\triplenorm{T_{eh} - T_e}_{\infty,0}\nonumber\\
		&\quad+\triplenorm{E_h - E}_{0,1}+\triplenorm{\mathbf{B}_h - \mathbf{B}}_{\infty,0}+\triplenorm{J_h - J}_{0,0}\leq\mathcal{M}(\Delta t, h),\label{eq:error_M1}\\[0.3em]
		&\triplenorm{n_h - n}_{0,1}+\triplenorm{\mathbf{u}_h - \mathbf{u}}_{0,1}+\triplenorm{T_{ih} - T_i}_{0,1}+\triplenorm{T_{eh} - T_e}_{0,1}\nonumber\\
		&\quad+\triplenorm{E_h - E}_{0,1}+\triplenorm{\mathbf{B}_h - \mathbf{B}}_{0,0}+\triplenorm{J_h - J}_{0,1}\leq \mathcal{M}(\Delta t, h),\label{eq:error_M2}
	\end{align}
	where
	\begin{align}\label{eq:error_mathcalM}
		\mathcal{M}=&  C h^{ m_1}\left(\triplenorm{n}_{0,m_1+1}+\|n_t\|_{0,m_1}\right)+C h^{ m_2}\left(\triplenorm{\mathbf{u}}_{0,m_2+1}+\|\mathbf{u}_t\|_{0,m_2}\right) \nonumber\\
		&+C h^{ m_1}\left(\triplenorm{T_i}_{0,m_1+1}+\|T_{it}\|_{0,m_1}+\triplenorm{T_e}_{0,m_1+1}+\|T_{et}\|_{0,m_1}\right)\nonumber\\
		& +C h^{ m_3}\left(\triplenorm{J}_{0,m_3+1}+\triplenorm{E}_{0,m_3+1}\right)+C h^{ m_4+1}\left(\triplenorm{\mathbf{B}}_{0,m_4+1}+\|\mathbf{B}_t\|_{0,m_4+1}\right) \nonumber\\
		& +C  \Delta t\left(\left\|n_t\right\|_{\infty, 0}+\left\|T_{it}\right\|_{\infty, 0}+\left\|T_{et}\right\|_{\infty, 0}+\left\|\mathbf{u}_t\right\|_{\infty, 0}+\left\|\nabla \mathbf{u}_t\right\|_{\infty, 0}\right) \nonumber\\
		& +C  \Delta t\left(\left\|n_{t t}\right\|_{0,0}+\left\| \mathbf{u}_{t t}\right\| _{0,0}+\left\|\mathbf{B}_{t t}\right\|_{0,0}+\left\| T_{it t}\right\| _{0,0}+\left\| T_{et t}\right\|_{0,0}\right)\nonumber\\
		&+C\Delta t\left(\triplenorm{\mu s_{\alpha h}}_{\infty,0}+\triplenorm{d_t \mathbf{u}_h}^2_{4,L^4}+\triplenorm{d_t \mathbf{B}_h}^2_{4,L^4}+\triplenorm{\mu s_{\alpha h}}^2_{4,L^4}\right). 
	\end{align}
\end{theorem}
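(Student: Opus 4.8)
The plan is to carry out a standard projection-based energy argument for a coupled nonlinear mixed finite element discretization, organized around the thermodynamic structure already exploited in the scheme. First I would split each error through a suitable projection. For the electromagnetic triple $(\mathbf{B},J,E)$ I would use the Maxwell mixed projection $\mathcal{R}_h$ of \eqref{eq:proj:a}--\eqref{eq:proj:b}, writing $\mathbf{B}_h-\mathbf{B}=(\mathbf{B}_h-\mathcal{R}_h\mathbf{B})+(\mathcal{R}_h\mathbf{B}-\mathbf{B})=:\boldsymbol{\theta}_{\mathbf{B}}+\boldsymbol{\rho}_{\mathbf{B}}$, and analogously for $J$ and $E$; for the scalar and velocity unknowns $n,\mathbf{u},T_i,T_e$ I would use the standard Lagrange interpolant (or Ritz/$L^2$ projection), so that the projection remainders $\boldsymbol{\rho}_{(\cdot)}$ are controlled directly by the approximation properties \eqref{eq:gamma}, producing the spatial $h^{m_i}$ contributions in $\mathcal{M}$. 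A key structural gain is that \eqref{eq:proj:b} forces $(\mathcal{R}_h\mathbf{B},\mathcal{R}_hJ)\in\mathbf{X}_h^{\boldsymbol{\xi},0}$, so the discrete electromagnetic error $(\boldsymbol{\theta}_{\mathbf{B}},\theta_J)$ lives in the divergence-constrained subspace where the inf--sup lemmas of Section~\ref{sec:inf-sup} apply.

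Next I would form the error equations by subtracting the continuous weak system \eqref{eq:2Dcontinue1}--\eqref{eq:2Dcontinue4}, evaluated at $t_{k+1}$, from the discrete scheme \eqref{eq:2Ddiscrete1}--\eqref{eq:2Ddiscrete4}. Each resulting equation carries three kinds of defect: temporal truncation terms $\partial_t(\cdot)^{k+1}-d_t(\cdot)^{k+1}$, which Taylor expansion bounds by $\Delta t$ times second-time-derivative norms (the $\|n_{tt}\|_{0,0},\|\mathbf{u}_{tt}\|_{0,0},\dots$ entries of $\mathcal{M}$); defects in the time-lagged coefficients $s_{\alpha h}^k,n_h^k,\mathbf{u}^k_{\star h}$, contributing both $\Delta t$ and $h^{m_i}$ terms; and nonlinear mismatches from the thermodynamic potentials $\mu_\alpha,s_\alpha,\epsilon_\alpha$, the convection $n\mathbf{u}\cdot\nabla\mathbf{u}$, the Lorentz coupling $J\times\mathbf{B}$, and the $1/n$-weighted electron terms. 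The nonlinear mismatches I would linearize using the Lipschitz bounds \eqref{eq:assume_H} together with the uniform $L^\infty$ bounds \eqref{eq:assume_conti}, \eqref{eq:assume_dis}, \eqref{eq:assume_mus}, reducing them to products of errors.

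I would then test the error equations with the discrete error components themselves: $\theta_n$ in mass, $\boldsymbol{\theta}_{\mathbf{u}}$ in momentum, $\theta_{T_\alpha}$ in the internal-energy equations (using $\tfrac{\partial\epsilon_\alpha}{\partial T_\alpha}\ge H_0$ from \eqref{eq:assume_H} to dominate the $d_t\epsilon_\alpha$ error by the temperature error), and $(\boldsymbol{\theta}_{\mathbf{B}},\theta_J)$ in the Maxwell block, and sum. The viscous form $a(\cdot,\cdot)$ and the heat-diffusion terms $\kappa_\alpha(n\nabla T_\alpha,\nabla\cdot)$ supply coercive $H^1$ control of $\boldsymbol{\theta}_{\mathbf{u}}$ and $\theta_{T_\alpha}$, while the inf--sup estimate for $\boldsymbol{a}(\cdot,\cdot)$—valid under $\Delta t\le\min\{\xi_R\mu_0,\xi_e\xi_R\mu_0\}/\|\mathbf{u}^-\|_\infty^2$—controls $\|\boldsymbol{\theta}_{\mathbf{B}}\|_{\mathbf{W}}$ and $\|\theta_J\|_{1,n}$ through the weighted-norm equivalence lemma. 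After absorbing the linearized error products by Young's inequality into these coercive quantities, summing over $k$ and invoking the vanishing initial error, a discrete Gronwall inequality closes the estimate and yields \eqref{eq:error_M1}--\eqref{eq:error_M2}; the final total errors follow from the triangle inequality with the projection bounds.

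The hard part will be the quadratic source terms in the internal-energy equations, namely $|\mathbf{u}^{k+1}-\mathbf{u}^k_\star|^2$, $|\mathbf{u}^k_\star-\mathbf{u}^k|^2$, and $|\mathbf{B}^{k+1}-\mathbf{B}^k|^2$: although $O(\Delta t^2)$ in size, their error contributions are not bilinear, so bounding them forces $L^4$-type control of $d_t\mathbf{u}_h$ and $d_t\mathbf{B}_h$—precisely the $\triplenorm{d_t\mathbf{u}_h}^2_{4,L^4}$, $\triplenorm{d_t\mathbf{B}_h}^2_{4,L^4}$, and $\triplenorm{\mu s_{\alpha h}}^2_{4,L^4}$ entries of $\mathcal{M}$—and it is here that an inverse inequality combined with the mesh--time restriction $\Delta t\le Ch$ is needed to trade the extra discrete-error factor against $h^{-1}$ and keep the constant mesh-independent. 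The remaining bookkeeping obstacles are the $1/n$-weighted electron stress and Ohm's-law terms, which require the lower bounds $n\ge\underline{n}$ and $n_h^k\ge n_0$ from \eqref{eq:assume_conti}--\eqref{eq:assume_dis}, and the propagation of the auxiliary-velocity error through \eqref{eq:scheme2}, whereby the error in $\mathbf{u}^k_{\star h}$ inherits gradients of the $\mu$ and $T_\alpha$ errors and must be tracked consistently across all coupled equations.
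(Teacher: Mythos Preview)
Your overall strategy---projection splitting via the Maxwell mixed projection \eqref{eq:proj:a}--\eqref{eq:proj:b} for $(\mathbf{B},J,E)$ and Lagrange interpolants for $n,\mathbf{u},T_\alpha$, subtracting the continuous from the discrete system, testing with the discrete-error components, and closing by Gronwall---matches the paper's proof, and your identification of the quadratic source terms $|\mathbf{u}_h^{k+1}-\mathbf{u}^k_{\star h}|^2$, $|\mathbf{u}^k_{\star h}-\mathbf{u}_h^k|^2$, $|\mathbf{B}_h^{k+1}-\mathbf{B}_h^k|^2$ as the origin of the $L^4$ norms in $\mathcal{M}$ is exactly right.

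There is, however, a genuine gap in your plan concerning the electric field $E$, and it stems from a misreading of which inf--sup condition does the work. The Maxwell error equation is a saddle-point system in which $E$ plays the role of a Lagrange multiplier: when you test with $(\mathbf{C},G,F)=(\boldsymbol{\theta}_{\mathbf{B}},\theta_J,\theta_E)$, the two $\mathbf{b}$-terms containing $\theta_E$ do \emph{not} produce any coercive control of $\theta_E$; instead they cancel. The paper makes this cancellation explicit via the identity
\[
\mathbf{b}((\vartheta_{\mathbf{B}}^{k+1},\vartheta_J^{k+1}),\varepsilon_E^{k+1})-\mathbf{b}((\varepsilon_{\mathbf{B}}^{k+1},\varepsilon_J^{k+1}),\vartheta_E^{k+1})=\mathbf{b}((\vartheta_{\mathbf{B}}^{k+1},\vartheta_J^{k+1}),\gamma_E^{k+1}),
\]
which uses precisely that both $(\mathcal{R}_h\mathbf{B},\mathcal{R}_hJ)$ and $(\mathbf{B}_h,J_h)$ lie in $\mathbf{X}_h^{\boldsymbol{\xi},0}$. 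Consequently the main energy estimate (the paper's Step~1) yields bounds on $\vartheta_n,\vartheta_{\mathbf{u}},\vartheta_{T_\alpha},\vartheta_{\mathbf{B}},\vartheta_J$ but says nothing about $\vartheta_E$. Your plan invokes the inf--sup of $\boldsymbol{a}(\cdot,\cdot)$ to ``control $\|\boldsymbol{\theta}_{\mathbf{B}}\|_{\mathbf{W}}$ and $\|\theta_J\|_{1,n}$''---that lemma is used in the paper only for well-posedness, not in the error analysis; what you actually get from testing with $(\boldsymbol{\theta}_{\mathbf{B}},\theta_J)$ are the diagonal terms $(d_t\vartheta_{\mathbf{B}},\vartheta_{\mathbf{B}})$, $\xi_R\mu_0\|\vartheta_J\|^2$, and $\bigl(\tfrac{\mu_0\xi_R\xi_e}{(n_h^{k+1}e)^2}\nabla\vartheta_J,\nabla\vartheta_J\bigr)$ directly, not via an inf--sup argument.

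Recovering $\theta_E$ is a separate second step in the paper: one invokes the inf--sup condition for $\mathbf{b}(\cdot,\cdot)$ (not $\mathbf{a}$) to bound $\|\vartheta_E^{k+1}\|_1$ by the residual of the Maxwell error equation, which in turn involves $\|d_t\vartheta_{\mathbf{B}}^{k+1}\|$. That quantity is not provided by Step~1 and requires an additional energy estimate obtained by testing the Maxwell error equation with $(\mathbf{C},G)=(2\Delta t\,d_t\vartheta_{\mathbf{B}}^{k+1},2\Delta t\,d_t\vartheta_J^{k+1})$, together with a discrete summation-by-parts argument to handle the cross terms $(\mathbf{u}^{k+1}\times\varepsilon_{\mathbf{B}}^{k+1},d_t\vartheta_J^{k+1})$ and the $n$-weighted $\nabla J$ term. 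Your proposal omits this entire step, so as written it would not deliver the $\triplenorm{E_h-E}_{0,1}$ bound claimed in \eqref{eq:error_M1}--\eqref{eq:error_M2}.
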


\begin{proof}
	Let $$\mathbf{x}^{k+1}: =(n^{k+1},\mathbf{u}^{k+1},T^{k+1}_i,T^{k+1}_e,E^{k+1},\mathbf{B}^{k+1},J^{k+1})=(n,\mathbf{u},T_i,T_e,E,\mathbf{B},J)|_{t=t_{k+1}},$$ be the exact solution to Problem \eqref{eq:2Dcontinue1}-\eqref{eq:2Dcontinue4},  and let  $$\mathbf{x}_h^{k+1}: =(n_h^{k+1},\mathbf{u}_h^{k+1},T^{k+1}_{ih},T^{k+1}_{eh},E_h^{k+1},\mathbf{B}_h^{k+1},J_h^{k+1}),$$ be the numerical solution to Problem \eqref{eq:2Ddiscrete1}-\eqref{eq:2Ddiscrete4}.
	For each variable, we define the corresponding interpolant or projection as follows: $\mathcal{I}_h^{g} n^k,\mathcal{I}_h^g \mathbf{u}^k,$ 
	$\mathcal{I}^g_h T^k_i,\mathcal{I}_h^g T_e^k,$  denote the standard Lagrange interpolants of  $n^k,\mathbf{u}^k,T_i^k,T^k_e,$ onto the spaces $Q_h,\mathbf{V}_h,Q_h,Q_h,$ respectively; while 
	$(\mathcal{R}_h\mathbf{B}^k,\mathcal{R}_hJ^k,\mathcal{R}_hE^k)
	\in \mathbf{W}_h\times V_h\times V_h$ denotes the Maxwell mixed projection defined by \eqref{eq:proj:a}–\eqref{eq:proj:b}. 
	For any component $u_h^k\in\mathbf{x}_h^k$, define the total error, interpolation error, and discrete error by
	$$
	\varepsilon_u^k:=u^k-u_h^k, \quad
	\gamma_u^k:=u^k-\mathcal{I}_h u^k, \quad
	\vartheta_u^k:=\mathcal{I}_h u^k-u_h^k,
	$$
	so that $\varepsilon_u^k=\gamma_u^k+\vartheta_u^k$.
	For $(E,\mathbf{B},J)$, $\mathcal{I}_h$ is understood as the mixed projection $\mathcal{R}_h$.
	
	The proof proceeds in two steps.  
	In \textbf{Step1}, we estimate the error for 
	$(n_h^{k+1},\mathbf{u}_h^{k+1},\\ T_{ih}^{k+1},T_{eh}^{k+1},\mathbf{B}_h^{k+1},J_h^{k+1})$.
	In \textbf{Step2}, we derive the estimate for $E_h^{k+1}$ based on the Maxwell coupling.
	\paragraph{Step1(Error estimation of all variables in $\mathbf{x}^{k+1}_h$ but excluding $E^{k+1}_h$)}
	Since the interpolation error $\gamma$ satisfies the standard approximation bounds \eqref{eq:gamma},
	it suffices to focus on the discrete error $\vartheta$.  
	We first show that for any time layer $j$,
	\begin{align}\label{eq:error_vartheta}
		\left\|\vartheta_n^{j}\right\|^2+\left\|\vartheta_{\mathbf{u}}^{j}\right\|^2+\left\|\vartheta_{\mathbf{B}}^{j}\right\|^2+\left\|\vartheta_{T i}^{j}\right\|^2+\left\|\vartheta_{T e}^{j}\right\|^2+\triplenorm{\vartheta_J}^2_{0,0} \leq  \Delta,
	\end{align}
	where
	\begin{align}\label{eq:error_Delta}
		\Delta= & C h^{2 m_1}\left(\triplenorm{n}^2_{0,m_1+1}+\|n_t\|^2_{0,m_1}\right)+C h^{2 m_2}\left(\triplenorm{\mathbf{u}}^2_{0,m_2+1}+\|\mathbf{u}_t\|^2_{0,m_2}\right) \nonumber\\
		& +C h^{2 m_1}\left(\triplenorm{T_i}^2_{0,m_1+1}+\|T_{it}\|^2_{0,m_1}+\triplenorm{T_e}^2_{0,m_1+1}+\|T_{et}\|^2_{0,m_1}\right)\nonumber\\
		& +C h^{2 m_3}\left(\triplenorm{J}^2_{0,m_3+1}+\triplenorm{E}^2_{0,m_3+1}\right)+C h^{2 m_4+2}\left(\triplenorm{\mathbf{B}}^2_{0,m_4+1}+\|\mathbf{B}_t\|^2_{0,m_4+1}\right) \nonumber\\
		& +C  (\Delta t)^2\left(\left\|n_t\right\|_{\infty, 0}^2+\left\|T_{it}\right\|_{\infty, 0}^2+\left\|T_{et}\right\|_{\infty, 0}^2+\left\|\mathbf{u}_t\right\|_{\infty, 0}^2+\left\|\nabla \mathbf{u}_t\right\|_{\infty, 0}^2\right) \nonumber\\
		& +C  (\Delta t)^2\left(\left\|n_{t t}\right\|_{0,0}^2+\left\| \mathbf{u}_{t t}\right\| _{0,0}^2+\left\|\mathbf{B}_{t t}\right\|_{0,0}^2+\left\| T_{it t}\right\| _{0,0}^2+\left\| T_{et t}\right\|_{0,0}^2\right)\nonumber\\
		&+C(\Delta t)^2(\triplenorm{\mu s_{\alpha h}}^2_{\infty,0}+\triplenorm{d_t \mathbf{u}_h}^4_{4,L^4}+\triplenorm{d_t \mathbf{B}_h}^4_{4,L^4}+\triplenorm{\mu s_{\alpha h}}^4_{4,L^4}).
	\end{align}
	
	\paragraph{Proof of  \eqref{eq:error_vartheta}}
	To reduce the analytical complexity of the highly nonlinear and coupled term $\Big(\pmb{\sigma}^{k+1}_{e h},\nabla\big(\frac{G_h}{n_h^{k+1}}\big)\Big)$ in the fully discrete scheme \eqref{eq:2Ddiscrete4}, we rewrite it in a more tractable form.
	Recalling that $\pmb{\sigma}_{eh}^{k+1}=\xi_e\xi_R\nabla(\frac{J^{k+1}_h}{n^{k+1}_he})$, we expand the inner product as
	$$
	\begin{aligned}
		& \left(\pmb{\sigma}_{eh}^{k+1},\nabla(\frac{G}{n^{k+1}_he })\right)
		\\
		&=\frac{\xi_e\xi_R}{e^2}\left(\frac{1}{(n^{k+1}_h)^2}\nabla J^{k+1}_h,\nabla G\right)-\frac{\xi_e\xi_R}{e^2}\left(\frac{1}{(n^{k+1}_h)^3} J^{k+1}_h\nabla n^{k+1}_h,\nabla G\right)\\
		&\quad-\frac{\xi_e\xi_R}{e^2}\left(\frac{1}{(n^{k+1}_h)^3} \nabla J^{k+1}_h, G\nabla n^{k+1}_h\right)+\frac{\xi_e\xi_R}{e^2}\left(\frac{1}{(n^{k+1}_h)^4}  J^{k+1}_h\nabla n^{k+1}_h, G\nabla n^{k+1}_h\right).
	\end{aligned}
	$$
	When $n^{k+1}_h$ is sufficiently large and the density variation $\|\nabla n^{k+1}_h\|$ is small (typical for high-density, nearly uniform plasmas), the first term dominates.
	Under this physically justified assumption, we retain only the leading-order contribution:
	$$\left(\pmb{\sigma}_{eh}^{k+1},\nabla(\frac{G}{n^{k+1}_he })\right)\approx\xi_e\xi_R\left(\frac{1}{(n^{k+1}_he)^2}\nabla J^{k+1}_h,\nabla G\right).$$
	Analogously, for the continuous equation \eqref{eq:2Dcontinue4}, the corresponding term satisfies
	$$\left(\pmb{\sigma}_{e},\nabla(\frac{G}{ne })\right)\approx\xi_e\xi_R\left(\frac{1}{(ne)^2}\nabla J,\nabla G\right),$$
	where, $\pmb{\sigma}_e=\xi_e\xi_R\nabla(\frac{J}{ne})$ denotes the electron stress tensor.
	Consequently, the nonlinear diffusion terms in the iternal energy equations \eqref{eq:2Dcontinue32} and \eqref{eq:2Ddiscrete32} can be approximated by
	\begin{align}
		& \left(\nabla(\frac{J}{n e}): \pmb{\sigma}_{e},g\right)\approx\xi_e\xi_R\left(\frac{1}{(ne)^2}|\nabla J|^2,g\right), \nonumber\\
		&\left(\nabla(\frac{J_h^{k+1}}{n_h^{k+1} e}): \pmb{\sigma}^{k+1}_{e h},g_h\right)\approx\xi_e\xi_R\left(\frac{1}{(n_h^{k+1}e)^2}|\nabla J^{k+1}_h|^2,g_h\right) .\nonumber
	\end{align}
	
	With these simplifications, subtracting the discrete equations \eqref{eq:2Ddiscrete1}–\eqref{eq:2Ddiscrete4} from their continuous counterparts \eqref{eq:2Dcontinue1}–\eqref{eq:2Dcontinue4} yields the following error system:
	\begin{align}\label{eq:totalerror_nuT}
		\left(d_t \varepsilon_n^{k+1}, w\right)&=R_1(w), \nonumber\\
		M_i\left(n_h^{k} d_t \varepsilon_{\mathbf{u}}^{k+1}, \mathbf{v}\right)+M_i\left(n_h^{k+1} \mathbf{u}_{\star h}^{k} \cdot \nabla \varepsilon_{\mathbf{u}}^{k+1}, \mathbf{v}\right)+a\left(\varepsilon_{\mathbf{u}}^{k+1}, \mathbf{v}\right)&=R_2(\mathbf{v}),\nonumber\\
		(\partial_{Ti}\epsilon^{k+1}_id_t\varepsilon_{Ti}^{k+1},q)+\kappa_i(n_h^k\nabla \varepsilon_{Ti}^{k+1},\nabla q)&= R_3(q) ,\nonumber\\
		(\partial_{Te}\epsilon^{k+1}_e d_t\varepsilon_{T e}^{k+1}  ,g)+  \kappa_e(n_h^k\nabla \varepsilon_{T e}^{k+1},\nabla g)&=R_4(g),
	\end{align}
	and for the Maxwell subsystem,
	\begin{align}\label{eq:totalerror_EBJ}
		&\left(d_t \varepsilon_{\mathbf{B}}^{k+1}, \mathbf{C}\right)-\mu_0\left(\mathbf{u}_h^{k+1} \times \varepsilon_\mathbf{B}^{k+1},G\right)+\mu_0\xi_R(\varepsilon_J^{k+1},G)+\Big(\frac{\mu_0\xi_R\xi_e}{(n^{k+1}_he)^2}\nabla \varepsilon^{k+1}_J,\nabla G\Big)\nonumber\\
		&\quad+\mathbf{b}((\mathbf{C},G),\varepsilon^{k+1}_E)-\mathbf{b}((\varepsilon_\mathbf{B}^{k+1},\varepsilon_J^{k+1}),F)=R_5(\mathbf{C},G),
	\end{align}
	where
	\begin{align}\label{eq:exact_eq_R}
		R_1=&  \left(d_t n^{k+1}-n_t^{k+1}, w\right)+\left(n^{k+1} \mathbf{u}^{k+1}-n_h^{k+1} \mathbf{u}_{\star h}^{k}, \nabla w\right),\nonumber\\
		R_2=& M_i \left(n_h^{k} d_t \mathbf{u}^{k+1}-n^{k+1}\mathbf{u}_t^{k+1}, \mathbf{v}\right)+M_i\left((n_h^{k+1} \mathbf{u}_{\star h}^{k}-n^{k+1}\mathbf{u}^{k+1} )\cdot \nabla \mathbf{u}^{k+1}, \mathbf{v}\right)\nonumber \\
		&-(n^{k+1}\nabla \mu^{k+1} +s_{i}^{k+1} \nabla T_{i}^{k+1}+s_{e}^{k+1} \nabla T_{e}^{k+1},\mathbf{v}) \nonumber\\
		&+(n_h^{k+1}\nabla \mu_h^{k+1} +s_{ih}^{k} \nabla T_{ih}^{k+1}+s_{eh}^{k} \nabla T_{eh}^{k+1},\mathbf{v}) \nonumber\\
		&+\left(J^{k+1}\times\mathbf{B}^{k+1}-J_h^{k+1}\times\mathbf{B}_h^{k+1}, \mathbf{v}\right)+\left(n^{k+1}\mathbf{g}-n^{k+1}_h\mathbf{g}, \mathbf{v}\right),  \nonumber\\
		R_3=& \left(d_t\widetilde{\epsilon}_i^{k+1}-\partial_t\epsilon_i^{k+1}, q\right)+\kappa_i(\nabla T^{k+1}_i(n^k_h-n^{k+1}),\nabla q) \nonumber\\
		&+( \mathbf{u}^{k+1}(n^{k+1}\mu_{i}^{k+1}+s_{i}^{k+1}T_{i}^{k+1})-\mathbf{u}^{k}_{\star h}(n_h^{k+1}\mu_{ih}^{k+1}+s_{ih}^kT_{ih}^{k+1}),\nabla q)\nonumber\\
		&+(\mathbf{u}^{k+1}\cdot(n^{k+1}\nabla \mu_{i}^{k+1}+s_{i}^{k+1}\nabla T_{i}^{k+1})-\mathbf{u}^{k}_{\star h}\cdot(n_h^{k+1}\nabla \mu_{ih}^{k+1}+s_{ih}^k\nabla T_{ih}^{k+1}),q)\nonumber\\
		&- (\pmb{\sigma}^{k+1}_{i} :\nabla \mathbf{u}^{k+1}-\pmb{\sigma}^{k+1}_{i h} :\nabla \mathbf{u}_h^{k+1},q)\nonumber\\
		&-\frac{M_{i}}{2 \Delta t} (n_h^{k} \left( |\mathbf{u}_h^{k+1}-\mathbf{u}^{k}_{\star h}|^{2} + |\mathbf{u}^{k}_{\star h}-\mathbf{u}_h^{k}|^{2}\right),q),\nonumber\\
		R_4=& \left(d_t\widetilde{\epsilon}_e^{k+1}-\partial_t\epsilon_e^{k+1}, g\right)+\kappa_e(\nabla T^{k+1}_e(n^k_h-n^{k+1}),\nabla g) \nonumber\\
		&+( \mathbf{u}^{k+1}(n^{k+1}\mu_{e}^{k+1}+s_{e}^{k+1}T_{e}^{k+1})-\mathbf{u}^{k}_{\star h}(n_h^{k+1}\mu_{eh}^{k+1}+s_{eh}^kT_{eh}^{k+1}),\nabla g)\nonumber\\
		&+(\mathbf{u}^{k+1}\cdot(n^{k+1}\nabla \mu_{e}^{k+1}+s_{e}^{k+1}\nabla T_{e}^{k+1})-\mathbf{u}^{k}_{\star h}\cdot(n_h^{k+1}\nabla \mu_{eh}^{k+1}+s_{eh}^k\nabla T_{eh}^{k+1}),g)\nonumber\\
		&+ \xi_e\xi_R\Big(\frac{1}{(n^{k+1}e)^2}|\nabla J^{k+1}|^2-\frac{1}{(n^{k+1}_he)^2}|\nabla J^{k+1}_h|^2,g\Big)\nonumber\\
		&+\xi_R(|J^{k+1}|^2-|J^{k+1}_h|^2,g)-(\frac{1}{2\mu_0\Delta t}|\mathbf{B}_h^{k+1}-\mathbf{B}_h^k|^2,g),\nonumber\\
		R_5=& \left(d_t \mathbf{B}^{k+1}-\mathbf{B}_t^{k+1}, \mathbf{C}\right)+\mu_0 \left(\mathbf{B}^{k+1} \times\left(\mathbf{u}^{k+1}-\mathbf{u}_h^{k+1}\right), G\right)\nonumber\\
		&+\mu_0\xi_e\xi_R\Big(\frac{1}{(n_h^{k+1}e)^2}\nabla J^{k+1}-\frac{1}{(n^{k+1}e)^2}\nabla J^{k+1},\nabla G\Big),
	\end{align}
	and $\widetilde{\epsilon}_\alpha^{k+1}=\epsilon_\alpha(n^{k+1}_h,T^{k+1}_{\alpha})$.
	
	Since $(\mathcal{R}_h \mathbf{B}^{k+1},\mathcal{R}_h J^{k+1})\in \mathbf{X}^{\boldsymbol{\xi},0}_h$, we have $\mathbf{b}((\mathcal{R}_h \mathbf{B}^{k+1},\mathcal{R}_h J^{k+1}),\vartheta^{k+1}_E)=0$. Setting $\mathbf{C}=0,G=0,F=\vartheta^{k+1}_E$ in \eqref{eq:2Dcontinue4}, we have $\mathbf{b}((\mathbf{B}^{k+1},J^{k+1}),\vartheta^{k+1}_E)=0$, which implies that 
	\begin{align}\label{eq:bgamma_E}
		&\mathbf{b}((\vartheta_\mathbf{B}^{k+1},\vartheta_J^{k+1}),\varepsilon^{k+1}_E)-\mathbf{b}((\varepsilon_\mathbf{B}^{k+1},\varepsilon_J^{k+1}),\vartheta^{k+1}_E)\nonumber\\
		&=\mathbf{b}((\vartheta_\mathbf{B}^{k+1},\vartheta_J^{k+1}),\gamma^{k+1}_E+\vartheta^{k+1}_E)-\mathbf{b}((\vartheta_\mathbf{B}^{k+1},\vartheta_J^{k+1})+(\gamma_\mathbf{B}^{k+1},\gamma_J^{k+1}),\vartheta^{k+1}_E)\nonumber\\
		&=\mathbf{b}((\vartheta_\mathbf{B}^{k+1},\vartheta_J^{k+1}),\gamma^{k+1}_E).
	\end{align}
	Substituting
	$$
	\begin{aligned} 
		\varepsilon_n^{k+1} & =\gamma_n^{k+1}+\vartheta_n^{k+1},\quad\varepsilon_\mathbf{u}^{k+1}=\gamma_\mathbf{u}^{k+1}+\vartheta_\mathbf{u}^{k+1}, \quad\varepsilon_{T i}^{k+1}=\gamma_{T i}^{k+1}+\vartheta_{T i}^{k+1}, \\
		\varepsilon_{T e}^{k+1} & =\gamma_{T e}^{k+1}+\vartheta_{T e}^{k+1},\quad\varepsilon_\mathbf{B}^{k+1}=\gamma_\mathbf{B}^{k+1}+\vartheta_\mathbf{B}^{k+1},\quad\varepsilon_J^{k+1}=\gamma_J^{k+1}+\vartheta_J^{k+1},
	\end{aligned}
	$$
	into the above equations and choosing the test functions as $w=\vartheta^{k+1}_n,\mathbf{v}=\vartheta^{k+1}_\mathbf{u},\mathbf{C}=\vartheta^{k+1}_\mathbf{B},G=\vartheta^{k+1}_J,F=\vartheta^{k+1}_E,q=\vartheta^{k+1}_{T i},g=\vartheta^{k+1}_{T e}$, then applying \eqref{eq:bgamma_E} and summing all equations together yield
	\begin{align}\label{eq:distanceerror_eq}
		& \left(d_t \vartheta_n^{k+1}, \vartheta^{k+1}_n\right)+M_i\left(n_h^{k} d_t \vartheta_{\mathbf{u}}^{k+1}, \vartheta^{k+1}_\mathbf{u}\right)+a\left(\vartheta_{\mathbf{u}}^{k+1}, \vartheta_\mathbf{u}^{k+1}\right)+(\partial_{Ti}\epsilon^{k+1}_id_t\vartheta_{T i}^{k+1},\vartheta^{k+1}_{T i})\nonumber\\
		&\quad +\kappa_i(n_h^k\nabla \vartheta_{Ti}^{k+1},\nabla \vartheta^{k+1}_{T i})+(\partial_{Te}\epsilon^{k+1}_ed_t\vartheta_{T e}^{k+1}  ,\vartheta_{T e}^{k+1})+  \kappa_e(n_h^k\nabla \vartheta_{T e}^{k+1},\nabla \vartheta_{T e}^{k+1})\nonumber\\
		&\quad+\left(d_t \vartheta_{\mathbf{B}}^{k+1}, \vartheta_\mathbf{B}^{k+1}\right)+\mu_0\xi_R(\vartheta_J^{k+1},\vartheta_J^{k+1})+\Big(\frac{\mu_0\xi_R\xi_e}{(n^{k+1}_he)^2}\nabla \vartheta^{k+1}_J,\nabla \vartheta^{k+1}_J\Big)\nonumber\\
		&=-M_i\left(n_h^{k+1} \mathbf{u}_{\star h}^{k} \cdot \nabla \vartheta_{\mathbf{u}}^{k+1}, \vartheta^{k+1}_\mathbf{u}\right)+\mu_0\left(\mathbf{u}_h^{k+1} \times \vartheta_\mathbf{B}^{k+1},\vartheta_J^{k+1}\right)+\sum_{i=1}^{5}P_i,
	\end{align}
	where
	\begin{align}\label{eq:error_eq_P}
		P_1= &R_1-(d_t \gamma^{k+1}_n,\vartheta^{k+1}_n),\nonumber\\
		P_2= &R_2-M_i(n^k_hd_t\gamma^{k+1}_\mathbf{u},\vartheta^{k+1}_\mathbf{u})-M_i(n^{k+1}_h\mathbf{u}^{k}_{\star h}\cdot \nabla \gamma^{k+1}_\mathbf{u},\vartheta^{k+1}_\mathbf{u})-a(\gamma^{k+1}_\mathbf{u},\vartheta^{k+1}_\mathbf{u}),\nonumber \\
		P_3= & R_3-(\partial_{Ti}\epsilon^{k+1}_id_t \gamma^{k+1}_{T i},\vartheta^{k+1}_{T i})-\kappa_i(n^k_h\nabla\gamma^{k+1}_{T i},\nabla \vartheta^{k+1}_{T i}),\nonumber\\
		P_4= & R_4-(\partial_{Te}\epsilon^{k+1}_ed_t \gamma^{k+1}_{T e},\vartheta^{k+1}_{T e})-\kappa_e(n^k_h\nabla\gamma^{k+1}_{T e},\nabla \vartheta^{k+1}_{T e}),\nonumber\\
		P_5= & R_5-(d_t\gamma^{k+1}_\mathbf{B},\vartheta^{k+1}_\mathbf{B})+\mu_0(\mathbf{u}^{k+1}_h\times \gamma^{k+1}_\mathbf{B},\vartheta^{k+1}_J)-\mu_0\xi_R(\gamma_J^{k+1},\vartheta_J^{k+1})\nonumber\\
		&-\Big(\frac{\mu_0\xi_R\xi_e}{(n^{k+1}_he)^2}\nabla \gamma^{k+1}_J,\nabla \vartheta^{k+1}_J\Big)-b((\vartheta^{k+1}_\mathbf{B},\vartheta^{k+1}_J),\gamma^{k+1}_E).
	\end{align}	
	Applying the inequality  $a(a-b)\geq \frac{1}{2}(a^2-b^2)$ and assumptions \eqref{eq:assume_dis}, \eqref{eq:assume_H}, we obtain
	\begin{align*}
		\left(d_t \vartheta^{k+1}_n, \vartheta^{k+1}_n\right) & \geq \frac{1}{2 \Delta t}\left(\left\|\vartheta^{k+1}_n\right\|^2-\left\|\vartheta^k_n\right\|^2\right), \\
		\left(n_h^{k} d_t \vartheta_{\mathbf{u}}^{k+1}, \vartheta_{\mathbf{u}}^{k+1}\right) & \geq \frac{n_{0 }}{2 \Delta t}\left(\left\|\vartheta_{\mathbf{u}}^{k+1}\right\|^2-\left\|\vartheta_{\mathbf{u}}^{k}\right\|^2\right),\\
		\left(d_t \vartheta^{k+1}_\mathbf{B}, \vartheta^{k+1}_\mathbf{B}\right) & \geq \frac{1}{2 \Delta t}\left(\left\|\vartheta^{k+1}_\mathbf{B}\right\|^2-\left\|\vartheta^k_\mathbf{B}\right\|^2\right), \\
		\left(\partial_{Ti}\epsilon^{k+1}_id_t \vartheta^{k+1}_{T i}, \vartheta^{k+1}_{T i}\right) & \geq \frac{H_0}{2 \Delta t}\left(\left\|\vartheta^{k+1}_{T i}\right\|^2-\left\|\vartheta^k_{T i}\right\|^2\right), \\
		\left(\partial_{Te}\epsilon^{k+1}_ed_t \vartheta^{k+1}_{Te}, \vartheta^{k+1}_{Te}\right) & \geq \frac{H_0}{2 \Delta t}\left(\left\|\vartheta^{k+1}_{T e}\right\|^2-\left\|\vartheta^k_{T e}\right\|^2\right),
	\end{align*}
	and
	\begin{align*}
		(n_h^k\nabla \vartheta_{Ti}^{k+1},\nabla \vartheta^{k+1}_{T i})& \geq n_0\left\|\nabla \vartheta_{Ti}^{k+1}\right\|^2, \\
		(n_h^k\nabla \vartheta_{Te}^{k+1},\nabla \vartheta^{k+1}_{T e})& \geq n_0\left\|\nabla \vartheta_{Te}^{k+1}\right\|^2, \\
		\Big(\frac{1}{(n^{k+1}_h)^2}\nabla \vartheta^{k+1}_J,\nabla \vartheta^{k+1}_J\Big)&\geq \frac{1}{(n_1)^2}\|\nabla \vartheta^{k+1}_J\|^2.
	\end{align*}
	Substituting the above inequalities into \eqref{eq:distanceerror_eq}, multiplying both sides by $\Delta t$ and summing from $k=0$ to $j-1$, we get
	\begin{align}\label{eq:error_eq_QP}
		&\frac{1}{2}\left(\left\|\vartheta_n^{j}\right\|^2-\left\|\vartheta_n^0\right\|^2\right)+\frac{M_in_0}{2 }\left(\left\|\vartheta_{\mathbf{u}}^{j}\right\|^2-\left\|\vartheta_{\mathbf{u}}^{0}\right\|^2\right)+\frac{1}{2}\left(\left\|\vartheta_{\mathbf{B}}^{j}\right\|^2-\left\|\vartheta_{\mathbf{B}}^{0}\right\|^2\right)\nonumber \\
		&\quad +\frac{H_0}{2}\left(\left\|\vartheta_{T i}^{j}\right\|^2-\left\|\vartheta_{T i}^0\right\|^2\right)+\frac{H_0}{2 }\left(\left\|\vartheta_{T e}^{j}\right\|^2-\left\|\vartheta_{T e}^0\right\|^2\right)\nonumber\\
		&+\Delta t\sum_{k=0}^{j-1}\left(\mu_0\xi_R \left\|\vartheta_{J}^{k+1}\right\|^2+\frac{\mu_0\xi_R\xi_e}{(n_1e)^2}\|\nabla \vartheta^{k+1}_J\|^2+\eta\left\|\nabla \vartheta_\mathbf{u}^{k+1}\right\|^2\right.\nonumber\\
		& \left.\quad+(\lambda+\eta)\left\|\nabla\cdot \vartheta_{\mathbf{u}}^{k+1}\right\|^2+\kappa_in_0\left\|\nabla \vartheta_{Ti}^{k+1}\right\|^2+\kappa_en_0\left\|\nabla \vartheta_{Te}^{k+1}\right\|^2\right)\nonumber\\
		& \leq\Delta t\sum_{k=0}^{j-1}(Q_1+Q_2)+\Delta t\sum_{k=0}^{j-1}(P_1+P_2+P_3+P_4+P_5),
	\end{align}
	where $Q_1=-M_i\left(n_h^{k+1} \mathbf{u}_{\star h}^{k} \cdot \nabla \vartheta_{\mathbf{u}}^{k+1}, \vartheta_{\mathbf{u}}^{k+1}\right),$ and $Q_2=\mu_0\left(\mathbf{u}^{k+1}_h\times \vartheta^{k+1}_\mathbf{B}, \vartheta^{k+1}_J\right).$
	Next, we estimate each term on the right-hand side of \eqref{eq:error_eq_QP} in turn. 
	\paragraph{Estimate of $Q_1$ and $Q_2$} Using the Cauchy-Schwarz inequality and Young’s inequality $ab\leq \epsilon a^2+\frac{1}{4\epsilon} b^2$(where $\epsilon>0$), we have, 	
	\begin{align}\label{eq:error_Q}
		|Q_1|&\leq M_iKK'\|\nabla \vartheta^{k+1}_{\mathbf{u}}\|\|\vartheta^{k+1}_{\mathbf{u}}\|\leq\frac{\eta}{8}\|\nabla \vartheta^{k+1}_{\mathbf{u}}\|^2+\frac{2(M_iKK')^2}{\eta}\|\vartheta^{k+1}_{\mathbf{u}}\|^2,\nonumber\\
		|Q_2|& \leq \frac{2K^2}{\mu_0\xi_R}\|\vartheta^{k+1}_{\mathbf{B}}\|^2+\frac{\mu_0\xi_R}{8}\|\vartheta^{k+1}_{J}\|^2. 
	\end{align}
	
	\paragraph{Estimate of $P_1(\vartheta^{k+1}_n)$} We first bound the temporal term using the Cauchy–Schwarz inequality:
	\begin{align*}
		|(d_t \gamma^{k+1}_n,\vartheta^{k+1}_n)|&\leq\|d_t \gamma^{k+1}_n\|^2+\|\vartheta^{k+1}_n\|^2.
	\end{align*}
	To estimate $R_1(\vartheta^{k+1}_n)$, we apply Green’s theorem and use the boundedness assumptions on $\mathbf{u}^k_{\star h}$, yielding
	\begin{align*}
		& |\left(n^{k+1} \mathbf{u}^{k+1}-n_h^{k+1} \mathbf{u}_{\star h}^{k}, \nabla \vartheta^{k+1}_n\right)|\\
		& \leq|\left(\nabla n^{k+1}(\mathbf{u}^{k+1}-\mathbf{u}^k_{\star h})+n^{k+1}\nabla(\mathbf{u}^{k+1}-\mathbf{u}^k_{\star h}), \vartheta^{k+1}_n\right)|+|(\mathbf{u}^k_{\star h}\cdot\nabla\gamma^{k+1}_n,\vartheta^{k+1}_n)|\nonumber\\
		&\quad+|(\nabla\cdot\mathbf{u}^k_{\star h},\frac{1}{2}(\vartheta^{k+1}_n)^2)|+|\left(\nabla\cdot\mathbf{u}^k_{\star h}(n^{k+1}-n_h^{k+1}), \vartheta^{k+1}_n\right)|\\
		& \leq M (\Delta t)^2(\left\|\mathbf{u}_t\right\|_{\infty, 0}^2+\left\|\nabla \mathbf{u}_t\right\|_{\infty, 0}^2 +\left\|\mu s_{\alpha h}^{k+1}\right\|_{\infty}) +\frac{\eta}{8}\left\|\nabla \vartheta_{\mathbf{u}}^{k}\right\|^2+C\left\|\vartheta_n^{k+1}\right\|^2\\
		& \quad+C\left(\left\|\gamma_{\mathbf{u}}^{k}\right\|^2+\left\|\vartheta_{\mathbf{u}}^{k}\right\|^2+\left\|\gamma_n^{k+1}\right\|^2+\left\|\nabla\gamma_\mathbf{u}^{k+1}\right\|^2+\left\|\nabla\gamma_n^{k+1}\right\|^2\right).
	\end{align*}
	Combining the above estimates, we obtain the estimate for $P_1(\vartheta^{k+1}_n)$ 
	\begin{align}\label{eq:error_P1}
		\left|P_1\left(\vartheta_n^{k+1}\right)\right| &\leq  \left\|d_t \gamma_n^{k+1}\right\|^2+C\left\|\vartheta_n^{k+1}\right\|^2+\frac{\eta}{8}\left\|\nabla \vartheta_{\mathbf{u}}^{k}\right\|^2+\left(d_t n^{k+1}-n_t^{k+1}, \vartheta_n^{k+1}\right) \nonumber\\
		& \quad+C (\Delta t)^2\left(\left\|\mathbf{u}_t\right\|_{\infty, 0}^2+\left\|\nabla \mathbf{u}_t\right\|_{\infty, 0}^2+\|\mu s_{\alpha}^{k+1}\|^2_{\infty}\right)  \nonumber\\
		&\quad +C\left(\left\|\gamma_{\mathbf{u}}^{k}\right\|^2+\left\|\vartheta_{\mathbf{u}}^{k}\right\|^2+\left\|\nabla \gamma_{\mathbf{u}}^{k}\right\|^2+\left\| \gamma_{n}^{k+1}\right\|^2+\left\|\nabla \gamma_{n}^{k+1}\right\|^2\right) .
	\end{align}
	\paragraph{Estimate of $P_2(\vartheta^{k+1}_{\mathbf{u}})$}
	We first estimate the part excluding $R_2(\vartheta^{k+1}_\mathbf{u})$
	\begin{align*}
		|(n^k_hd_t\gamma^{k+1}_\mathbf{u},\vartheta^{k+1}_\mathbf{u})
		|&\leq \|d_t \gamma^{k+1}_\mathbf{u}\|^2+K^2\|\vartheta^{k+1}_\mathbf{u}\|^2,\\
		|(n^{k+1}_h\mathbf{u}^k_{\star h}\cdot \nabla \gamma^{k+1}_\mathbf{u},\vartheta^{k+1}_\mathbf{u})| &\leq \|\nabla \gamma^{k+1}_\mathbf{u}\|^2+(KK')^2\| \vartheta^{k+1}_\mathbf{u}\|^2, \\
		|a(\gamma^{k+1}_\mathbf{u},\vartheta^{k+1}_\mathbf{u})| &\leq \frac{\eta}{8}\|\nabla \vartheta^{k+1}_\mathbf{u}\|^2+(2\eta+\frac{2(\eta+\lambda)^2}{\eta})\|\nabla \gamma^{k+1}_\mathbf{u}\|^2. 
	\end{align*}
	Then we estimate $R_2(\vartheta^{k+1}_{\mathbf{u}})$. By inserting and subtracting suitable intermediate terms, we obtain
	\begin{align*}
		&| \left(n_h^{k} d_t \mathbf{u}^{k+1}, \vartheta_{\mathbf{u}}^{k+1}\right)-\left(n^{k+1} \mathbf{u}_t^{k+1}, \vartheta_{\mathbf{u}}^{k+1}\right) |\\
		& =|\left(\left(n_h^{k}-n^{k}\right) d_t \mathbf{u}^{k+1} +\left(n^{k}-n^{k+1}\right) d_t \mathbf{u}^{k+1}+n^{k+1}\left(d_t \mathbf{u}^{k+1}-\mathbf{u}_t^{k+1}\right), \vartheta_{\mathbf{u}}^{k+1}\right)| \\
		& \leq M\left(\left\|\gamma_n^{k}\right\|^2+\left\|\vartheta_n^{k}\right\|^2+\left\|\vartheta_{\mathbf{u}}^{k+1}\right\|^2\right)+M (\Delta t)^2\left\|n_t\right\|_{\infty, 0}^2\nonumber\\
		&\quad+\left(n^{k+1}\left(d_t \mathbf{u}^{k+1}-\mathbf{u}_t^{k+1}\right), \vartheta_{\mathbf{u}}^{k+1}\right).
	\end{align*}
	Next, we estimate
	\begin{align*}
		&| ((n_h^{k+1} \mathbf{u}_{\star h}^{k}-n^{k+1}\mathbf{u}^{k+1} )\cdot \nabla \mathbf{u}^{k+1}, \vartheta^{k+1}_\mathbf{u})|\\
		& =|\left(n^{k+1}_h\left(\mathbf{u}^k_{\star h}-\mathbf{u}^k_h\right) \cdot\nabla \mathbf{u}^{k+1}, \vartheta_{\mathbf{u}}^{k+1}\right)+\left((n^{k+1}_h\mathbf{u}^k_{h}-n^{k+1}\mathbf{u}^{k+1}) \cdot \nabla\mathbf{u}^{k+1}, \vartheta_{\mathbf{u}}^{k+1}\right)| \\
		& \leq M(K+M)\left(\left\|\gamma_n^{k+1}\right\|^2+\left\|\vartheta_n^{k+1}\right\|^2+\left\|\vartheta_{\mathbf{u}}^{k+1}\right\|^2+\left\|\gamma_\mathbf{u}^{k}\right\|^2+\left\|\vartheta_\mathbf{u}^{k}\right\|^2\right)\\
		&\quad+C (\Delta t)^2(\left\|\mathbf{u}_t\right\|_{\infty, 0}^2+\left\|\mu s_{\alpha h}^{k+1}\right\|_{\infty}^2).
	\end{align*}
	Applying Green's theorem to $\left(n_h^{k+1}\left(\nabla \mu^{k+1}-\nabla\mu^{k+1}_h\right), \vartheta_{\mathbf{u}}^{k+1}\right)$ and using assumptions  \eqref{eq:assume_dis} ,\eqref{eq:assume_H} and \eqref{eq:assume_mus}, we have
	\begin{align*}
		&| (n^{k+1}\nabla \mu^{k+1} +s_{i}^{k+1} \nabla T_{i}^{k+1}+s_{e}^{k+1} \nabla T_{e}^{k+1},\vartheta^{k+1}_\mathbf{u})\\
		&\quad-(n_h^{k+1}\nabla \mu_h^{k+1} +s_{ih}^{k} \nabla T_{ih}^{k+1}+s_{eh}^{k} \nabla T_{eh}^{k+1},\vartheta^{k+1}_\mathbf{u})|\\
		& \leq|\left(\left(n^{k+1}-n_h^{k+1}\right) \nabla\mu^{k+1}, \vartheta_{\mathbf{u}}^{k+1}\right)+\left(n_h^{k+1}\left(\nabla \mu^{k+1}-\nabla\mu^{k+1}_h\right), \vartheta_{\mathbf{u}}^{k+1}\right)| \\
		&\quad+|\left(\left(s_i^{k+1}-s_i^{k}+s_i^{k}-s_{ih}^{k}\right)\nabla T_i^{k+1}, \vartheta_{\mathbf{u}}^{k+1}\right)+\left(s_{ih}^{k}\left(\nabla T_i^{k+1}-\nabla T_{ih}^{k+1}\right), \vartheta_{\mathbf{u}}^{k+1}\right)| \\
		&\quad+|\left(\left(s_e^{k+1}-s_e^{k}+s_e^{k}-s_{eh}^{k}\right)\nabla T_e^{k+1}, \vartheta_{\mathbf{u}}^{k+1}\right)+\left(s_{eh}^{k}\left(\nabla T_e^{k+1}-\nabla T_{eh}^{k+1}\right), \vartheta_{\mathbf{u}}^{k+1}\right)| \\
		& \leq ((KH)^2+M')(\Delta t)^2\left(\left\|n_t\right\|_{\infty, 0}^2+\left\|T_{it}\right\|_{\infty, 0}^2+\left\|T_{et}\right\|_{\infty, 0}^2\right)\\
		&\quad+C\left(\|\gamma^k_n\|^2+\|\gamma^{k+1}_n\|^2+\|\nabla \gamma^{k+1}_n\|^2+\|\nabla \gamma^{k+1}_{Ti}\|^2+\|\nabla \gamma^{k+1}_{Te}\|^2\right)\\
		&\quad+C\left(\|\vartheta^k_n\|^2+\|\vartheta^{k+1}_n\|^2+\|\vartheta^k_{Ti}\|^2+\|\vartheta^k_{Te}\|^2+\|\vartheta^k_\mathbf{u}\|^2\right)\\
		&\quad+\frac{\eta}{8}\|\nabla \vartheta^{k+1}_\mathbf{u}\|^2+\frac{\kappa_in_0}{8}\|\nabla \vartheta^{k+1}_{Ti}\|^2+\frac{\kappa_en_0}{8}\|\nabla \vartheta^{k+1}_{Te}\|^2.
	\end{align*}
	For the magnetic coupling, we have
	\begin{align*}
		&|\left(J^{k+1}\times\mathbf{B}^{k+1}-J_h^{k+1}\times\mathbf{B}_h^{k+1}, \vartheta^{k+1}_\mathbf{u}\right)|\\
		& =|\left(\left(J^{k+1}-J_h^{k+1}\right)\times \mathbf{B}^{k+1}, \vartheta_{\mathbf{u}}^{k+1}\right)+\left(J^{k+1}_h\left(\mathbf{B}^{k+1}-\mathbf{B}_h^{k+1}\right) , \vartheta_{\mathbf{u}}^{k+1}\right)| \\
		& \leq \left(\left\|\gamma_J^{k+1}\right\|^2+\frac{\mu_0\xi_R}{8}\left\|\vartheta_J^{k+1}\right\|^2+\left\|\gamma_{\mathbf{B}}^{k+1}\right\|^2+\left\|\vartheta_{\mathbf{B}}^{k+1}\right\|^2\right)+(\frac{2M^2}{\mu_0\xi_R}+K^2)\|\vartheta^{k+1}_\mathbf{u}\|^2.
	\end{align*}
	Finally, the gravity contribution satisfies
	\begin{align*}
		|\left(n^{k+1}\mathbf{g}-n^{k+1}_h\mathbf{g}, \vartheta^{k+1}_\mathbf{u}\right) | \leq M\left(\left\|\gamma_n^{k+1}\right\|^2+\left\|\vartheta_n^{k+1}\right\|^2+\left\|\vartheta_{\mathbf{u}}^{k+1}\right\|^2\right).
	\end{align*}
	Collecting all bounds above, we obtain
	\begin{align}\label{eq:error_P2}
		\left|P_2\left(\vartheta_{\mathbf{u}}^{k+1}\right)\right|&\leq  \frac{\eta}{4}\left\|\nabla \vartheta_{\mathbf{u}}^{k+1}\right\|^2+\frac{\kappa_in_0}{8}\left\|\nabla \vartheta_{Ti}^{k+1}\right\|^2+\frac{\kappa_en_0}{8}\left\|\nabla \vartheta_{Te}^{k+1}\right\|^2+\frac{\mu_0\xi_R}{8}\left\| \vartheta_{J}^{k+1}\right\|^2\nonumber\\
		& \quad+C\left(\left\|d_t \gamma_{\mathbf{u}}^{k+1}\right\|^2+\left\|\nabla \gamma_{\mathbf{u}}^{k+1}\right\|^2\right)+\left(n^{k+1}\left(d_t \mathbf{u}^{k+1}-\mathbf{u}_t^{k+1}\right), \vartheta_{\mathbf{u}}^{k+1}\right)\nonumber\\
		&\quad+C (\Delta t)^2\left(\left\|n_t\right\|_{\infty, 0}^2+\left\|\mathbf{u}_t\right\|_{\infty, 0}^2+\left\|\mu s_{\alpha h}^{k+1}\right\|_{\infty}^2+\left\|T_{it}\right\|_{\infty, 0}^2+\left\|T_{et}\right\|_{\infty, 0}^2\right)\nonumber \\
		&\quad +C\left(\left\|\vartheta_{\mathbf{B}}^{k+1}\right\|^2+\left\|\gamma_{\mathbf{B}}^{k+1}\right\|^2+\left\|\gamma_J^{k+1}\right\|^2+\left\|\gamma_n^{k+1}\right\|^2+\left\|\vartheta_n^{k+1}\right\|^2\right)\nonumber \\
		&\quad +C\left(\left\|\gamma_n^{k}\right\|^2+\left\|\gamma_\mathbf{u}^{k}\right\|^2+\left\|\nabla\gamma_{n}^{k+1}\right\|^2+\left\|\nabla\gamma_{Ti}^{k+1}\right\|^2+\left\|\nabla\gamma_{Te}^{k+1}\right\|^2\right) \nonumber\\
		&\quad +C\left(\left\|\vartheta_n^{k}\right\|^2+\left\|\vartheta_{\mathbf{u}}^{k}\right\|^2+\left\|\vartheta_{\mathbf{u}}^{k+1}\right\|^2+\left\|\vartheta_{Ti}^{k}\right\|^2+\left\|\vartheta_{Te}^{k}\right\|^2\right).
	\end{align}
	
	\paragraph{Estimate of $P_3(\vartheta^{k+1}_{Ti})$}
	We start by estimating the terms excluding $R_3(\vartheta^{k+1}_{Ti})$. By the Cauchy–Schwarz and Young inequalities, it follows that
	\begin{align}\label{eq:P3R3_1}
		|(\partial_{Ti}\epsilon^{k+1}_id_t \gamma^{k+1}_{T i},\vartheta^{k+1}_{T i})|&\leq \|d_t \gamma^{k+1}_{Ti}\|^2+H^2\|\vartheta^{k+1}_{Ti}\|^2,\\
		|\kappa_i(n^k_h\nabla\gamma^{k+1}_{T i},\nabla \vartheta^{k+1}_{T i}) | &\leq \frac{2\kappa_iK^2}{n_0}\|\nabla \gamma^{k+1}_{Ti}\|+\frac{\kappa_in_0}{8}\|\nabla\vartheta^{k+1}_{Ti}\|^2.
	\end{align}
	Next, we estimate the residual term $R_3(\vartheta^{k+1}_{Ti})$. By decomposing the internal energy derivative, we have
	\begin{align}\label{eq:epsilon_err}
		&|\left(d_t\widetilde{\epsilon}_i^{k+1}-\partial_t\epsilon_i^{k+1}, \vartheta^{k+1}_{Ti}\right) |\\
		& =|\left(d_t\widetilde{\epsilon}_i^{k+1}-\partial_t\widetilde{\epsilon}_i^{k+1}, \vartheta^{k+1}_{Ti}\right)+\left(\partial_t\widetilde{\epsilon}_i^{k+1}-\partial_t\epsilon_i^{k+1}, \vartheta^{k+1}_{Ti}\right)| \nonumber\\
		& \leq H\left(d_t T_i^{k+1}-\partial_t T_i^{k+1}, \vartheta^{k+1}_{Ti}\right)+H(\partial_t\varepsilon^{k+1}_n, \vartheta^{k+1}_{Ti})\nonumber\\
		& \leq H\left(d_t T_i^{k+1}-\partial_t T_i^{k+1}, \vartheta^{k+1}_{Ti}\right)+H\|\vartheta^{k+1}_{Ti}\|^2+H\|\partial_t\gamma^{k+1}_n\|^2.\nonumber
	\end{align}
	Since the discrete scheme uses $T_i$ as the primary variable while the evolution equation is expressed in terms of the internal energy $\epsilon_i$, the above decomposition separates the time discretization and density effects. For the coupling term involving plasma density, we have
	\begin{align}\label{eq:P3R3_2}
		&|\kappa_i(\nabla T^{k+1}_i(n^k_h-n^{k+1}),\nabla \vartheta^{k+1}_{Ti})|\\
		& \leq \frac{2\kappa_iM^2}{n_0}(\left\|\gamma_n^{k}\right\|^2+\left\|\vartheta_n^{k}\right\|^2)+\frac{\kappa_in_0}{8}\|\nabla\vartheta^{k+1}_{Ti}\|+C(\Delta t)^2\|n_t\|^2_{\infty,0}.\nonumber
	\end{align}
	Applying assumptions \eqref{eq:assume_dis}–\eqref{eq:assume_mus}, we estimate the nonlinear convective terms as follows:
	\begin{align}\label{eq:P3R3_3}
		&|(\mathbf{u}^{k+1}(n^{k+1}\mu_{i}^{k+1}+s_{i}^{k+1}T_{i}^{k+1})-\mathbf{u}^{k}_{\star h}(n_h^{k+1}\mu_{ih}^{k+1}+s_{ih}^kT_{ih}^{k+1}),\nabla \vartheta^{k+1}_{Ti})|\\
		& =|\left(\left(\mathbf{u}^{k+1}-\mathbf{u}^{k}+\mathbf{u}^k-\mathbf{u}^k_{\star h}\right)(n^{k+1}\mu_i^{k+1}+s_i^{k+1}T^{k+1}_i), \nabla\vartheta_{Ti}^{k+1}\right)\nonumber\\
		&\quad+\left(\mathbf{u}^k_{\star h}[\left(n^{k+1}-n_h^{k+1}\right)\mu_i^{k+1}+n^{k+1}_h(\mu_i^{k+1}-\mu^{k+1}_{ih})],\nabla \vartheta_{Ti}^{k+1}\right)\nonumber\\
		&\quad+\left(\mathbf{u}^k_{\star h}[\left(s_i^{k+1}-s_i^{k}+s_i^{k}-s_{ih}^{k}\right)T_i^{k+1}+s^k_{ih}(T_i^{k+1}-T^{k+1}_{ih})],\nabla \vartheta_{Ti}^{k+1}\right)| \nonumber\\
		& \leq\frac{\kappa_in_0}{8}\|\nabla\vartheta^{k+1}_{Ti}\|^2+ C(\Delta t)^2(\left\|n_t\right\|_{\infty, 0}^2+\left\|T_{it}\right\|_{\infty, 0}^2+\left\|\mathbf{u}_{t}\right\|_{\infty, 0}^2+\left\|\mu s_{\alpha h}^{k+1}\right\|_{\infty}^2)\nonumber\\
		&\quad+C(\|\vartheta^{k+1}_n\|^2+\|\gamma^{k+1}_n\|^2+\|\vartheta^{k+1}_{Ti}\|^2+\|\gamma^{k+1}_{Ti}\|^2)\nonumber\\
		&\quad+C(\|\vartheta^k_n\|^2+\|\gamma^k_n\|^2+\|\vartheta^k_{Ti}\|^2+\|\gamma^k_{Ti}\|^2).\nonumber
	\end{align}	
	Similarly, and using Green’s theorem for
	$\left(\mathbf{u}^k_{\star h}\cdot n_h^{k+1}\left(\nabla \mu_i^{k+1}-\nabla\mu^{k+1}_{ih}\right), \vartheta_{Ti}^{k+1}\right)$, we get
	\begin{align}\label{eq:P3R3_4}
		&|(\mathbf{u}^{k+1}\cdot(n^{k+1}\nabla \mu_{i}^{k+1}+s_{i}^{k+1}\nabla T_{i}^{k+1})-\mathbf{u}^{k}_{\star h}\cdot(n_h^{k+1}\nabla \mu_{ih}^{k+1}+s_{ih}^k\nabla T_{ih}^{k+1}),\vartheta^{k+1}_{Ti})|\nonumber\\
		& =|\left(\left(\mathbf{u}^{k+1}-\mathbf{u}^{k}+\mathbf{u}^k-\mathbf{u}^k_{\star h}\right)\cdot(n^{k+1}\nabla\mu_i^{k+1}+s_i^{k+1}\nabla T^{k+1}_i), \vartheta_{Ti}^{k+1}\right)\nonumber\\
		&\quad+|\left(\mathbf{u}^k_{\star h}\cdot[\left(n^{k+1}-n_h^{k+1}\right)\nabla\mu_i^{k+1}+n^{k+1}_h(\nabla\mu_i^{k+1}-\nabla\mu^{k+1}_{ih})], \vartheta_{Ti}^{k+1}\right)| \nonumber\\
		&\quad+|\left(\mathbf{u}^k_{\star h}\cdot[\left(s_i^{k+1}-s_i^{k}+s_i^{k}-s_{ih}^{k}\right)\nabla T_i^{k+1}+s^k_{ih}(\nabla T_i^{k+1}-\nabla T^{k+1}_{ih})],\vartheta_{Ti}^{k+1}\right)| \nonumber\\
		& \leq\frac{\kappa_in_0}{8}\|\nabla\vartheta^{k+1}_{Ti}\|^2+C(\|\vartheta^{k+1}_{Ti}\|^2+\|\vartheta^k_n\|^2+\|\vartheta^{k+1}_n\|^2+\|\vartheta^k_{Ti}\|^2)\nonumber\\
		&\quad+C(\|\nabla \gamma^{k+1}_n\|^2+\|\nabla\gamma^{k+1}_{Ti}\|^2+\|\gamma^k_n\|^2+\|\gamma^{k+1}_n\|^2+\|\gamma^k_{Ti}\|^2)\nonumber\\
		&\quad+ C(\Delta t)^2(\left\|n_t\right\|_{\infty, 0}^2+\left\|T_{it}\right\|_{\infty, 0}^2+\left\|\mathbf{u}_{t}\right\|_{\infty, 0}^2+\left\|\mu s_{\alpha h}^{k+1}\right\|_{\infty}^2).
	\end{align}
	The remaining stress and discrete kinetic energy correction term satisfy
	\begin{align*}
		&| (\pmb{\sigma}^{k+1}_{i} :\nabla \mathbf{u}^{k+1}-\pmb{\sigma}^{k+1}_{i h} :\nabla \mathbf{u}_h^{k+1},\vartheta^{k+1}_{Ti})|\\
		& =| ((\pmb{\sigma}^{k+1}_{i}-\pmb{\sigma}^{k+1}_{i h}) :\nabla \mathbf{u}^{k+1}+\pmb{\sigma}^{k+1}_{i h} :(\nabla \mathbf{u}^{k+1}-\nabla \mathbf{u}_h^{k+1}),\vartheta^{k+1}_{Ti})| \\
		& \leq C\|\vartheta^{k+1}_{Ti}\|+(K^2+M^2)\|\nabla \gamma^{k+1}_\mathbf{u}\|^2+\frac{\eta}{8}\|\nabla \vartheta^{k+1}_\mathbf{u}\|^2,
	\end{align*}
	\begin{align*}
		&| \frac{M_{i}}{2 \Delta t} (n_h^{k} \left( |\mathbf{u}_h^{k+1}-\mathbf{u}^{k}_{\ast h}|^{2} + |\mathbf{u}^{k}_{\ast h}-\mathbf{u}_h^{k}|^{2}\right),\vartheta^{k+1}_{Ti})|\\
		& \leq M_{i}|\frac{\Delta t}{2 } (n^k_h |d_t\mathbf{u}_h^{k+1}|^{2} ,\vartheta^{k+1}_{Ti})|+ M_{i}|\Delta t(n_h^{k} |\mu s_{\alpha h}^{k+1}|^{2} ,\vartheta^{k+1}_{Ti})|\\
		& \leq C\|\vartheta^{k+1}_{Ti}\|^2+\frac{(K\Delta t)^2}{4}\|d_t \mathbf{u}^{k+1}_h\|^4_{L^4}+(K\Delta t)^2\|\mu s^{k+1}_{\alpha h}\|^4_{L^4}.
	\end{align*}
	Collecting all the estimates above, we obtain
	\begin{align}\label{eq:error_P3}
		\left|P_3\left(\vartheta_{Ti}^{k+1}\right)\right| &\leq \frac{\kappa_in_0}{2}\|\nabla\vartheta^{k+1}_{Ti}\|^2+\frac{\eta}{8}\|\nabla \vartheta^{k+1}_\mathbf{u}\|^2+C\|d_t \gamma^{k+1}_{Ti}\|^2\nonumber \\
		&\quad +C (\Delta t)^2\left(\left\|n_t\right\|_{\infty, 0}^2+\left\|T_{it}\right\|_{\infty, 0}^2+\left\|\mathbf{u}_t\right\|_{\infty, 0}^2+\|\mu s_{\alpha h}^{k+1}\|^2_{\infty}\right)\nonumber\\
		&\quad +C\left(\left\|\vartheta_{n}^{k+1}\right\|^2+\left\|\vartheta_{Ti}^{k+1}\right\|^2+\left\|\vartheta_{n}^{k}\right\|^2+\left\|\vartheta_{Ti}^{k}\right\|^2\right)\nonumber\\
		&\quad+C(\|\gamma^k_n\|^2+\|\gamma^{k+1}_{Ti}\|^2+\|\gamma^{k+1}_n\|^2+\|\gamma^k_{Ti}\|^2)\nonumber\\
		&\quad+C(\|\nabla \gamma^{k+1}_n\|^2+\|\nabla\gamma^{k+1}_{Ti}\|^2+\|\nabla \gamma^{k+1}_\mathbf{u}\|^2+\|\partial_t\gamma^{k+1}_n\|^2)\nonumber\\
		&\quad+C(\Delta t)^2(\|d_t \mathbf{u}^{k+1}_h\|^4_{L^4}+\|\mu s^{k+1}_{\alpha h}\|^4_{L^4})\nonumber\\
		&\quad+H\left(d_t T_i^{k+1}-\partial_t T_i^{k+1}, \vartheta^{k+1}_{Ti}\right).
	\end{align}
	
	\paragraph{Estimate of $P_4(\vartheta^{k+1}_{Te})$}
	For $P_4(\vartheta^{k+1}_{Te})$ and $R_4(\vartheta^{k+1}_{Te})$, we only need to estimate the last three items in $R_4$, since the remaining parts share the same structure as $P_3$ and $R_3$ (see Eqs.~\eqref{eq:P3R3_1}–\eqref{eq:P3R3_4}). 
	We first handle the gradient-related term:
	\begin{align*}
		&\Big| \Big(\frac{1}{(n^{k+1}e)^2}|\nabla J^{k+1}|^2-\frac{1}{(n^{k+1}_he)^2}|\nabla J^{k+1}_h|^2,\vartheta^{k+1}_{Te}\Big)\Big| \\
		&\leq \frac{1}{e^2}\Big|\Big(\frac{(\nabla J^{k+1}+\nabla J^{k+1}_h)}{(n^{k+1})^2}\nabla(J^{k+1}-J^{k+1}_h),\vartheta^{k+1}_{Te}\Big)\Big|\\
		&\quad+\frac{1}{e^2}\Big|\Big(\frac{|\nabla J^{k+1}_h|^2(n^{k+1}+n^{k+1}_h)}{(n^{k+1}n^{k+1}_h)^2}(n^{k+1}-n^{k+1}_h),\vartheta^{k+1}_{Te}\Big)\Big|\\
		&\leq \frac{\mu_0}{8(n_1e)^2}\|\nabla \vartheta_J^{k+1}\|^2
		+C(\|\vartheta^{k+1}_{Ti}\|^2+\|\vartheta^{k+1}_{n}\|^2+\|\gamma^{k+1}_{n}\|^2+\|\nabla\gamma^{k+1}_{J}\|^2).
	\end{align*}
	Next, we estimate the nonlinear current term:
	\begin{align*}
		| (|J^{k+1}|^2-|J^{k+1}_h|^2,\vartheta^{k+1}_{Te})|
		&=| ((J^{k+1}-J^{k+1}_h)J^{k+1}+J^{k+1}_h(J^{k+1}-J^{k+1}_h),\vartheta^{k+1}_{Te})| \\
		&\leq C\|\vartheta^{k+1}_{Te}\|^2+C\|\gamma^{k+1}_J\|^2+\frac{\mu_0}{8}\|\vartheta^{k+1}_J\|^2.
	\end{align*}
	Finally, for the discrete magnetic energy correction term:
	\begin{align*}
		\Big|\frac{1}{2\mu_0\Delta t} (|\mathbf{B}_h^{k+1}-\mathbf{B}_h^k|^2,\vartheta^{k+1}_{Te})\Big|
		&=\frac{\Delta t}{2\mu_0}\big|\big(| d_t\mathbf{B}^{k+1}_h|^2,\vartheta^{k+1}_{Te}\big)\big| \\
		&\leq C\|\vartheta^{k+1}_{Te}\|^2+\frac{(\Delta t)^2}{4\mu^2_0}\| d_t\mathbf{B}^{k+1}_h\|^4_{L^4}.
	\end{align*}
	Collecting the above estimates yields
	\begin{align}\label{eq:error_P4}
		\left|P_4\left(\vartheta_{Te}^{k+1}\right)\right|& \leq \frac{\kappa_en_0}{2}\|\nabla\vartheta^{k+1}_{Te}\|^2+\frac{\mu_0\xi_R\xi_e}{8(n_1e)^2}\|\nabla \vartheta_J^{k+1}\|^2+\frac{\mu_0\xi_R}{8}\| \vartheta^{k+1}_J\|^2 \nonumber\\
		& \quad+C (\Delta t)^2\left(\left\|n_t\right\|_{\infty, 0}^2+\left\|T_{et}\right\|_{\infty, 0}^2+\left\|\mathbf{u}_t\right\|_{\infty, 0}^2+\|\mu s_{\alpha h}^{k+1}\|^2_{\infty}\right)\nonumber\\
		&\quad +C\left(\left\|\vartheta_{n}^{k+1}\right\|^2+\left\|\vartheta_{Te}^{k+1}\right\|^2+\left\|\vartheta_{n}^{k}\right\|^2+\left\|\vartheta_{Te}^{k}\right\|^2\right)+C\|d_t \gamma^{k+1}_{Ti}\|^2\nonumber\\
		&\quad+C(\|\gamma^k_n\|^2+\|\gamma^{k+1}_{Te}\|^2+\|\gamma^{k+1}_n\|^2+\|\gamma^k_{Te}\|^2+\|\gamma^{k+1}_J\|^2)\nonumber\\
		&\quad+C(\|\nabla \gamma^{k+1}_n\|^2+\|\nabla\gamma^{k}_{n}\|^2+\|\nabla\gamma^{k+1}_{Te}\|^2)\nonumber\\
		&\quad+C(\|\nabla \gamma^{k+1}_\mathbf{u}\|^2+\|\nabla\gamma^{k+1}_{J}\|^2+\|\partial_t\gamma^{k+1}_n\|^2)\nonumber\\
		&\quad+H\left(d_t T_e^{k+1}-\partial_t T_e^{k+1}, \vartheta^{k+1}_{Te}\right)+C(\Delta t)^2\| d_t\mathbf{B}^{k+1}_h\|^4_{L^4}.
	\end{align}
	
	\paragraph{Estimate of $P_5(\vartheta^{k+1}_{\mathbf{B}},\vartheta^{k+1}_{J})$}
	We next turn to the coupled electromagnetic terms.  
	For the part excluding $R_5(\vartheta^{k+1}_\mathbf{B},\vartheta^{k+1}_J)$,
	\begin{align*}
		|(d_t\gamma^{k+1}_\mathbf{B},\vartheta^{k+1}_\mathbf{B})|&\leq\|d_t \gamma^{k+1}_\mathbf{B}\|^2+\|\vartheta^{k+1}_\mathbf{B}\|^2,\\
		|(\mathbf{u}^{k+1}_h \times \gamma^{k+1}_{\mathbf{B}},\vartheta^{k+1}_J) | &\leq\frac{4K^2}{\mu_0\xi_R} \| \gamma^{k+1}_\mathbf{B}\|+\frac{\mu_0\xi_R}{16}\|\vartheta^{k+1}_{J}\|^2, \\
		|( \gamma^{k+1}_{J},\vartheta^{k+1}_J) | &\leq2 \| \gamma^{k+1}_J\|+\frac{1}{8}\|\vartheta^{k+1}_{J}\|^2, \\
		\Big|\Big(\frac{1}{(n^{k+1}_h)^2}\nabla \gamma^{k+1}_J,\nabla \vartheta^{k+1}_J\Big) \Big| &\leq\frac{2n^2_1 }{n^2_0}\| \nabla\gamma^{k+1}_J\|+\frac{1}{8n^2_1}\|\nabla \vartheta_J^{k+1}\|^2, \\
		|(\gamma^{k+1}_E,\vartheta^{k+1}_J)|&\leq\frac{4}{\mu_0\xi_R}\| \gamma^{k+1}_E\|^2+\frac{\mu_0\xi_R}{16}\|\vartheta^{k+1}_J\|^2,\\
		|(\nabla \times \gamma^{k+1}_E,\vartheta^{k+1}_\mathbf{B}) | &\leq \|\nabla\times \gamma^{k+1}_E\|+\|\vartheta^{k+1}_{\mathbf{B}}\|^2. \\
	\end{align*}
	For the residual terms $R_5(\vartheta^{k+1}_\mathbf{B},\vartheta^{k+1}_J)$, we similarly obtain	
	\begin{align*}
		|\left(\mathbf{B}^{k+1} \times\left(\mathbf{u}_h^{k+1}-\mathbf{u}^{k+1}\right),\vartheta^{k+1}_J\right)| \leq \frac{2M^2}{\mu_0\xi_R}(\|\gamma^{k+1}_\mathbf{u}\|^2+\|\vartheta^{k+1}_\mathbf{u}\|^2)+\frac{\mu_0\xi_R}{8}\|\vartheta^{k+1}_J\|^2,
	\end{align*}
	\begin{align*}
		&\Big|\Big(\frac{1}{(n^{k+1})^2}\nabla J^{k+1}-\frac{1}{(n_h^{k+1})^2}\nabla J^{k+1},\nabla \vartheta^{k+1}_J\Big)\Big|\\
		& =\Big| \Big(\nabla J^{k+1}\frac{(n^{k+1}+n^k_h)(n^{k+1}-n^{k+1}_h)}{(n^{k+1}n^{k+1}_h)^2},\nabla \vartheta^{k+1}_J\Big)\Big| \\
		& \leq2n^2_1\Big(\frac{M^2+K^2}{(\underline{n}n_0)^2}\Big)^2\left(\|\vartheta^{k+1}_n\|^2+\|\gamma^{k+1}_n\|^2\right)+\frac{1}{8n^2_1}\|\nabla \vartheta^{k+1}_J\|^2.
	\end{align*}
	Collecting all the estimates leads to
	\begin{align}\label{eq:error_P5}
		\left|P_5\right| &\leq \frac{3\mu_0\xi_R}{8}\left\|\vartheta_J^{k+1}\right\|^2+\frac{\mu_0\xi_e\xi_R}{4(n_1e)^2}\|\nabla \vartheta^{k+1}_J\|^2 +\left(d_t \mathbf{B}^{k+1}-\mathbf{B}_t^{k+1}, \vartheta_{\mathbf{B}}^{k+1}\right)\nonumber\\
		&\quad+C(\|\gamma^{k+1}_\mathbf{u}\|^2+\|\vartheta^{k+1}_\mathbf{u}\|^2+\| \nabla \gamma^{k+1}_J\|^2+\|\vartheta^{k+1}_n\|^2+\|\gamma^{k+1}_n\|^2+\|\nabla\gamma^{k+1}_E\|^2) \nonumber\\
		&\quad+ \left\|d_t \gamma_{\mathbf{B}}^{k+1}\right\|^2+C(\|\gamma^{k+1}_E\|^2+\|\gamma^{k+1}_\mathbf{B}\|^2+\|\gamma^{k+1}_J\|^2).
	\end{align}
	
	We choose the initial approximations as
	$$
	n_h^0=\mathcal{I}^g_h n^0, \mathbf{u}_h^0=\mathcal{I}^g_h\mathbf{u}^0, \mathbf{B}_h^0=\mathcal{R}_h \mathbf{B}^0, T_{ih}^0=\mathcal{I}^g_h T_i^0,T_{eh}^0=\mathcal{I}^g_h T_e^0
	,$$
	which guarantees
	$\vartheta^0_n=0,\vartheta^0_\mathbf{u}=0,\vartheta^0_\mathbf{B}=0,\vartheta^0_{Ti}=0,\vartheta^0_{Te}=0.$
	Substituting \eqref{eq:error_Q}-\eqref{eq:error_P5} into \eqref{eq:error_eq_QP}, we obtain
	\begin{align}\label{eq:error_eq_QP2}
		&\frac{1}{2}\left\|\vartheta_n^{j}\right\|^2+\frac{M_in_0}{2 }\left\|\vartheta_{\mathbf{u}}^{j}\right\|^2+\frac{1}{2}\left\|\vartheta_{\mathbf{B}}^{j}\right\|^2+\frac{H_0}{2 }\left\|\vartheta_{T i}^{j}\right\|^2+\frac{H_0}{2}\left\|\vartheta_{T e}^{j}\right\|^2\nonumber \\
		&\quad+\Delta t\sum_{k=0}^{j-1}\left(\frac{\mu_0\xi_R}{4} \left\|\vartheta_{J}^{k+1}\right\|^2 +\frac{5\mu_0\xi_R\xi_e}{8(n_1e)^2}\|\nabla \vartheta^{k+1}_J\|^2+\frac{3\eta}{8}\left\|\nabla \vartheta_\mathbf{u}^{k+1}\right\|^2\right.\nonumber\\
		&\left. \quad+(\lambda+\eta)\left\|\nabla\cdot \vartheta_{\mathbf{u}}^{k+1}\right\|^2+\frac{3\kappa_in_0}{8}\left\|\nabla \vartheta_{Ti}^{k+1}\right\|^2+\frac{3\kappa_en_0}{8}\left\|\nabla \vartheta_{Te}^{k+1}\right\|^2\right)\nonumber\\
		& \leq C\Delta t \sum_{k=0}^{j-1}\left(\left\| \vartheta_n^{k+1}\right\|^2+\left\| \vartheta_{\mathbf{u}}^{k+1}\right\|^2+\left\| \vartheta_{\mathbf{B}}^{k+1}\right\|^2+\left\| \vartheta_{Ti}^{k+1}\right\|^2 +\left\| \vartheta_{Te}^{k+1}\right\|^2\right) \nonumber\\
		& \quad+C T (\Delta t)^2\left(\left\|n_t\right\|_{\infty, 0}^2+\left\|T_{it}\right\|_{\infty, 0}^2+\left\|T_{et}\right\|_{\infty, 0}^2+\left\|\mathbf{u}_t\right\|_{\infty, 0}^2+\left\|\nabla \mathbf{u}_t\right\|_{\infty, 0}^2\right) \nonumber\\
		&\quad+C(\Delta t)^2(\triplenorm{\mu s_{\alpha h}}^2_{\infty,0}+\triplenorm{d_t \mathbf{u}_h}^4_{4,L^4}+\triplenorm{d_t \mathbf{B}_h}^4_{4,L^4}+\triplenorm{\mu s_{\alpha h}}^4_{4,L^4})\nonumber\\
		& \quad+C\Delta t\sum_{k=0}^{j-1}\left(\mathcal{G}^{k+1}_1+\mathcal{G}^{k+1}_2+\mathcal{G}^{k+1}_3+\mathcal{G}^{k+1}_4\right).
	\end{align}
	Here, the triple norm over discrete time levels is defined as
	$\triplenorm{d_t \mathbf{u}_h}_{4,L^4}:= (\sum_{k=0}^{j-1} \Delta t\\ \|d_t \mathbf{u}^{k+1}_h\|^4_{L^4})^{\frac{1}{4}}$
	and analogously for $\triplenorm{d_t \mathbf{B}_h}_{4,L^4},\triplenorm{\mu s_{\alpha h}}_{4,L^4}$. Moreover, we set
	\begin{align}
		\mathcal{G}_1^{k+1}:=&\left\| \gamma_n^{k+1}\right\|^2+\left\| \gamma_{\mathbf{u}}^{k+1}\right\|^2+\left\| \gamma_{\mathbf{B}}^{k+1}\right\|^2+\left\| \gamma_{E}^{k+1}\right\|^2+\left\| \gamma_{Ti}^{k+1}\right\|^2 +\left\| \gamma_{Te}^{k+1}\right\|^2+\left\| \gamma_{J}^{k+1}\right\|^2,\nonumber\\
		\mathcal{G}_2^{k+1}:=&\left\|\nabla \gamma_n^{k+1}\right\|^2+\left\|\nabla \gamma_{\mathbf{u}}^{k+1}\right\|^2+\left\|\nabla \gamma_{Ti}^{k+1}\right\|^2 +\left\|\nabla \gamma_{Te}^{k+1}\right\|^2\nonumber\\
		&+\left\|\nabla \gamma_{J}^{k+1}\right\|^2+\left\|\nabla \gamma_{E}^{k+1}\right\|^2+\left\|\partial_t \gamma_n^{k+1}\right\|^2,\nonumber\\
		\mathcal{G}_3^{k+1}:=&\left\|d_t \gamma_n^{k+1}\right\|^2+\left\|d_t \gamma_{\mathbf{u}}^{k+1}\right\|^2+\left\|d_t \gamma_{\mathbf{B}}^{k+1}\right\|^2+\left\|d_t \gamma_{Ti}^{k+1}\right\|^2+\left\|d_t \gamma_{Te}^{k+1}\right\|^2,\nonumber\\
		\mathcal{G}_4^{k+1}:=&\left(d_t n^{k+1}-n_t^{k+1}, \vartheta^{k+1}_n\right)+\left(n^{k+1}\left(d_t \mathbf{u}^{k+1}-\mathbf{u}_t^{k+1}\right),\vartheta^{k+1}_\mathbf{u}\right),\nonumber\\
		&+\sum_{\alpha=i,e}H\left(d_tT_\alpha^{k+1}-\partial_tT_\alpha^{k+1}, \vartheta^{k+1}_{T\alpha}\right)+\left(d_t \mathbf{B}^{k+1}-\mathbf{B}_t^{k+1}, \vartheta_{\mathbf{B}}^{k+1}\right) .
	\end{align}
	From the interpolation property \eqref{eq:gamma}, we can estimate $\Delta t\sum_{k=0}^{j-1}\mathcal{G}^{k+1}_1$ and $\Delta t\sum_{k=0}^{j-1}\\ \mathcal{G}^{k+1}_2$ as 
	\begin{align}\label{eq:G_1}
		\Delta t \sum_{k=0}^{j-1}\mathcal{G}^{k+1}_1 
		&\leq C\left(h^{2 m_1+2}\big(\triplenorm{n}^2_{0,m_1+1}+\triplenorm{T_i}^2_{0,m_1+1}+\triplenorm{T_e}^2_{0,m_1+1}\big)\right. \nonumber\\
		& \left.\quad+h^{2 m_3+2}\big(\triplenorm{E}^2_{0,m_3+1}+\triplenorm{J}^2_{0,m_3+1}\big)\right.\nonumber\\
		&\left.\quad+h^{2 m_2+2}\triplenorm{\mathbf{u}}^2_{0,m_2+1}+h^{2 m_4+2}\triplenorm{\mathbf{B}}^2_{0,m_4+1}\right),\\
		\Delta t \sum_{k=0}^{j-1}\mathcal{G}^{k+1}_2 
		&\leq C\left(h^{2 m_1}\big(\triplenorm{n}^2_{0,m_1+1}+\triplenorm{n_t}^2_{0,m_1}+\triplenorm{T_i}^2_{0,m_1+1}+\triplenorm{T_e}^2_{0,m_1+1}\big)\right. \nonumber\\
		& \left.\quad+h^{2 m_2}\triplenorm{\mathbf{u}}^2_{0,m_2+1}+h^{2 m_3}\big(\triplenorm{J}^2_{0,m_3+1}+\triplenorm{E}^2_{0,m_3+1}\big)\right).
	\end{align}
	Next, we estimate the term $\Delta t\sum_{k=0}^{j-1}\mathcal{G}^{k+1}_3$. Using the definition of $d_t \gamma_n^{k+1}$ and the approximation properties of the projection operator, we have
	\begin{align*}
		\sum_{k=0}^{j-1}\Delta t\left\|d_t \gamma_n^{k+1}\right\|^2 =&\sum_{k=0}^{j-1} \left\|\frac{1}{\Delta t} \int_{t_{k}}^{t_{k+1}} \frac{\partial \gamma_n}{\partial t} \mathrm{~d} t\right\|^2 \Delta t \nonumber\\
		& \leq \sum_{k=0}^{j-1} \frac{1}{(\Delta t)^2} \int_{t_{k}}^{t_{k+1}}\left\|\frac{\partial \gamma_n}{\partial t}\right\|^2 \cdot  \Delta t \cdot \Delta t \nonumber\\
		& \leq C h^{2 m_1}\left\|n_t\right\|_{0, m_1}^2.
	\end{align*}
	Analogous estimates hold for $\mathbf{u}, T_i$, and $T_e$:
	\begin{align*}
		\sum_{k=0}^{j-1} \Delta t\left\|d_t \gamma_\psi^{k+1}\right\|^2 \leq C h^{2 m_\psi}\left\|\psi_t\right\|_{0, m_\psi}^2, \quad \psi \in\left\{\mathbf{u}, T_i, T_e\right\}.
	\end{align*}
	Since the magnetic field evolution does not contain higher-order spatial derivatives, to retain optimal convergence we assume $\mathbf{B}_t\in L^2(0,T;H^{m_4+1}(\Omega))$ yielding the improved estimate \cite{Feng2010}:
	$$ \sum_{k=0}^{j-1} \Delta t\left\|d_t \gamma_{\mathbf{B}}^{k+1}\right\|^2 \leq C h^{2 m_4+2}\left\|\mathbf{B}_t\right\|_{0, m_4+1}^2.$$
	Therefore,
	\begin{align}\label{eq:G_3}
		\Delta t\sum_{k=0}^{j-1}\mathcal{G}^{k+1}_3\leq& Ch^{2m_1}\left(\|n_t\|^2_{0,m_1}+\|T_{it}\|^2_{0,m_1}+\|T_{et}\|^2_{0,m_1}\right)\nonumber\\
		&+Ch^{2m_2}\|\mathbf{u}_t\|^2_{0,m_2}+Ch^{2m_4+2}\|\mathbf{B}_t\|^2_{0,m_4}.
	\end{align}
	Next, we estimate $\Delta t\sum_{k=0}^{j-1}\mathcal{G}^{k+1}_4$. Using the integral form of Taylor’s theorem with remainder, $f(b)=f(a)+$ $f^{\prime}(a)(b-a)+\frac{1}{2} \int_a^b f^{\prime \prime}(x)(b-x) \mathrm{d} x$, we have
	$$
	d_t n^{k+1}-n_t^{k+1}=\frac{1}{2 \Delta t} \int_{t_{k}}^{t_{k+1}} n_{t t}(t, \cdot)\left(t_{k}-t\right) \mathrm{d} t.
	$$
	By the Cauchy–Schwarz inequality,
	\begin{align*}
		{\left[\int_{t_{k}}^{t_{k+1}} n_{t t}(t, \cdot)\left(t_{k}-t\right) \mathrm{d} t\right]^2 } & \leq \int_{t_{k}}^{t_{k+1}} n_{t t}^2(t, \cdot) \mathrm{d} t \int_{t_{k}}^{t_{k+1}}\left(t_{k}-t\right)^2 \mathrm{~d} t \\
		& =\frac{1}{3} (\Delta t)^3 \int_{t_{k}}^{t_{k+1}} n_{t t}^2(t, \cdot) \mathrm{d} t,
	\end{align*}
	and hence
	\begin{align*}
		\sum_{k=0}^{j-1} \Delta  t\left(d_t n^{k+1}-n_t^{k+1}, \vartheta_n^{k+1}\right) & \leq \sum_{k=0}^{j-1} \frac{1}{\sqrt{3}} (\Delta t)^{\frac{3}{2}}\left[\int_{\Omega} \int_{t_{k}}^{t_{k+1}} n_{t t}^2(t, \cdot) \mathrm{d} t \mathrm{~d} \boldsymbol{x}\right]^{\frac{1}{2}}\left\|\vartheta_n^{k+1}\right\|_0 \nonumber\\
		& \leq \sum_{k=0}^{j-1} \Delta t\left\|\vartheta_n^{k+1}\right\|^2+\frac{1}{3} (\Delta t)^2\left\|n_{t t}\right\|_{0,0}^2.
	\end{align*}
	Applying the same argument to $\mathbf{u},T_i,T_e$ and $\mathbf{B}$, and using assumption \eqref{eq:assume_dis}, we get
	\begin{align}\label{eq:G_4}
		\Delta t \sum_{k=0}^{j-1}\mathcal{G}^{k+1}_4 
		&\leq \sum_{k=0}^{j-1} \Delta t\bigg(\left\|\vartheta_n^{k+1}\right\|^2+2\overline{n}\left\|\vartheta_\mathbf{u}^{k+1}\right\|^2+\sum_{\alpha=i,e}H\left\|\vartheta_{T\alpha}^{k+1}\right\|^2+\left\|\vartheta_\mathbf{B}^{k+1}\right\|^2\bigg)\nonumber\\
		&\quad+\frac{1}{3} (\Delta t)^2\bigg(\left\|n_{t t}\right\|_{0,0}^2+\left\|\mathbf{u}_{t t}\right\|_{0,0}^2++\sum_{\alpha=i,e}H\left\|T_{\alpha t t}\right\|_{0,0}^2+\left\|\mathbf{B}_{t t}\right\|_{0,0}^2\bigg).
	\end{align}

	Substituting \eqref{eq:G_1}-\eqref{eq:G_4} into \eqref{eq:error_eq_QP2}, let $\alpha = \min\!\left(\tfrac{1}{2}, \tfrac{M_i n_0}{2}, \tfrac{H_0}{2}, \tfrac{\mu_0 \xi_R}{4}\right).$ Divide both sides by $\alpha$ for normalization and absorb $\alpha$ into the generic constant $C$. 
	Assuming the mesh size $h$ is sufficiently small such that $C h^2 \le 1$, we obtain
	\begin{align}\label{eq:error_eq_QP4}
		&\left\|\vartheta_n^{j}\right\|^2+\left\|\vartheta_{\mathbf{u}}^{j}\right\|^2+\left\|\vartheta_{\mathbf{B}}^{j}\right\|^2+\left\|\vartheta_{T i}^{j}\right\|^2+\left\|\vartheta_{T e}^{j}\right\|^2+\triplenorm{\vartheta_J}^2_{0,0}\nonumber \\
		& \quad+\Delta t\sum_{k=0}^{j-1}\left(\frac{5\mu_0\xi_R\xi_e}{8(n_1e)^2}\|\nabla \vartheta^{k+1}_J\|^2+\frac{3\eta}{8}\left\|\nabla \vartheta_\mathbf{u}^{k+1}\right\|^2+(\lambda+\eta)\left\|\nabla\cdot \vartheta_{\mathbf{u}}^{k+1}\right\|^2\right. \nonumber\\
		& \left.\quad+\frac{3\kappa_in_0}{8}\left\|\nabla \vartheta_{Ti}^{k+1}\right\|^2+\frac{3\kappa_en_0}{8}\left\|\nabla \vartheta_{Te}^{k+1}\right\|^2\right)\nonumber\\
		& \leq C\Delta t \sum_{k=0}^{j-1}\left(\left\| \vartheta_n^{k+1}\right\|^2+\left\| \vartheta_{\mathbf{u}}^{k+1}\right\|^2+\left\| \vartheta_{\mathbf{B}}^{k+1}\right\|^2+\left\| \vartheta_{Ti}^{k+1}\right\|^2 +\left\| \vartheta_{Te}^{k+1}\right\|^2\right) +\Delta,
	\end{align}
	where $C$ is a constant independent of $j,\Delta t$ and $h$, and $\Delta$ is defined in \eqref{eq:error_Delta}. Applying Gronwall’s lemma yields, for all $1\le j\le N$,
	\begin{align}\label{eq:error_eq_QP5}
		\left\|\vartheta_n^{j}\right\|^2+\left\|\vartheta_{\mathbf{u}}^{j}\right\|^2+\left\|\vartheta_{\mathbf{B}}^{j}\right\|^2+\left\|\vartheta_{T i}^{j}\right\|^2+\left\|\vartheta_{T e}^{j}\right\|^2+\triplenorm{\vartheta_J}^2_{0,0} \leq  \Delta.
	\end{align}
	Using \eqref{eq:error_eq_QP5} and the interpolation estimates \eqref{eq:gamma}, 
	we immediately obtain the total error estimates for all variables except $E$:
	\begin{align}\label{eq:error_all1}
		&\triplenorm{\varepsilon_n}^2_{\infty,0}+\triplenorm{\varepsilon_\mathbf{u}}^2_{\infty,0}+\triplenorm{\varepsilon_{Ti}}^2_{\infty,0}+\triplenorm{\varepsilon_{Te}}^2_{\infty,0}+\triplenorm{\varepsilon_\mathbf{B}}^2_{\infty,0}+\triplenorm{\varepsilon_J}^2_{0,0}\nonumber\\
		&\leq CT\Delta+C(h^{2m_1}+h^{2m_2}+h^{2m_3}+h^{2m_4+2}).
	\end{align}
	Moreover, from \eqref{eq:error_eq_QP4}–\eqref{eq:error_eq_QP5}, we have
	$$
	\triplenorm{\nabla \vartheta_{\mathbf{u}}}_{0,0}^2+\triplenorm{\nabla \vartheta_{J}}_{0,0}^2 +\triplenorm{\nabla \vartheta_{T_i}}_{0,0}^2+ \triplenorm{\nabla \vartheta_{T_e}}_{0,0}^2 \leq C T \Delta,
	$$
	and
	$$
	\triplenorm{ \vartheta_{n}}_{0,0}^2+ \triplenorm{ \vartheta_{\mathbf{u}}}_{0,0}^2 +\triplenorm{ \vartheta_{T_i}}_{0,0}^2 +\triplenorm{\vartheta_{T_e}}_{0,0}^2+\triplenorm{\vartheta_{\mathbf{B}}}_{0,0}^2+\triplenorm{\vartheta_J}_{0,0}^2\leq C T \Delta.
	$$
	Thus
	\begin{align}\label{eq:error_noE}
		\triplenorm{\varepsilon_n}^2_{0,0}+\triplenorm{\varepsilon_\mathbf{u}}^2_{0,1}+\triplenorm{\varepsilon_{Ti}}^2_{0,1}+\triplenorm{\varepsilon_{Te}}^2_{0,1}+\triplenorm{\varepsilon_\mathbf{B}}^2_{0,0}+\triplenorm{\varepsilon_J}^2_{0,1}\leq CT\Delta.
	\end{align}
	
	\paragraph{Step2 (Error estimate of $E^{k+1}_h$)} To estimate $\triplenorm{\varepsilon_E}_{0,1}^2$. we apply the discrete inf–sup condition \eqref{eq:infsup}, 
	together with the residual estimate $|R_5(\mathbf{C},G)|\le \sqrt{\Delta}\|(\mathbf{C},G)\|_{\mathbf{X}}$.  
	It then follows from \eqref{eq:totalerror_EBJ} that
	\begin{align}\label{eq:error_E}
		\|\vartheta_E^{k+1}\|_1\leq C(\|\gamma_E^{k+1}\|_1+\|d_t\varepsilon^{k+1}_\mathbf{B}\|+\|\varepsilon^{k+1}_\mathbf{B}\|+\|\varepsilon^{k+1}_J\|_1)+\sqrt{\Delta}.
	\end{align}
	Squaring \eqref{eq:error_E}, multiplying by $\Delta t$, and summing over $k=0,\dots,j-1$, we apply \eqref{eq:error_noE} to get
	\begin{align}\label{eq:error_E2}
		\triplenorm{\vartheta_E}_{0,1}^2\leq C\sum_{k=0}^{j-1}\Delta t\|d_t\vartheta^{k+1}_\mathbf{B}\|^2+CT\Delta.
	\end{align}
	
	We now estimate $\sum_{k=0}^{j-1}\Delta t\|d_t\vartheta^{k+1}_\mathbf{B}\|^2$. From Maxwell's mixed projection \eqref{eq:proj:a}-\eqref{eq:proj:b}, \eqref{eq:totalerror_EBJ} can be rewritten as
	\begin{align}\label{eq:projerror_EBJ}
		&\left(d_t \varepsilon_{\mathbf{B}}^{k+1}, \mathbf{C}\right)-\mu_0\left(\mathbf{u}_h^{k+1} \times \varepsilon_\mathbf{B}^{k+1},G\right)+\mu_0\xi_R(\vartheta_J^{k+1},G)+\Big(\frac{\mu_0\xi_R\xi_e}{(n^{k+1}_he)^2}\nabla \vartheta^{k+1}_J,\nabla G\Big)\nonumber\\
		&\quad-(\gamma^{k+1}_\mathbf{B},\mathbf{C})+\mathbf{b}((\mathbf{C},G),\vartheta^{k+1}_E)-\mathbf{b}((\vartheta_\mathbf{B}^{k+1},\vartheta_J^{k+1}),F)=R_5(\mathbf{C},G).
	\end{align}
	For any $F_h\in V_h,$ taking $\mathbf{C}=0,G=0$ in \eqref{eq:projerror_EBJ} yields $\mathbf{b}((\vartheta^{k+1}_\mathbf{B},\vartheta^{k+1}_J),F_h)=0$ for $0\leq k\leq j-1$. Moreover, by \eqref{eq:proj:b}, $\mathbf{b}((\mathbf{B}^0,J^0),F_h)=0$, and $(\mathbf{B}^0_h,J^0_h)\in \mathbf{X}^{\boldsymbol{\xi},0}_h$, we have $\mathbf{b}((\vartheta^0_\mathbf{B},\vartheta^0_J),F_h)=0$. Since $\vartheta_E^{k+1}\in V_h$, it follows that
	\begin{align}\label{eq:dtbE}
		\boldsymbol{b}((d_t\vartheta^{k+1}_\mathbf{B},d_t\vartheta^{k+1}_J),\vartheta^{k+1}_E)=0,\quad \forall  0\leq k\leq j-1.
	\end{align}
	Next, take $(\mathbf{C},G)=(2d_t\vartheta^{k+1}_\mathbf{B}\Delta t,2d_t\vartheta^{k+1}_J\Delta t)$ and $F=0$ in \eqref{eq:projerror_EBJ}. Using \eqref{eq:dtbE} and the identity $2a(a-b)=|a|^2-|b|^2+|a-b|^2$, and substituting the definition of  $R_5(\mathbf{C},G)$ from \eqref{eq:exact_eq_R}, we obtain
	\begin{align}\label{eq:projerror_EBJ2}
		&2\Delta t\|d_t\vartheta^{k+1}_\mathbf{B}\|^2+d_t\Big(\mu_0\xi_R\|\vartheta^{k+1}_J\|^2+\frac{\mu_0\xi_R\xi_e}{(n_1e)^2}\|\nabla \vartheta^{k+1}_J\|^2\Big)\Delta t\nonumber\\
		&\quad-\mu_0(\mathbf{u}^{k+1}\times \varepsilon^{k+1}_\mathbf{B},2d_t\vartheta^{k+1}_J\Delta t)-2\Delta t(\gamma^{k+1}_\mathbf{B},d_t\vartheta^{k+1}_\mathbf{B})+2\Delta t(d_t\gamma^{k+1}_\mathbf{B},d_t\vartheta^{k+1}_\mathbf{B})\nonumber\\
		&\leq \left(d_t \mathbf{B}^{k+1}-\mathbf{B}_t^{k+1}, \mathbf{C}\right)+\mu_0 \left(\mathbf{B}^{k+1} \times\left(\mathbf{u}^{k+1}-\mathbf{u}_h^{k+1}\right), G\right)\nonumber\\
		&\quad+\mu_0\xi_e\xi_R\Big(\frac{1}{(n_h^{k+1})^2}\nabla J^{k+1}-\frac{1}{(n^{k+1})^2}\nabla J^{k+1},\nabla G\Big).
	\end{align}
	Summing \eqref{eq:projerror_EBJ2} over $k=0,\dots,j-1$, and applying Young’s inequalities gives
	\begin{align*}
		|2(\gamma^{k+1}_\mathbf{B},d_t\vartheta^{k+1}_\mathbf{B})
		|&\leq \frac{1}{4}\|d_t \vartheta^{k+1}_{\mathbf{B}}\|^2+C\|\gamma^{k+1}_\mathbf{B}\|^2,\\
		|(d_t\gamma^{k+1}_\mathbf{B},d_t\vartheta^{k+1}_\mathbf{B})| &\leq \frac{1}{4}\|d_t \vartheta^{k+1}_{\mathbf{B}}\|^2+C\|d_t\gamma^{k+1}_\mathbf{B}\|^2, \\
		\sum_{k=0}^{j-1}|\left(d_t \mathbf{B}^{k+1}-\mathbf{B}_t^{k+1},d_t\vartheta^{k+1}_{\mathbf{B}}\right)| &\leq\sum_{k=0}^{j-1} \frac{1}{4}\Delta t\|d_t \vartheta^{k+1}_{\mathbf{B}}\|^2+C(\Delta t)^2\|\mathbf{B}_{tt}\|_{0,0}^2. \\
	\end{align*}
	Define  $$A^{k+1}_1:=\mathbf{u}^{k+1}\times \varepsilon^{k+1}_\mathbf{B}+\mathbf{B}^{k+1}\times \varepsilon^{k+1}_\mathbf{u}, \quad A^{k+1}_2:=\frac{1}{(n_h^{k+1})^2}\nabla J^{k+1}-\frac{1}{(n^{k+1})^2}\nabla J^{k+1}.$$ 
	Then we arrive at
	\begin{align}\label{eq:projerror_EBJ3}
		&\frac{5}{4}\sum_{k=0}^{j-1}\Delta t\|d_t\vartheta^{k+1}_\mathbf{B}\|^2+\mu_0\xi_R\|\vartheta^{j}_J\|^2+\frac{\mu_0\xi_R\xi_e}{(n_1e)^2}\|\nabla \vartheta^{j}_J\|^2\nonumber\\
		&\leq \Delta +\mu_0\sum_{k=0}^{j-1}(A_1^{k+1},\vartheta^{k+1}_J-\vartheta^k_J)+\frac{\mu_0\xi_e\xi_R}{e^2}\sum_{k=0}^{j-1}(A_2^{k+1},\nabla\vartheta^{k+1}_J-\nabla\vartheta^k_J).
	\end{align}
	To estimate the last two terms, we use the discrete summation identity $\sum_{k=0}^{j-1}(a^{k+1},b^{k+1}-b^k)=-\sum_{k=0}^{j-1}(a^{k+1}-a^k,b^k)+(a^j,b^j)-(a^0,b^0)$, and the initial condition $\vartheta^{0}_J=0$, which yields
	\begin{align} \label{eq:Adis1}
		\sum_{k=0}^{j-1}(A_1^{k+1},\vartheta^{k+1}_J-\vartheta^k_J)\leq& \sum_{k=0}^{j-1}\Delta t|(d_tA^{k+1}_1,\vartheta^{k}_J)|+|(A^{j}_1,\vartheta^j_J)|,\nonumber\\
		\sum_{k=0}^{j-1}(A_2^{k+1},\nabla\vartheta^{k+1}_J-\nabla\vartheta^k_J)\leq& \sum_{k=0}^{j-1}\Delta t|(d_tA^{k+1}_2,\nabla\vartheta^{k}_J)|+|(A^{j}_2,\nabla\vartheta^j_J)|.
	\end{align}
	Applying Young’s inequality, we have
	\begin{align} \label{eq:Adis2}
		|(A^{j}_1,\vartheta^j_J)|\leq& \frac{\xi_R}{2}\|\vartheta^j_J\|^2+C(\|\varepsilon^j_{\mathbf{B}}\|^2+\|\varepsilon^j_{\mathbf{u}}\|^2),\nonumber\\
		|(A^{j}_2,\nabla\vartheta^j_J)|\leq& \frac{1}{2n^2_1}\|\nabla \vartheta^j_J\|^2+C\|\varepsilon^j_n\|^2,\nonumber\\
		|(d_tA^{k+1}_1,\vartheta^{k}_J)|\leq& \frac{1}{4\mu_0}\|d_t\vartheta^{k+1}_\mathbf{B}\|^2+\|d_t\varepsilon^{k+1}_{\mathbf{u}}\|^2\nonumber\\
		&+C(\|d_t\gamma^{k+1}_\mathbf{B}\|^2+\|\varepsilon^{k+1}_{\mathbf{B}}\|^2+\|\varepsilon^{k+1}_{\mathbf{u}}\|^2+\|\vartheta^{k+1}_J\|^2),\nonumber\\
		\big|(d_t A^{k+1}_2,\nabla\vartheta_J^k)\big|
		\leq& \|d_t\varepsilon_n^{k+1}\|^2
		+ C(\|\varepsilon_n^{k+1}\|^2 + \|\nabla\vartheta_J^k\|^2).
	\end{align}
	Substituting \eqref{eq:Adis2} into \eqref{eq:Adis1}, and then back into \eqref{eq:projerror_EBJ3}, we obtain
	\begin{align}\label{eq:projerror_EBJ43}
		&\sum_{k=0}^{j-1}\Delta t\|d_t\vartheta^{k+1}_\mathbf{B}\|^2+\frac{\mu_0\xi_R}{2}\|\vartheta^{j}_J\|^2+\frac{\mu_0\xi_R\xi_e}{2(n_1e)^2}\|\nabla \vartheta^{j}_J\|^2\nonumber\\
		&\leq \Delta +C(\triplenorm{d_t\varepsilon_\mathbf{u}}^2_{0,0}+\triplenorm{d_t\varepsilon_n}^2_{0,0})+C(\|\varepsilon^{j}_{\mathbf{B}}\|^2+\|\varepsilon^{j}_{\mathbf{u}}\|^2+\|\varepsilon^{j}_n\|^2)\nonumber\\
		&\quad+C(\triplenorm{\varepsilon_\mathbf{B}}^2_{0,0}+\triplenorm{\varepsilon_\mathbf{u}}^2_{0,0}+\triplenorm{\varepsilon_n}^2_{0,0}+\triplenorm{\varepsilon_J}^2_{0,1}).
	\end{align}
	The estimates for $\triplenorm{d_t\varepsilon_\mathbf{u}}^2_{0,0}$ and $\triplenorm{d_t\varepsilon_n}^2_{0,0}$
	follow analogously from their respective error equations, namely $(\triplenorm{d_t\varepsilon_\mathbf{u}}^2_{0,0}+\triplenorm{d_t\varepsilon_n}^2_{0,0})\leq\Delta$, which are simpler than the $\mathbf{B}$–component case. Combining \eqref{eq:error_all1} and \eqref{eq:error_noE}, we deduce
	\begin{align}\label{eq:projerror_EBJ5}
		\sum_{k=0}^{j-1}\Delta t\|d_t\vartheta^{k+1}_\mathbf{B}\|^2+\frac{\mu_0\xi_R}{2}\|\vartheta^{j}_J\|^2+\frac{\mu_0\xi_R\xi_e}{2(n_1e)^2}\|\nabla \vartheta^{j}_J\|^2\leq CT\Delta .
	\end{align}
	
	Substituting this into \eqref{eq:error_E2}, we finally obtain the estimate for $E$: $$\triplenorm{\varepsilon_E}^2_{0,1}\leq CT\Delta.$$ This yields the first inequality \eqref{eq:error_M1} in Theorem~\ref{error_estimates}.
	Using \eqref{eq:error_noE}, we further obtain the second estimate \eqref{eq:error_M2} in Theorem~\ref{error_estimates}.
	
\end{proof}

\section{Numerical experiment}
\label{sec:experiments}
To validate the accuracy, thermodynamic consistency, and physical fidelity of the proposed thermodynamically consistent scheme, a series of two-dimensional numerical experiments are performed. All simulations are carried out using the iFEM package~\cite{ifem}. In Example~1, the order of convergence is assessed via the method of manufactured solutions; Example~2 investigates energy conservation, entropy production, and long-term stability in an isolated system; and Example~3 illustrates the ability of the two-fluid MHD model to capture magnetic reconnection phenomena.

\subsection{Example 1: Convergence Test}
We consider the computational domain  $\Omega = [0, 1] \times [0, 1]$ with the following exact solutions:
$$
\left\{
\begin{aligned}
	n(x, y, t) &= (e^{-t}+1)(\cos (\pi x) \cos (\pi y)+1.6), \\
	T_i(x, y, t) &= (e^{-t}+1)(\cos (\pi x) \cos (\pi y)+1.1), \\
	T_e(x, y, t) &= (e^{-t}+1)(\cos (\pi x) \cos (\pi y)+1.2), \\
	\mathbf{u}(x, y, t) &= (e^{-t}+1)
	\begin{bmatrix}
		(x^2 - x)\sin(\pi x) \\ (y^2 - y)\sin(\pi y)
	\end{bmatrix}, \\
	E(x, y, t) &= e^{-t}(x^2 - x)(y^2 - y), \\
	\mathbf{B}(x, y, t) &= e^{-t}
	\begin{bmatrix}
		(x^2 - x)(2y - 1) \\ (y^2 - y)(-2x + 1)
	\end{bmatrix}, \\
	J(x, y, t) &= -2e^{-t}(x^2 - x + y^2 - y).
\end{aligned}
\right.
$$
Parameters are set as follows: $r=\frac{5}{3}$, $\eta=\lambda=\mu_0=k_B=M_w=\xi_e=\xi_R=1$, and $\Theta_\alpha=\kappa_\alpha n$ with $\kappa_i=\kappa_e=1$.  
Homogeneous Neumann boundary conditions are applied to $n$ and $T_\alpha$, while Dirichlet conditions are imposed on the remaining variables.
For spatial discretization, uniform meshes are used, with:
\begin{itemize}
	\item Piecewise linear (P1) elements for $n,\,\mathbf{u},\,T_i,\,T_e,\,E,$ and $J$;
	\item RT0 elements for $\mathbf{B}$ to ensure the divergence-free constraint $\nabla\cdot\mathbf{B}=0$.
\end{itemize}

The $L^2$ errors and corresponding convergence rates for the plasma variables $(n, \mathbf{u}, T_i, T_e)$ and the electromagnetic variables $(\mathbf{E}, \mathbf{B}, \mathbf{J})$, along with $\nabla \cdot \mathbf{B}$, are summarized in Tables~\ref{tab:plasma_error} and~\ref{tab:maxwell_error}. The computations use a time step of $\Delta t = h^2/4$, and the errors are evaluated at $T = 1.0$. The numerical error for a variable $\phi$, denoted by $\varepsilon_\phi$, is computed as $\varepsilon_\phi = \phi - \phi_h$.
\begin{table}[htbp]
	\centering
	\caption{\small $L^2$ errors and convergence for plasma variables $(n,\mathbf{u},T_i,T_e)$}
	\label{tab:plasma_error}
	\begin{tabular}{@{}ccccccccc@{}}
		\toprule
		\ $ h $ & $ ||\varepsilon_n|| $ & order & $ ||\varepsilon_{\mathbf{u}}|| $& order & $ ||\varepsilon_{Ti}|| $ & order & $||\varepsilon_{Te}|| $ & order \\ \midrule
		1/4 & 7.74e-01 &   & 1.03e-01 &   & 5.02e-01 &  & 2.36e-01 &  \\
		1/8 & 2.31e-01 & 1.74 & 3.51e-02 & 1.55 & 1.83e-01 & 1.45 & 8.97e-02 & 1.40 \\
		1/16 & 6.10e-02 & 1.92 & 9.30e-03 & 1.92 & 5.26e-02 & 1.80 & 2.66e-02 & 1.75 \\
		1/32 & 1.57e-02 & 1.95 & 2.35e-03 & 1.98 & 1.36e-02 & 1.94 & 7.01e-03 & 1.92 \\ 
		\bottomrule
	\end{tabular}
\end{table}

\begin{table}[htbp]
	\centering
	\caption{\small $L^2$ errors and convergence for electromagnetic variables $(E,\mathbf{B},J,\nabla\cdot\mathbf{B})$}
	\label{tab:maxwell_error}
	\begin{tabular}{@{}cccccccc@{}}
		\toprule
		\ $h$ & $||\varepsilon_E||$ & order & $||\varepsilon_{\mathbf{B}}||$ & order & $||\varepsilon_J||$ & order & $||\nabla\cdot\mathbf{B}_h||$ \\ \midrule
		1/4  & 2.20e-03 &  & 2.17e-02 &  & 3.26e-02 &  & 1.39e-15 \\
		1/8  & 6.97e-04 & 1.66 & 1.12e-02 & 0.96 & 1.46e-02 & 1.15 & 6.47e-15 \\
		1/16 & 1.71e-04 & 2.03 & 5.62e-03 & 0.99 & 4.58e-03 & 1.68 & 3.10e-14 \\
		1/32 & 3.80e-05 & 2.17 & 2.80e-03 & 1.00 & 1.41e-03 & 1.69 & 1.88e-13 \\ 
		\bottomrule
	\end{tabular}
\end{table}

The results show convergence rates of approximately second order for the scalar variables and first order for $\mathbf{B}$, which closely matches the theoretical expectations for P1–RT0 discretizations. The observed mild reduction in the convergence rate of $J$ (approximately 1.69) is likely due to its strong nonlinear coupling with the density $n$, notably through the electron stress tensor term in the generalized Ohm’s law. Nevertheless, the overall accuracy is deemed satisfactory.

\subsection{Example 2: Isolated Systems}
A numerical simulation is performed for an ideal isolated plasma system with no mass or heat exchange and without gravitational effects, in order to verify its energy conservation, entropy production, and long-term stability.

The domain is taken to be $\Omega = [0,1] \times [0,1]$, using the same parameter set as in Example~1. Homogeneous Neumann boundary conditions are enforced for $n$ and $T_\alpha$ ($\alpha=i,e$), whereas homogeneous Dirichlet conditions are used for the remaining variables. The initial conditions are as follows:
$$
\left\{
\begin{aligned}
	n_0(x,y) &= 3(\cos(\pi x)\cos(\pi y)+1.5),\\
	T_{i0}(x,y) &= \tfrac{1}{2}(\cos(\pi x)\cos(\pi y)+3),\\
	T_{e0}(x,y) &= \tfrac{1}{4}(\cos(\pi x)\cos(\pi y)+2),\\
	\mathbf{B}_0(x,y) &= 
	\begin{bmatrix}
		(x^2-x)(2y-1)\\
		(y^2-y)(-2x+1)
	\end{bmatrix},\\
	\mathbf{u}_0(x,y) &= \mathbf{0}, \quad E_0(x,y)=0.
\end{aligned}
\right.
$$
We take $h=\frac{1}{32}$, $\Delta t = \frac{1}{4}h^2$, $T=10$.

Figure~\ref{fig:energy_variance} illustrates the time evolution of energy variances, defined as the differences from their initial values: total energy $\mathcal{E}^k-\mathcal{E}^0$, internal energy $\mathcal{U}^k-\mathcal{U}^0$, kinetic energy $\mathcal{H}^k-\mathcal{H}^0$, and electromagnetic energy $\mathcal{B}^k-\mathcal{B}^0$ (for $k=0,1,2,\dots$). In the absence of external sources (i.e., the right-hand side of equation~\eqref{eq:epsilonlaw_semi} is zero), the kinetic, internal, and electromagnetic energies undergo mutual conversion. The total energy, however, remains constant, thereby validating compliance with the first law of thermodynamics.
\begin{figure}[htbp]
	\centering
	\includegraphics[width=0.8\textwidth]{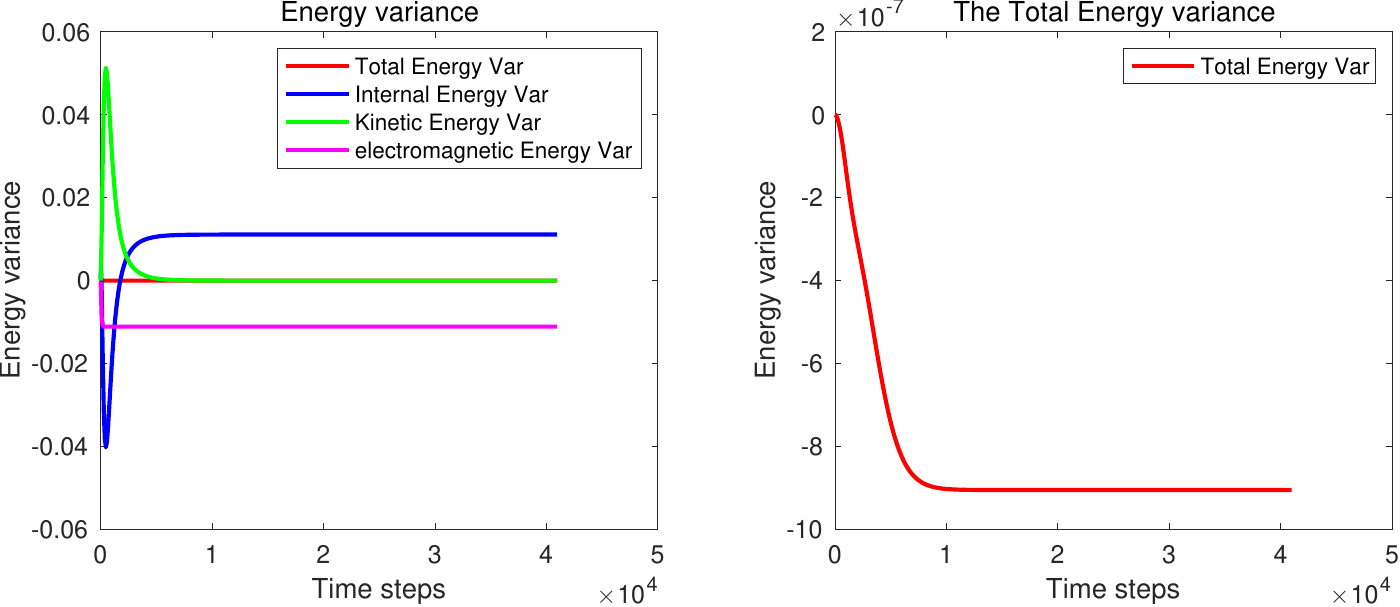}
	\caption{Energy variance over time.}
	\label{fig:energy_variance}
\end{figure}

Figures~\ref{fig:entropy_i}–\ref{fig:entropy_e} present the evolution of ion and electron entropies ($\mathcal{S}_i$ and $\mathcal{S}_e$). 
Both increase monotonically and reach steady states. 
The total entropy $\mathcal{S}=\mathcal{S}_i+\mathcal{S}_e$ also increases, verifying that the scheme satisfies the second law of thermodynamics.
\begin{figure}[htbp]
	\centering
	\begin{subfigure}{0.3\textwidth}
		\centering
		\includegraphics[width=\textwidth]{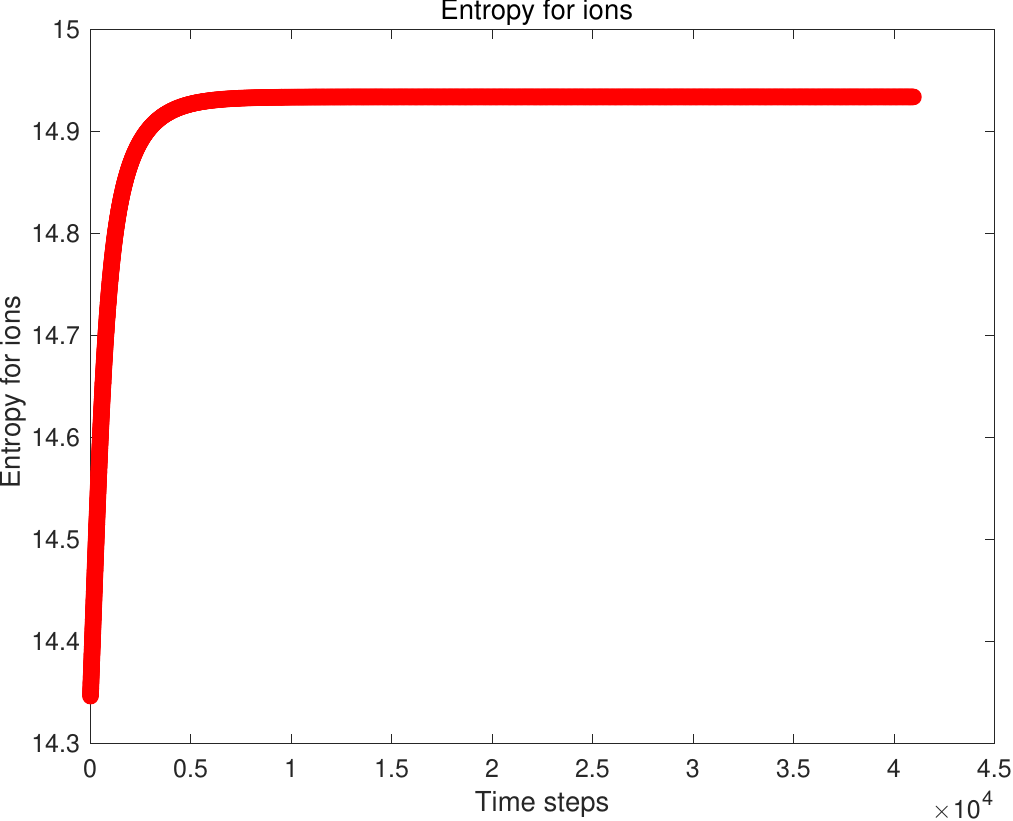}
		\caption{$\mathcal{S}_i$}
		\label{fig:entropy_i}
	\end{subfigure}
	\begin{subfigure}{0.3\textwidth}
		\centering
		\includegraphics[width=\textwidth]{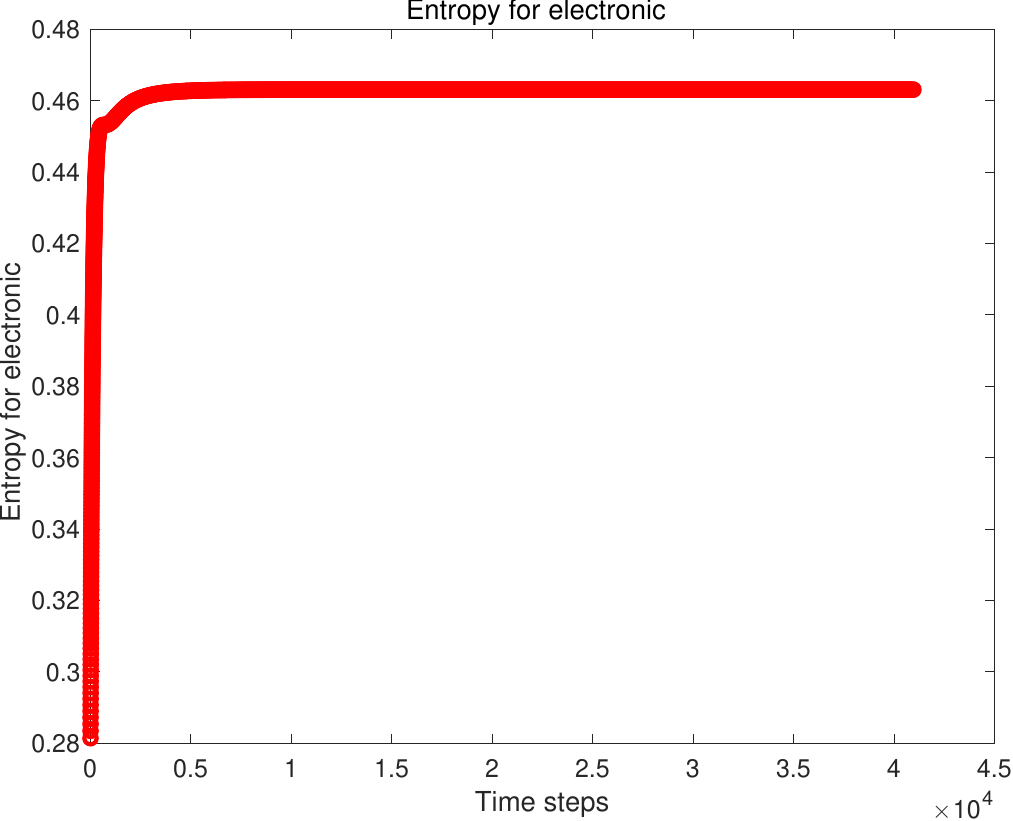}
		\caption{$\mathcal{S}_e$}
		\label{fig:entropy_e}
	\end{subfigure}
	\begin{subfigure}{0.3\textwidth}
		\centering
		\includegraphics[width=\textwidth]{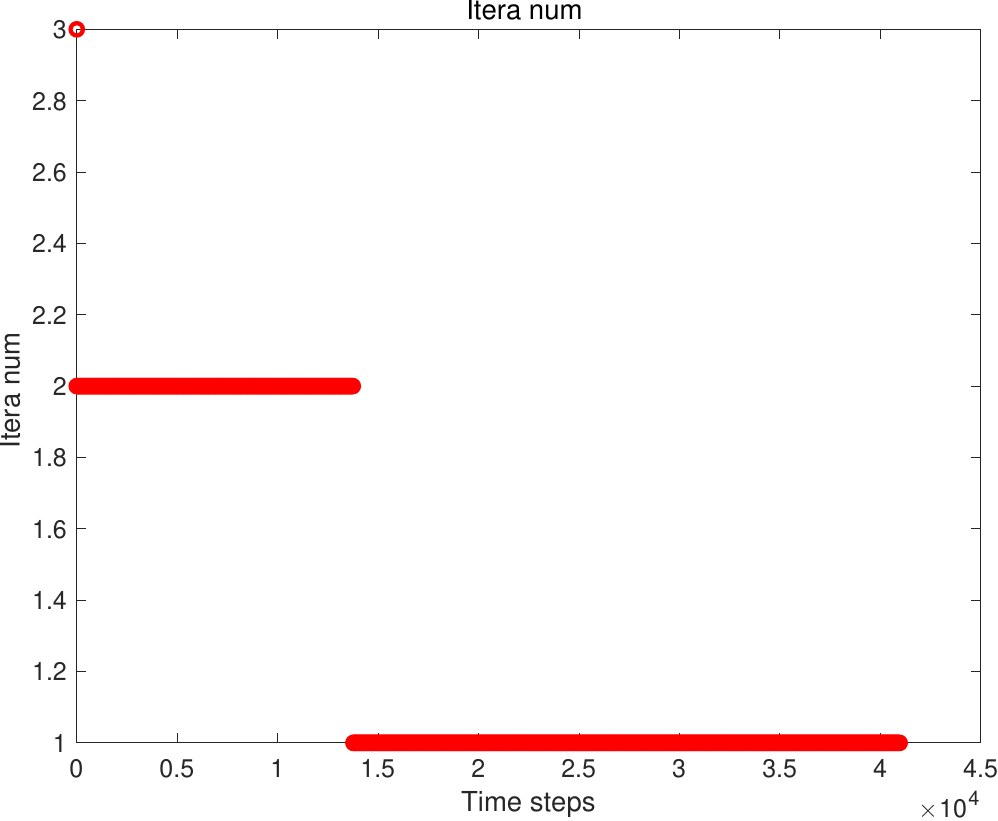}
		\caption{$iterations$}
		\label{fig:iter}
	\end{subfigure}
	\caption{\small Evolution of ion entropy (a), electron entropy (b), and iteration numbers (c).}
	\label{fig:entropy}
\end{figure}
The iterative solver stops when the relative 2-norm variation of all variables falls below $10^{-6}$. 
Figure~\ref{fig:iter} demonstrates rapid convergence behavior at each time step, which supports the computational efficiency of the method for long-term simulations.

\subsection{Example 3: Magnetic reconnection}
This example validates the capability of the proposed two-fluid MHD model to capture key physical mechanisms through numerical simulation of the classical magnetic reconnection problem. 
Magnetic reconnection, a fundamental nonlinear process in plasma physics, widely occurs in astrophysical (e.g., solar flares, geomagnetic storms) and laboratory plasmas. 
It is characterized by magnetic topology rearrangement, rapid magnetic energy release, and current sheet formation and dissipation.

The simulation adopts a 2D configuration following standard setups in magnetic reconnection studies~\cite{ZLi2019,Jardin2007,Cassak2007}. 
The initial magnetic field is an anti-parallel (Harris sheet-type) configuration with equilibrium background. 
A small perturbation is introduced to trigger instability and initiate reconnection. 
The initial equilibrium conditions are
$$
\left\{
\begin{array}{l}
	n_0(x, y)=\operatorname{sech}^2(2y)+0.2, \\
	T_{i0}(x, y)=0.5,\quad T_{e0}(x, y)=0.5, \\
	\mathbf{B}_0(x, y)=
	\begin{bmatrix}
		\tanh(2y) \\[2pt] 0
	\end{bmatrix}, \\
	\mathbf{u}_0(x, y)=\mathbf{0},\quad E_0(x, y)=0.
\end{array}
\right.
$$
At $t=0$, a small perturbation is imposed on $\mathbf{B}$:
$$\boldsymbol{\delta} = \varepsilon
\begin{bmatrix}
	-k_y \cos(k_x x)\sin(k_y y) \\[2pt]
	k_x \sin(k_x x)\cos(k_y y)
\end{bmatrix}.$$
The computational domain is the rectangular region $[-L_x/2, L_x/2] \times [-L_y/2, L_y/2]$, with periodic boundary conditions in the $x$-direction and ideal conducting walls at $y = \pm L_y/2$. The parameters are specified as follows: $k_x = 2\pi/L_x$, $k_y = \pi/L_y$, $L_x = 12.8$, $L_y = 6.4$, and a perturbation amplitude of $\varepsilon = 0.1$. The physical dissipation coefficients are set to $\xi_R = 0.004$ for resistivity, $\xi = \eta = 0.0025$ for viscosity, and $\kappa_i = \kappa_e = 0.02$ for thermal conductivity. Hyper-resistivity is included via $\xi_e = 0.1h^2$, where $h$ denotes the characteristic mesh size. The spatial discretization employs a uniform triangular mesh with $65 \times 33$ nodes, yielding $h = L_y/32 = 0.2$, and the time step is $\Delta t = h^2/2 = 0.02$.

\begin{figure}[htbp]
	\centering
	\includegraphics[width=0.38\textwidth]{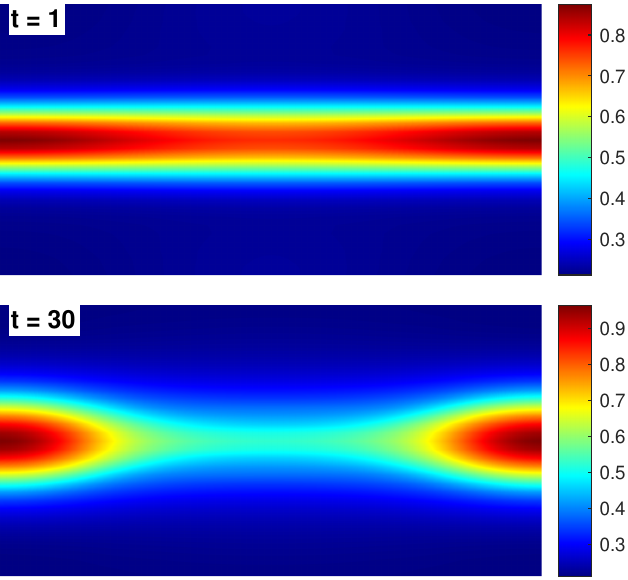}
	\includegraphics[width=0.38\textwidth]{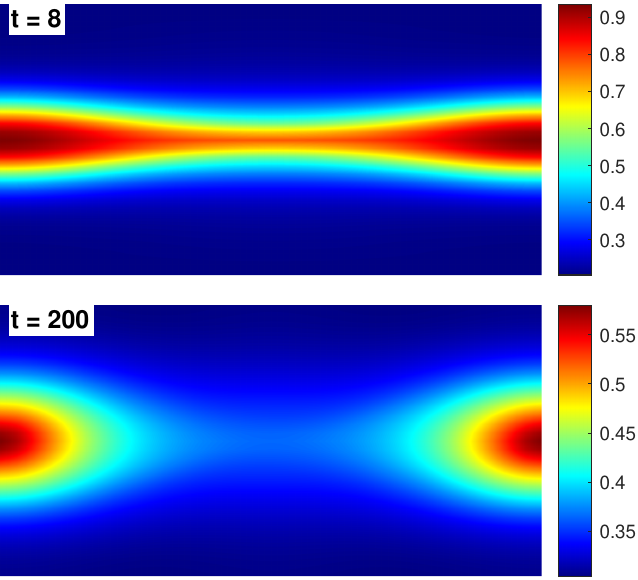}
	\caption{Plasma number density $n$ at selected times.} 
	\label{fig:n_trisurf}
\end{figure}

Figure~\ref{fig:n_trisurf} illustrates the evolution of the plasma number density from a near-uniform, central concentration. Following the onset of magnetic reconnection and associated energy release, the plasma is displaced laterally, leading to a significant decrease in central density by $t = 30 $ and the emergence of two high-density jets at the edges. By 
$t = 200$, a clear rarefied band forms centrally, with dense structures remaining on both sides, thereby capturing the magnetic-reconnection-driven acceleration and ejection of plasma from the central region.

\begin{figure}[htbp]
	\centering
	\includegraphics[width=0.32\textwidth]{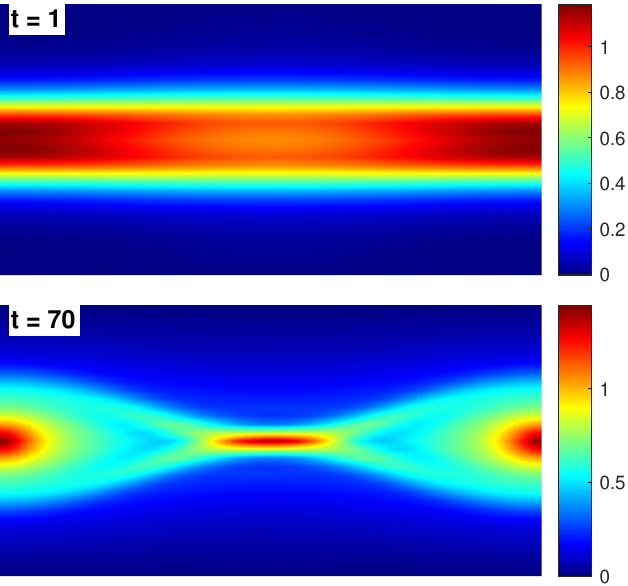}
	\includegraphics[width=0.32\textwidth]{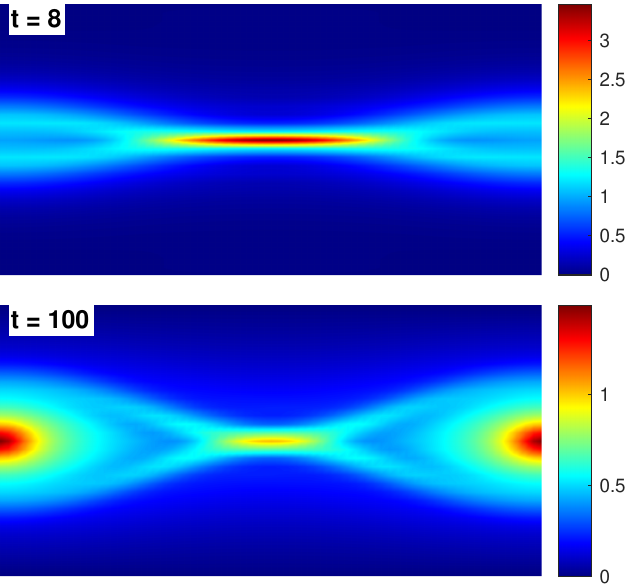}
	\includegraphics[width=0.32\textwidth]{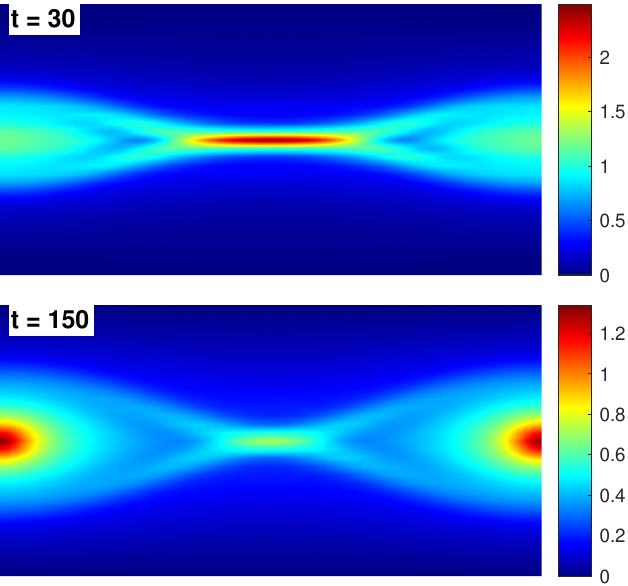}
	\caption{Current density $J$ at selected times.} 
	\label{fig:J_trisurf}
\end{figure}

The evolution of the current sheet, depicted in Figure \ref{fig:J_trisurf} via $-J$ contours, captures the key stages of magnetic reconnection. The initial central concentration ($t=1$) undergoes thinning and intensification, followed by dissipation and diffusion ($t=70 \sim 150$) that annihilates the central sheet and broadens the current distribution. This clear transition from a singular sheet to bifurcated structures directly demonstrates the breaking and reconnection of magnetic field lines.
\begin{figure}[htbp]
	\centering
	\includegraphics[width=0.98\textwidth]{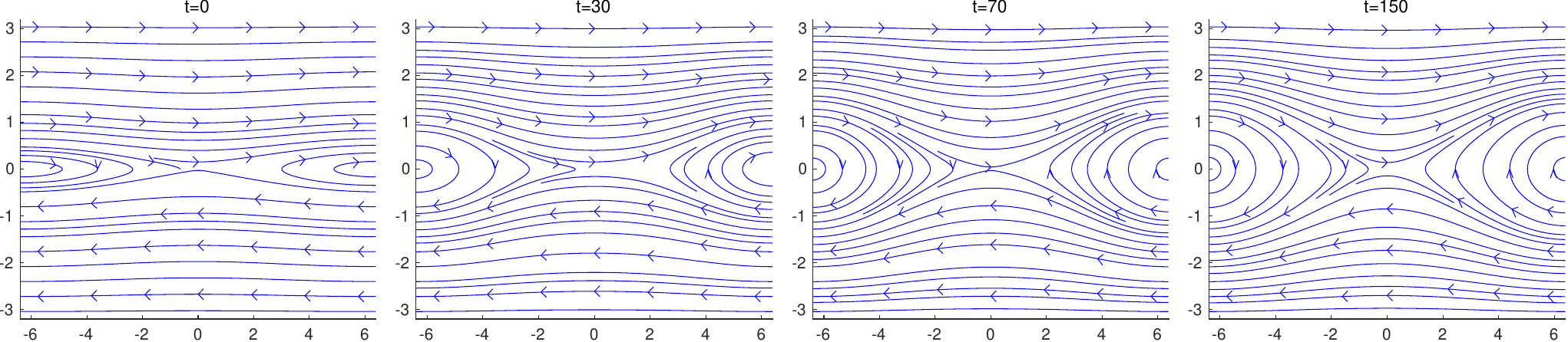}
	\caption{\small Evolution of magnetic field streamlines.}
	\label{fig:B_stream}
\end{figure}

Figure~\ref{fig:B_stream} depicts the temporal evolution of magnetic field streamlines. As the system evolves, the initial disturbance amplifies, leading to a pronounced vertical bending of the field lines by $t = 30$. Subsequently, a well-defined X-point emerges by $t = 70$, signalling the breaking and reconnection of magnetic flux at this location. Ultimately, by 
$t =150$, the topology transitions into two symmetric closed magnetic islands (O-points), marking the evolution from open to closed field line configurations accompanied by a corresponding redistribution of magnetic energy. This sequence of X–O topological evolution is characteristic of classical magnetic reconnection.

\begin{figure}[htbp]
	\centering
	\includegraphics[width=0.98\textwidth]{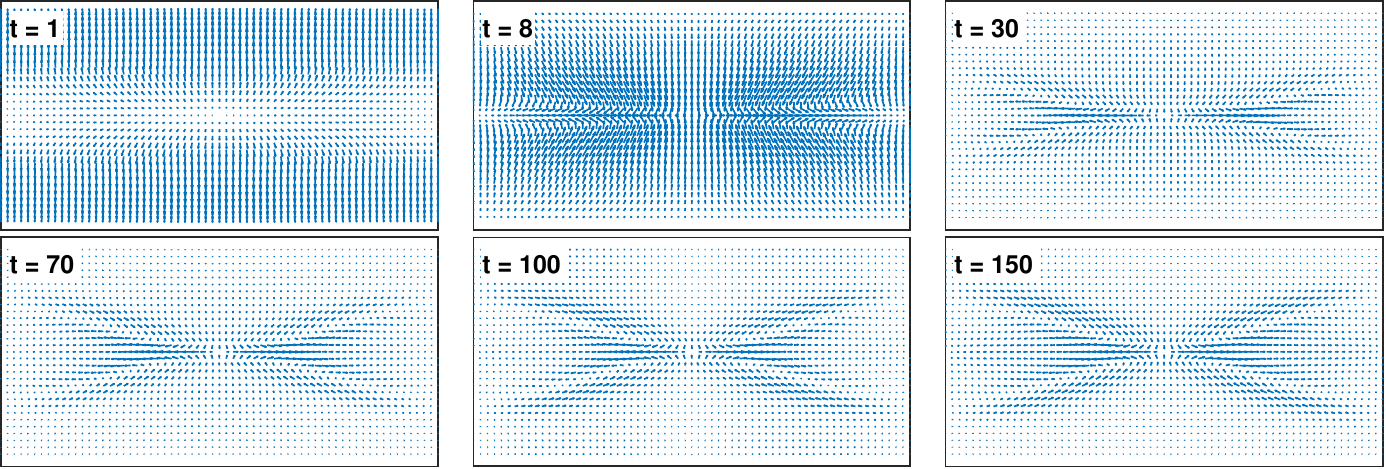}
	\caption{Ion velocity over time.}
	\label{fig:u_quiver}
\end{figure}

The evolution of the ion velocity field in Figure~\ref{fig:u_quiver} captures the outflow acceleration intrinsic to magnetic reconnection. The emergence of bidirectional jets during the peak reconnection phase 
($t = 30 \sim 100$) is a hallmark signature, demonstrating direct energy transfer from the magnetic field to the plasma. The subsequent decay of these flows coincides with the saturation of the reconnection process, completing the energy conversion cycle.

\begin{figure}[htbp]
	\centering
	\includegraphics[width=0.32\textwidth]{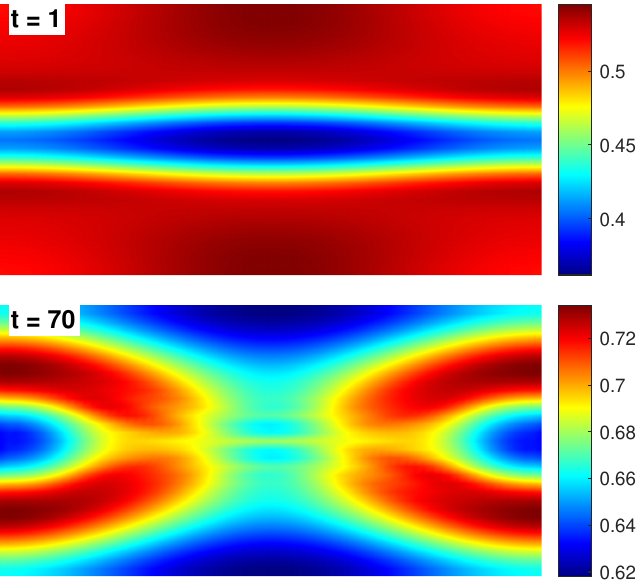}
	\includegraphics[width=0.32\textwidth]{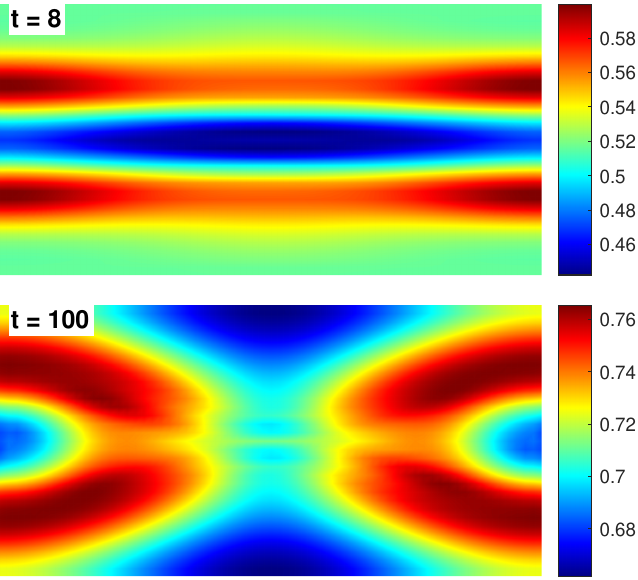}
	\includegraphics[width=0.32\textwidth]{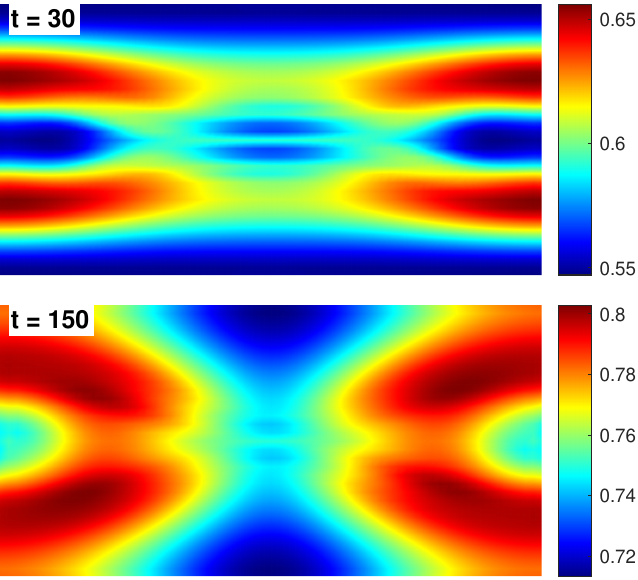}
	\caption{Temperature of ions $T_i$ at selected times.} 
	\label{fig:Ti_trisurf}
\end{figure}

\begin{figure}[htbp]
	\centering
	\includegraphics[width=0.32\textwidth]{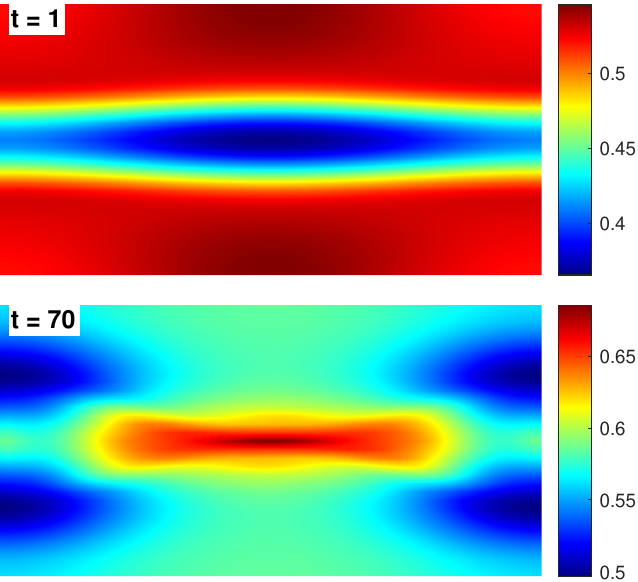}
	\includegraphics[width=0.32\textwidth]{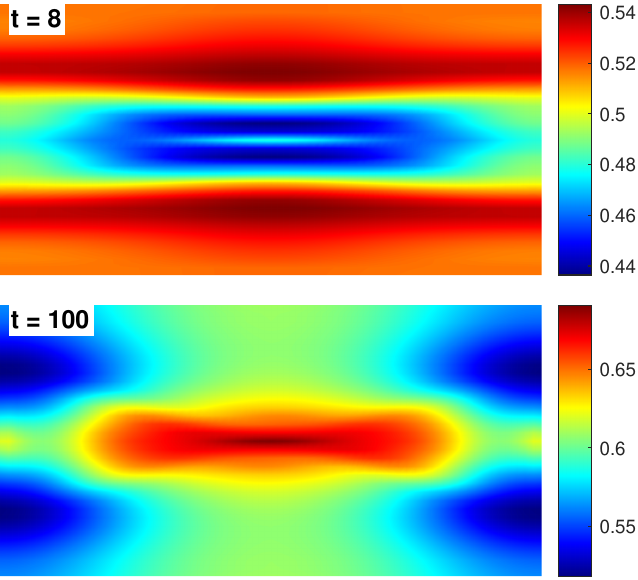}
	\includegraphics[width=0.32\textwidth]{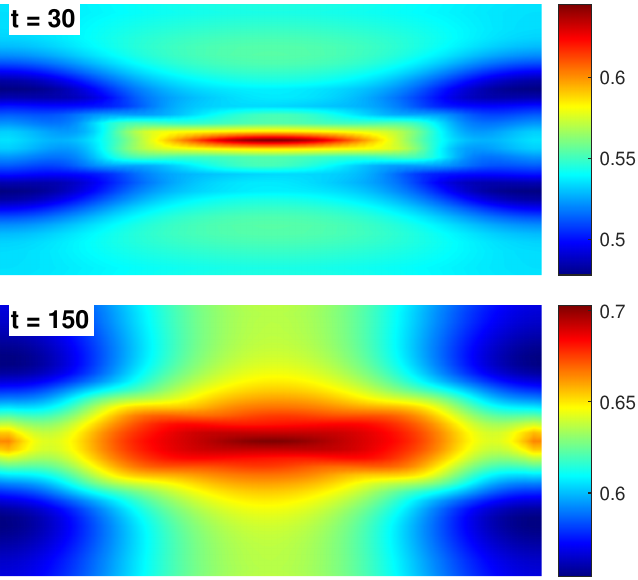}
	\caption{Temperature of electrons $T_e$ at selected times.} 
	\label{fig:Te_trisurf}
\end{figure}
Temperature evolution further supports the magnetic energy release and conversion mechanisms. 
Figures~\ref{fig:Ti_trisurf}--\ref{fig:Te_trisurf} show the ion and electron temperatures, respectively. 
Following reconnection onset ($t>30$), the electron temperature rises rapidly near the reconnection site, forming localized high-temperature regions where magnetic energy is converted into heat. 
In the later stage, ion temperature continues to rise, with a symmetrically distributed high-temperature region appearing near the X-shaped structure where the velocity field converges—demonstrating localized heating from fluid compression.
Throughout the reconnection process, electron temperature responds rapidly and locally, while ion temperature responds more slowly and broadly—exhibiting a distinct two-fluid characteristic.

The conservation of total energy and the production of entropy are also assessed. As shown in Figures~\ref{fig:real_energy}--\ref{fig:real_entropy}, the total energy remains constant, whereas the entropies of both ions and electrons exhibit a monotonic increase and converge rapidly.

\begin{figure}[htbp]
	\centering
	\includegraphics[width=0.4\textwidth]{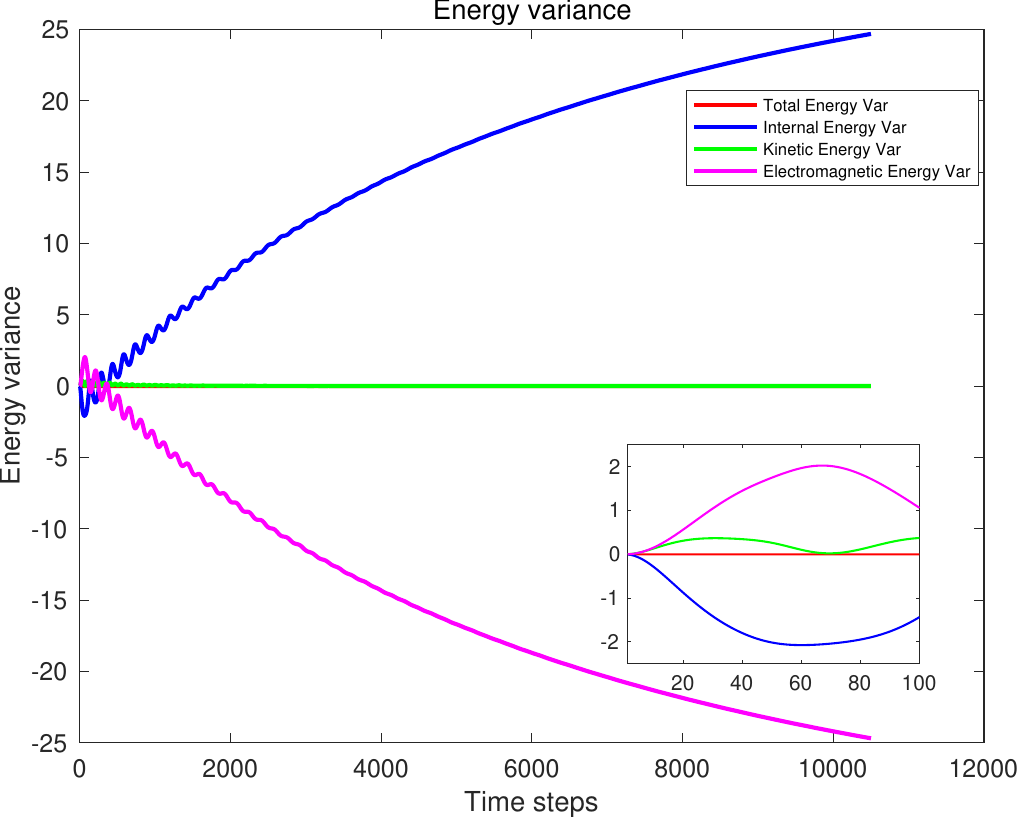}
	\hspace{5mm}
	\includegraphics[width=0.4\textwidth]{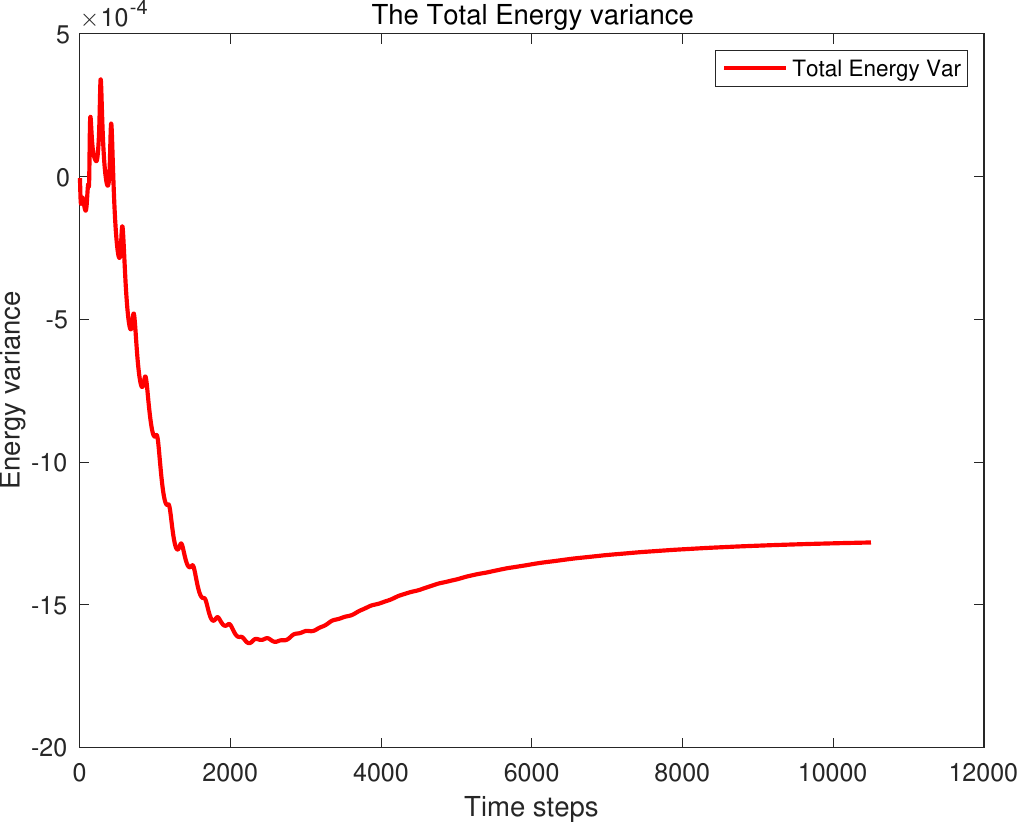}
	\caption{Energy variance over time.} 
	\label{fig:real_energy}
\end{figure}

\begin{figure}[htbp]
	\centering
	\begin{subfigure}{0.3\textwidth}
		\centering
		\includegraphics[width=\textwidth]{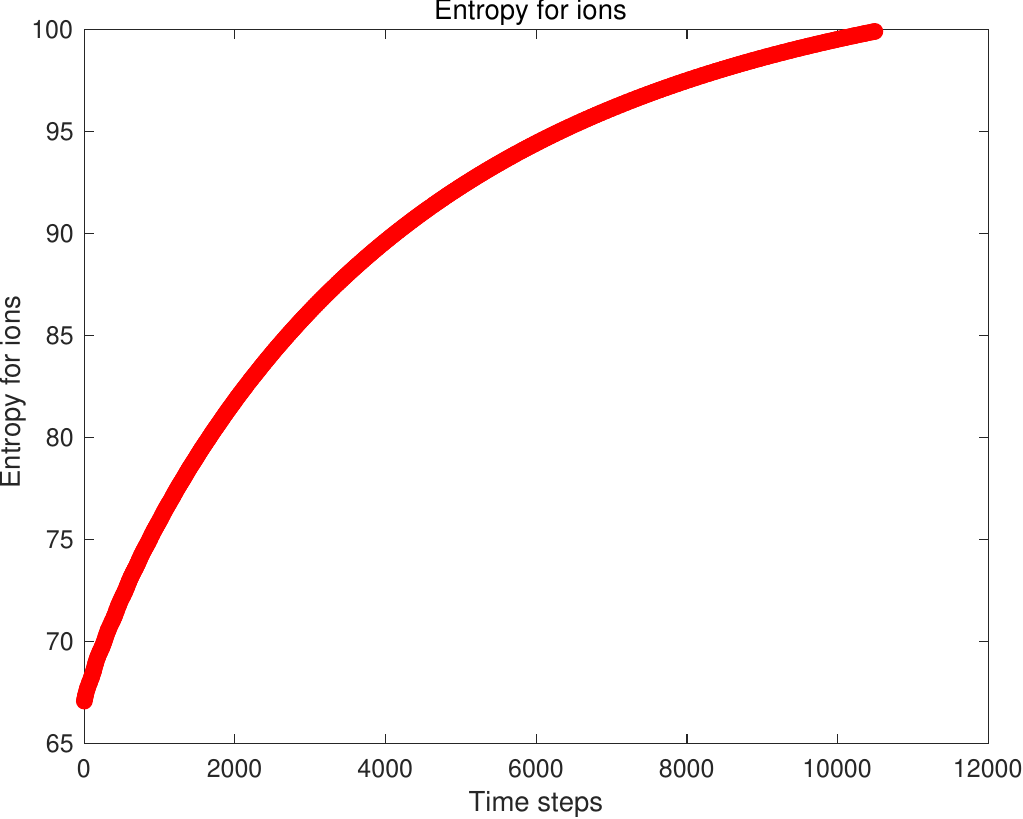}
		\caption{$\mathcal{S}_i$}
		\label{fig:real_entropy_i}
	\end{subfigure}
	\begin{subfigure}{0.3\textwidth}
		\centering
		\includegraphics[width=\textwidth]{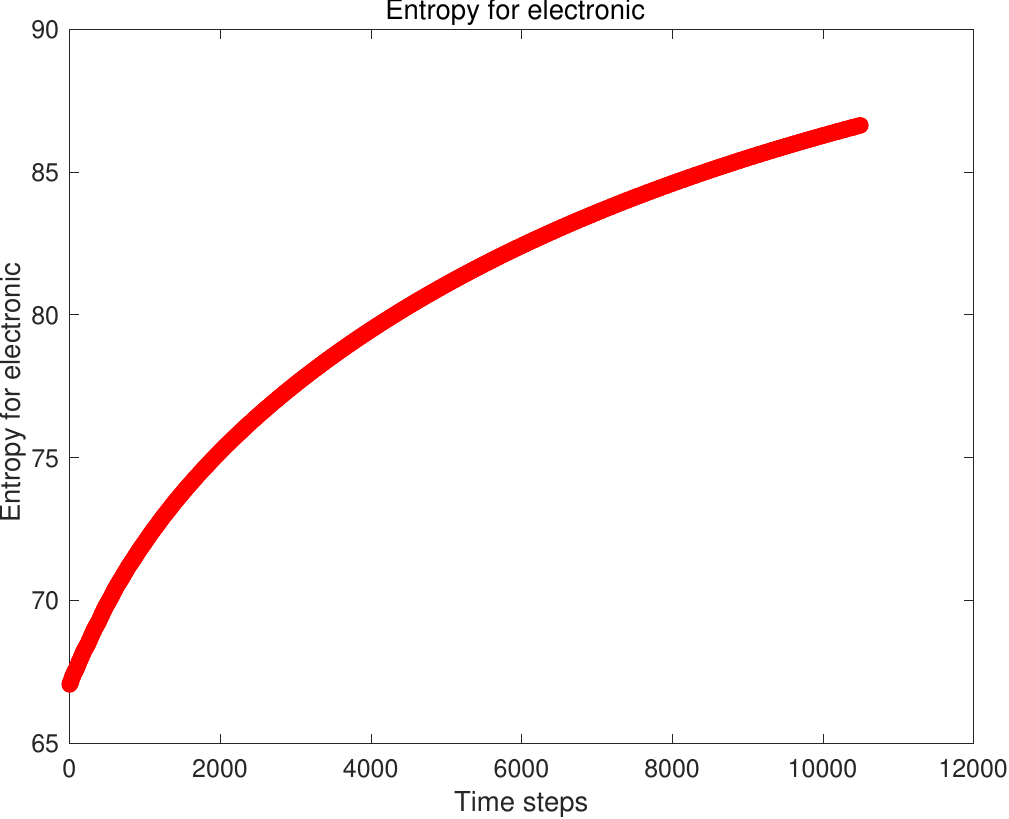}
		\caption{$\mathcal{S}_e$}
		\label{fig:real_entropy_e}
	\end{subfigure}
	\begin{subfigure}{0.3\textwidth}
		\centering
		\includegraphics[width=\textwidth]{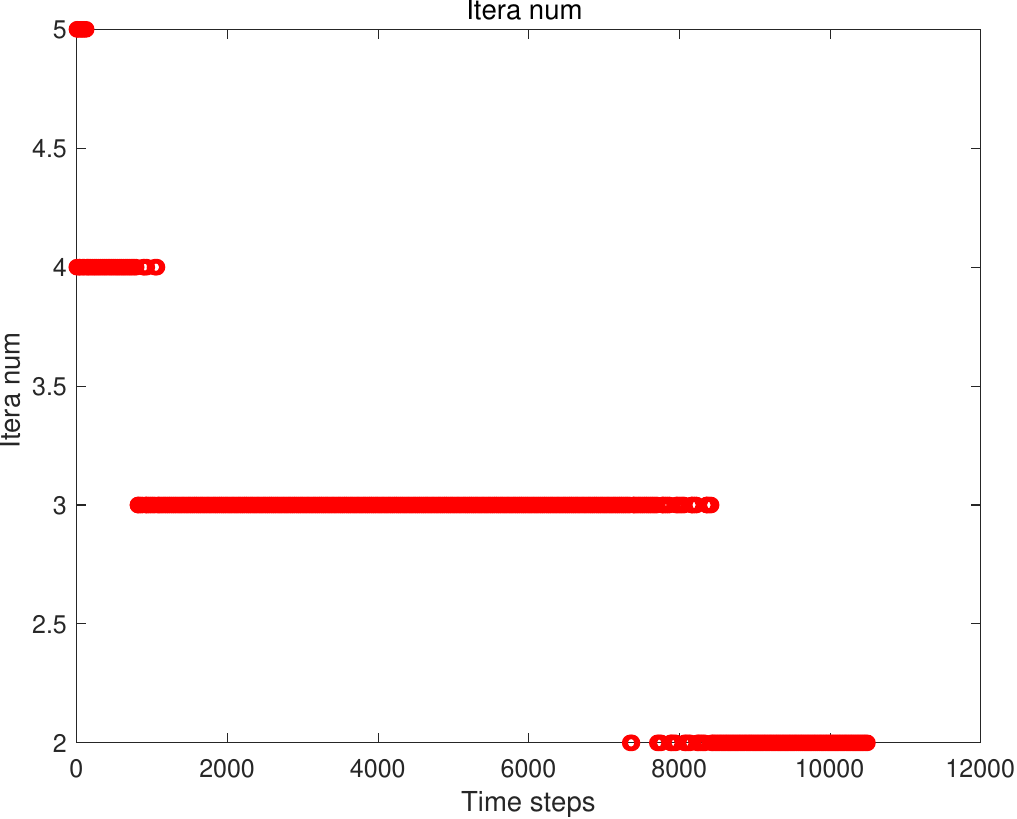}
		\caption{$iterations$}
		\label{fig:real_iter}
	\end{subfigure}
	\caption{\small Evolution of ion entropy (a), electron entropy (b), and iteration numbers (c).}
	\label{fig:real_entropy}
\end{figure}

\section{Conclusion}
\label{sec:conclusions}

This study establishes a thermodynamically consistent two-fluid magnetohydrodynamic model (2F-MHD) based on the Helmholtz free energy framework. By introducing the convex–concave design of the free energy density $f_\alpha(n,T_\alpha)$, which is convex with respect to the density $n$ and concave with respect to the temperature $T_\alpha$, the proposed model self-consistently derives fundamental thermodynamic quantities such as the chemical potential, entropy density, and internal energy. The resulting formulation rigorously satisfies the first and second laws of thermodynamics at the continuous and time semi-discrete levels.

Building on this theoretical foundation, we develop a finite element discretization for the degenerate two-dimensional system and provide a priori error estimates in both space and time. Numerical experiments demonstrate that the model captures essential plasma dynamics, including current sheet thinning, energy concentration, and magnetic reconnection, in agreement with physically observed phenomena in magnetic confinement and astrophysical plasmas.

Furthermore, this study adapts the thermodynamically consistent modeling para-digm from phase-field and continuum systems to the two-fluid MHD framework, providing a systematic route for developing physically consistent numerical methods for plasma dynamics. Future research will focus on extending this approach to 2.5D and full 3D configurations to further assess thermodynamic consistency and investigate fine-scale couplings between electromagnetic and thermal processes.

\bibliographystyle{siamplain}
\bibliography{MHD2F_ref}

\end{document}